\newcommand{\op}[1]{\operatorname{#1}}
\newcommand{\din}{\delta_{\op{in}}}
\newcommand{\dout}{\delta_{\op{out}}}
\newcommand{\dfree}{\delta_{\op{free}}}
\newcommand{\ftwo}{{\mathbb{F}_2}}
\newcommand{\C}{\mathbb{C}}
\newcommand{\im}{\op{im}}
\newcommand{\gate}{\underline{\op{C}}}
\newcommand{\gateZ}{\underline{\op{C}}}
\newcommand{\Zn}[1]{\mathbb{Z}/#1}
\newtheorem{theorem}{Theorem}[section]
\newtheorem{lemma}{Lemma}[section]
\newtheorem{definition}{Definition}[section]
\newtheorem{proposition}{Proposition}[section]
\theoremstyle{remark} 
\newtheorem{remark}{Remark}[section]
\newtheorem{example}{Example}[section]
\newcommand{\mathcolorbox}[1]{\colorbox{yellow}{$\displaystyle #1$}}
 \renewcommand{\mathcolorbox}[1]{#1}
\renewcommand{\hl}[1]{#1}
\begin{document}
\title{Cups and Gates I: Cohomology invariants and logical~quantum operations}
 \author[1]{\fnm{Nikolas P.} \sur{Breuckmann}}

 \author[2,3,4]{\fnm{Margarita} \sur{Davydova}}

\author[5]{\fnm{Jens N.} \sur{Eberhardt}}

\author[2]{\fnm{Nathanan} \sur{Tantivasadakarn}}

\affil[1]{\small \orgdiv{School of Mathematics}, \orgname{University of Bristol}, \orgaddress{\street{Fry Building, Woodland Road}, \city{Bristol}, \postcode{BS8~1UG}, \country{United Kingdom}}}

 \affil[2]{\small \orgdiv{ Walter Burke Institute for Theoretical Physics and Department of Physics}, \orgname{California Institute
of Technology}, \orgaddress{\small \street{1200 E California Blvd}, \city{Pasadena}, \postcode{91125}, \state{CA}, \country{USA}}}

 \affil[3]{\small \orgdiv{Institute for Quantum Information and Matter}, \orgname{California Institute
of Technology}, \orgaddress{\street{1200 E California Blvd}, \city{Pasadena}, \postcode{91125}, \state{CA}, \country{USA}}}

\affil[4]{\small Department of Physics, Massachusetts Institute of Technology, Cambridge, 02139, MA, USA}

 \affil[5]{\small \orgdiv{Institut für Mathematik}, \orgname{Johannes Gutenberg University of Mainz}, \orgaddress{\street{Staudingerweg 9}, \city{Mainz}, \postcode{55128}, \country{Germany}}}

\abstract{
We take initial steps towards a general framework for constructing logical gates in general quantum CSS codes.
Viewing CSS codes as cochain complexes, we observe that \emph{cohomology invariants} naturally give rise to diagonal logical gates.
We show that such invariants exist if the quantum code has a structure that relaxes certain properties of a differential graded algebra.
We show how to equip quantum codes with such a structure by defining \emph{cup products} on CSS codes. 
The logical gates obtained from this approach can be implemented by a constant-depth unitary circuit.
In particular, we construct a $\Lambda$-fold cup product that can produce a logical operator in the $\Lambda$-th level of the Clifford hierarchy on $\Lambda$ copies of the same quantum code, which we call the \emph{copy-cup gate}.
For any desired $\Lambda$, we can construct several families of quantum codes that support gates in the $\Lambda$-th level with various asymptotic code parameters.

}

\maketitle

\tableofcontents

\section{Introduction}

The existence of a quantum memory does not fully solve the problem of fault-tolerant quantum computation. 
It is equally essential to develop efficient ways for realizing a universal set of logical gates in a fault-tolerant manner.

With the advent of new classes of quantum low-density parity check (qLDPC) codes with superior parameters, finding fault-tolerant logical gates in these codes is an important and open challenge, and most methods currently used to construct such gates have been rather ad-hoc.

Logical gates that can be implemented by constant-depth local circuits are naturally fault-tolerant because the error spread is bounded. In fact, they present arguably the most efficient way to do fault-tolerant gates. 
A prime example of codes that have such gates are the color codes, which have been proposed on triangulations on manifolds with a particular structure~\cite{BombínCC,BombínGaugeCC,BeverlandKubicaSimplified,VasmerBrown3D}. 
For codes defined on manifolds, the level in the Clifford hierarchy for gates implementable by a constant-depth geometrically-local circuits is known to be limited by the dimensionality of the manifold via the Bravyi-K\"onig theorem~\cite{Bravyi_2013}.\footnote{While there has been some progress for more general qLDPC codes~\cite{burton2020limitationstransversalgateshypergraph}, an analogous version of the Bravyi-K\"onig theorem is not known.} 
A number of techniques have been put forward for fault tolerant non-constant depth unitary gates, most of which yield higher spacetime overhead. 
This includes code switching~\cite{Anderson_2014,Bombin2016Apr}, lattice surgery~\cite{HorsmanSurgery,CohenSurgery,CowtanSurgery}, pieceable circuits~\cite{YoderPartitioning} magic state injection~\cite{knill2004faulttolerantpostselectedquantumcomputation}, as well as numerical/computational search~\cite{WebsterSearch}.

For high-rate and high-distance qLDPC codes~\cite{qldpc_review}, the issue of finding low-overhead fault-tolerant gate implementations is even more challenging, as very few of the existing techniques can be readily applied. Some progress has been made in this direction recently, including non-manifold generalizations of color codes~\cite{PinCodes,RainbowCodes}, using automorphisms of the code to realize certain classes of constant-depth unitary gates \cite{Breuckmann_2024,Quintavalle_2023}, and adaptations of lattice surgery methods~\cite{CohenSurgery,CowtanSurgery}.
All these methods have drawbacks, such as only applying to specific codes and in restricted settings, giving restricted gate sets, requiring large spacetime overhead, or admitting case-by-case fault-tolerance proofs.

In this paper, we make the first steps toward a general framework for constructing diagonal logical gates in CSS quantum codes.
If the quantum code is LDPC, our construction can be used to design gates implementable using constant-depth local circuits.
Our framework leverages the close connection between CSS quantum codes and homological algebra.
In particular, \emph{cohomology invariants} naturally give rise to diagonal logical gates.
This can be seen as follows.

In the homological picture, a CSS code corresponds to a chain (cochain) complex. The code states correspond to the \emph{cohomology classes} $[\gamma_i]$, $i = 1,..,k$. Colloquially, these can be understood as classes of nontrivial logical operators up to deformations by stabilizers. 
Here, $\gamma_i$ is a representative \emph{cocycle} whose non-zero entries denote the locations of the Pauli-$X$ operators that together form the corresponding logical $X$ operator. We will denote the result of applying the all zero state $\ket{0}^{\otimes N}$ with the logical X-operator as $\ket{\gamma_i}$. 
The code states formed by starting with $\ket{\gamma_i}$ and further symmetrizing over all possible $X$-stabilizers (equivalently, the equal-weight superposition of $\ket{\gamma_i + b}$ over all possible coboundaries $b$) is labeled as  $\ket{[\gamma_i]}$. 

We then consider an operation $\psi$ mapping to $\mathbb{R} / \mathbb{Z} $ with a property
\begin{equation}
    \psi(\gamma') = \psi(\gamma) \equiv \psi([\gamma_i])  \ \  \text{ if } \gamma, \gamma' \in [\gamma].
\end{equation}
Operations with this property are cohomology invariants, i.e.\ their action depends only on the cohomology class. Because of the one-to-one correspondence between the cohomology classes and the codestates, such an action corresponds to a diagonal logical gate, as it applies a different phase to each logical state.  

We will restrict ourselves to only such operations $\psi$ that, on top of giving a cohomology invariant, are also defined on arbitrary states and not only on ones corresponding to cocycles.  Such operations can be realized by a circuit specified by its action on a state $\ket{c}$ in the computational basis:
\begin{equation}
    U_\psi \ket{c} = e^{2 \pi i \psi(c)}\ket{c}.
\end{equation}
Because each codestate $\ket{[\gamma_i]}$ is a superposition of states labeled by cocycles $\ket{\gamma_i'}$ with $\gamma_i' \in [\gamma_i]$ and $\psi(\gamma_i') = \psi([\gamma_i])$, the unitary $U_\psi$ acts with a well-defined phase $e^{2 \pi i \psi([\gamma_i])}$ on all such states in the superposition. Thus, it indeed preserves the codestate and has a diagonal logical action. 
In this paper, we show how to construct such operations at the logical level under certain conditions on the code and how to find appropriate constant-depth circuits to implement them.

Homological algebra and cohomology invariants in particular, are a very well-developed and rich subject in mathematics that we wish to use for constructing such operations.
We want to emphasize that our approach evades the issue of being ad hoc, as the cohomological view is general, structured, and classifiable.

The cohomological approach has another advantage.
At a high level, cohomology provides a connection between global features and local connectivity.
As we will make precise later, in our setting, `global features' correspond to the logical level, whereas `local connectivity' corresponds to the connectivity of circuits and checks.
Cohomology hence provides us with a recipe that allows us to extract the physical circuit from the logical representation.
However, all calculations can in principle be done at the logical level only, making them much simpler.
The physical circuits that we obtain can become quite non-intuitive and it is doubtful that they could have been found through human intuition or brute-force computer search, thus highlighting the utility of a more structured perspective over brute-force methods.

Our main tool for constructing cohomology operations $\psi$ is the \emph{cup product} $\cup$ and a function $\int$ called the \emph{integral} that assigns to a cohomology class a number in $\Zn{m}$.\footnote{Readers familiar with calculus on manifolds and de Rham cohomology may think of the cup product as a discrete analog of the wedge product, which we can use to construct `forms' over which we integrate, giving us topological invariants.}
We take this geometric idea and propose a generalization of it in order to define various cohomology invariants on codes.\footnote{We may glibly point out that geometric ideas have a long history when it comes to defining quantum gates: evaluating the Jones polynomial and Turaev--Viro invariants are deeply connected to gate implementations in topological codes and fault-tolerant gates have a natural interpretation in terms of fiber bundles \cite{gottesman2013fibre}.}
In particular, we show that there exist well-defined notions of cup products and integrals for classical codes, as well as the corresponding quantum codes obtained via both tensor products (also called `hypergraph products' in the context of quantum codes~\cite{Tillich_2014}) and balanced products of classical codes, if certain assumptions are fulfilled.
We note that balanced product codes include several interesting code families, such as all bivariate bicycle codes and good qLDPC codes.

In this paper we give a first example of a construction of families of codes with well-controlled cohomology invariants.
This construction involves some number, denoted by an integer $\Lambda$, of copies of the same quantum code. 
We formulate the conditions of the existence of a $\Lambda$-fold cup product in a general CSS quantum code that results in diagonal logical gates that we call \emph{copy-cup gates}. We show that these gates can lie at the $\Lambda$-th level of the Clifford hierarchy.

\subsection{Overview of the main results}

A CSS quantum code $\mathcal C$ can be identified with a three-term cochain complex $C(\Zn{2})$
\[\begin{tikzcd}
	{C^{0}} & {C^1} & {C^{2}} 
	\arrow[from=1-1, to=1-2]
	\arrow["{\delta^{0}}", from=1-1, to=1-2]
	\arrow["{\delta^1}", from=1-2, to=1-3]
\end{tikzcd}\]
where we place physical qubits in correspondence with the basis of $C^1$, and $X$($Z$) checks in correspondence with the bases of $C^0$ ($C^2$). 
The linear maps $\delta^i$ are the coboundary operations setting relations between qubits and checks. 
The logical states of the code are labeled by the elements of the cohomology class $[\gamma] \in H^1$.

Consider a multilinear operation $\Psi$ that is defined on cochains belonging to $\Lambda$ copies of the same quantum code $\mathcal C$:
$$\Psi: C^{1}(\Zn{2})\times \dots \times C^{1}(\Zn{2})\to \Zn{m}.$$
We are interested in operations that also correspond to cohomology invariants, i.e. which induce the action
$$\Psi: H^{1}(\Zn{2})\times \dots \times H^{1}(\Zn{2})\to \Zn{m}.$$
Then, we may apply $\Psi$ to $\Lambda$ copies of the same cell complex to construct a transversal gate. Namely, $\Psi(\gamma'_1 ,..., \gamma'_\Lambda) = \Psi (\gamma_1, ... \gamma_\Lambda)$ for $\gamma'_i \in [\gamma_i]$.  As we discussed above, this allows us to put such an operation in correspondence with a diagonal logical action. 

In the paper, we lay some of the general groundwork for this approach. However, the most natural example of a multilinear operation on $\Lambda$ arguments that is a cohomology invariant that is also defined at a cochain level is a cup product. For this operation, we assume $m = 2$ and  formulate the following explicit result: 

\setcounter{section}{6}
\begin{theorem}[Informal; copy-cup gates]
Suppose a CSS quantum code $\mathcal C$ has a $\Lambda$-fold cup product operation:
$$\Psi_{\op{C} \cup}(c_1,\dots,c_\Lambda)= \int c_1\cup \dots \cup c_\Lambda$$
whose action is nontrivial. 
Then the following statements hold:
\begin{enumerate}
    \item The unitary
    $$U_{\op{C}\!\cup} = \sum_{\{c_i\}}(-1)^{\int\!c_1\cup\dots\cup c_\Lambda}\ket{c_1,\dots,c_d}\bra{c_1,\dots,c_\Lambda}$$
    preserves the codespace and has diagonal logical action at $\Lambda$-th level  of the Clifford hierarchy.
    \item The above unitary can be realized by the constant-depth circuit  $$\gateZ_{\op{C}\!\cup}=\prod_{\{x_i\}}(\op{C}^{\Lambda-1}\!Z_{x_1,\dots,x_\Lambda})^{\int x_1\cup \dots \cup x_\Lambda}.$$ 
    which involves the gates at $\Lambda$-th level of the Clifford hierarchy.
\end{enumerate}
\end{theorem}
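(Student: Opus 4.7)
The plan is to verify both claims by working in the computational basis, combining the multilinearity of the cup product with the cohomology invariance of the integral. First, I would check that $U_{\op{C}\!\cup}$ and $\gateZ_{\op{C}\!\cup}$ agree on every computational basis state. Each gate $\op{C}^{\Lambda-1}\!Z_{x_1,\dots,x_\Lambda}$ contributes a factor of $-1$ to $\ket{c_1,\dots,c_\Lambda}$ precisely when $(c_1)_{x_1}\cdots(c_\Lambda)_{x_\Lambda}=1$, so the total phase produced by the product is
\[
(-1)^{\sum_{x_1,\dots,x_\Lambda}(c_1)_{x_1}\cdots(c_\Lambda)_{x_\Lambda}\int x_1\cup\dots\cup x_\Lambda}.
\]
Expanding $\int c_1\cup\dots\cup c_\Lambda$ multilinearly over the basis and reducing modulo $2$ shows this is exactly the phase appearing in the definition of $U_{\op{C}\!\cup}$, so the two operators coincide. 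Codespace preservation then follows by writing a joint codestate as $\sum_{b_i}\ket{\gamma_1+b_1,\dots,\gamma_\Lambda+b_\Lambda}$ with fixed cocycle representatives $\gamma_i$ and coboundaries $b_i$: cohomology invariance of $\Psi_{\op{C}\!\cup}$ on cocycle arguments guarantees that every summand picks up the same phase $(-1)^{\int\gamma_1\cup\dots\cup\gamma_\Lambda}$, so the superposition is mapped to itself and the induced logical action is diagonal in the logical basis.

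Next, I would address the Clifford hierarchy level on both the physical and logical sides. Each $\op{C}^{\Lambda-1}\!Z$ is a standard member of the $\Lambda$-th level by the recursive characterization of the hierarchy, and since $\gateZ_{\op{C}\!\cup}$ is a product of commuting diagonal gates at that level, so is the full physical circuit. The same $\pm 1$ phase structure places the induced logical gate at level $\Lambda$; the nontriviality hypothesis on $\Psi_{\op{C}\!\cup}$ then rules out collapse to a strictly lower level, since a diagonal logical unitary whose phase function is a genuine $\Lambda$-linear cohomology invariant cannot be generated by $(\Lambda-1)$-linear phase functions alone. Constant depth finally follows from locality of the cup product: under LDPC assumptions each qubit participates in only a bounded number of nonzero tuples $\int x_1\cup\dots\cup x_\Lambda$, so the underlying gate hypergraph has bounded degree and admits a constant-depth coloring schedule.

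The principal obstacle I anticipate is the logical-level hierarchy claim, where one must promote the abstract nontriviality of $\Psi_{\op{C}\!\cup}$ into the concrete statement that the induced gate is not accidentally at level $\Lambda-1$ or lower; this is essentially a question about whether the $\Lambda$-linear cohomology class represented by the cup product is a genuine obstruction and not a coboundary of a lower-degree operation. A secondary subtlety is verifying, for the specific code families of interest, that the cup-product formula is local enough for the bounded-degree argument underlying constant depth to apply; this is where the explicit constructions developed in the preceding sections feed into the abstract statement here.
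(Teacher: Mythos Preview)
Your proposal is correct and follows essentially the same route as the paper: multilinearity to identify $U_{\op{C}\!\cup}$ with $\gateZ_{\op{C}\!\cup}$, cohomology invariance on each summand $\ket{\gamma_1+b_1,\dots,\gamma_\Lambda+b_\Lambda}$ for codespace preservation, and a bounded-degree coloring argument for constant depth. The one place where the paper is slightly more concrete is the hierarchy-level claim: rather than arguing abstractly that a nontrivial $\Lambda$-linear phase cannot collapse, it first rewrites the logical action (via multilinearity on the logical basis) as $\prod_{[\gamma_i]\in\Gamma_i}(\overline{\op{C}^{\Lambda-1}Z}_{[\gamma_1],\dots,[\gamma_\Lambda]})^{\Psi(\gamma_1,\dots,\gamma_\Lambda)}$, so that nontriviality of $\Psi$ directly exhibits a logical $\overline{\op{C}^{\Lambda-1}Z}$ acting on $\Lambda$ distinct logical qubits in distinct copies, which manifestly cannot cancel against the others.
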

\setcounter{section}{1}

We also discuss conditions under which a CSS quantum code can have such an operation. 
The cup product for quantum codes obtained from tensor and balanced products of \emph{classical} codes can be induced by an associated cup product on the constituent classical codes. 
We derive the conditions for the cup product structure on the classical codes so that an associated cup product on the quantum code has desirable properties. 
In particular, a sufficient condition can be stated as follows: if the cup product in the classical code obeys the Leibniz rule up to integration, the integrated cup product in the quantum code is a cohomology invariant. 
We show that this condition can be conveniently summarized as a set of conditions on the parity check matrix of the code that bear some similarities to the triorthogonality condition~\cite{Bravyi_2012}.

Using these ideas, one can construct a number of code families where the copies of the same code admit transversal gates at $\Lambda$-th level of the Clifford hierarchy.  Moreover, if one finds codes with torsion homology groups, such an operation can give nontrivial logical action on a single code. 
We provide several examples of code families that have cup product operations. 
The simplest ones are on manifolds and include families of $\Lambda$-dimensional toric codes (with parameters $[[n, \Lambda^2, \Theta(\sqrt{n})]]_{n}$) and hyperbolic codes with \hl{polynomial rate} (for $\Lambda$ even, we consider $\biguplus_\Lambda H_g^{\Lambda/2}$ giving parameters \hl{$[[n = \Theta(\Lambda^2 g), \Theta(n^{\frac {2} {\Lambda}}), \Theta(\log^\alpha n)]]_{g}$}). 
In these families, $\Lambda$ copies of a respective code will have logical gates at $\Lambda$-th level of the Clifford hierarchy via the cup product operation. 
We also show that certain bivariate bicycle codes can satisfy this condition for a $\Lambda = 2$ cup product and conjecture that it is possible to have products of group algebra codes that have $\Lambda \geq 3$ cup products. Finally, if certain assumptions can be fulfilled, hypergraph and balanced products of Sipser-Spielman codes with constant rate and polynomial distance can also have $\Lambda$-fold cup products.  The condition that we formulate on the existence of a cup product cohomology operation in quantum codes is a convenient starting point for searching for more examples of qLDPC codes with corresponding gates.

\emph{Note added:}  we recently became aware of upcoming work \cite{lin2024transversal} where a general framework involving cup products is used to construct transversal non-Clifford gates on sheaf qLDPC codes; this formalism was also used in upcoming work \cite{golowich2024quantum}.

\subsection{Connection to earlier literature}

The use of cup products and other standard cohomological operations in constructing transversal codes has been explored to some degree in the context of topological codes. For example, explicit formulas are known to implement transversal $\op{CZ}$ gate in two copies of the 2D toric code \cite{Yoshida16,Potter17,Barkeshli23} and transversal $\op{CCZ}$ gate in three copies of the 3D toric code \cite{Chen2023higher,Barkeshli23,Wang2023} and the cup product was used to understand the logical action of the transversal $T$ gate in the 3D color code~\cite{Zhu2023}. \hl{The connection between the multiplicity of the cup product and the Clifford hierarchy has been pointed out in Refs.~\cite{Barkeshli23,Barkeshli24_1,Barkeshli24_2}. Ref.~\cite{Zhu2023} first constructed non-Clifford gates on high-rate qLDPC codes that are associated with cup products on hyperbolic manifolds  (this result was subsequently improved in \cite{RainbowCodes}).} The addressable transversal $\op{CZ}$ gate in multiple copies of the 3D toric code has also been explored using the slant product in \cite{Yoshida17,Barkeshli23}, where the gate can also be interpreted in terms of sweeping a codimension-2 symmetry defect.
Gates in the 3D fermionic toric code have similarly been explored in this context, \cite{WebsterBartlett18,Kobayashi23,Barkeshli24_2,Fidkowski24}, albeit containing non-explicit formulas for certain gates due to the pumping of chiral invertible phases.
In addition, in \cite{Chen23}, the Pontryagin square was used to realize a  logical $\op{S}$ gate in the 4D loop-only toric code using physical $\op{CS}$ gates. 

To the best of our knowledge, we are not aware of progress in adapting these cohomological operations approach to more general quantum codes beyond manifolds, i.e. to homological codes (apart from forthcoming papers \cite{lin2024transversal,golowich2024quantum}).

\subsection{Outlook}

\noindent \textbf{(1) Construction of more general cohomology invariants for quantum gates and their classification.}
While this paper has made some advances towards a general homological framework for constant-depth circuit implementable quantum gates, the space of cohomology operations has not been explored to a large extent in this context. Examples of other operations that could be adapted in search for other cohomology invariants in the context of quantum gates include Bockstein homomorphism, Steenrod square, and
Pontryagin square. 
In addition, there could also be unstable cohomology operations, whose full classification does not yet exist. Finally, one would also need to find conditions on quantum codes that permit such operations and furthermore, lead to interesting gates.

\noindent \textbf{(2) Quantum code families admitting constant-depth unitary gates.}
Under mild assumptions, this paper provides general criteria for the existence of a cup product on quantum codes and the conditions for being able to construct an appropriate cohomology invariant. 
We also found some examples of codes and code families that would satisfy these conditions, but it is clear that there is a significant unexplored space of codes that could meet these conditions, which presents an interesting direction for future work. In addition, it is an open question whether there exists a constraint on code parameters (including check weights) such that the gate code admits constant-depth unitary implementation of a gate at $\Lambda$-th level of the Clifford hierarchy, and what code families saturate the constraint, if it exists. 

\noindent \textbf{(3) Bravyi-K\"onig theorem for general qLDPC codes and bounds on fault-tolerance overheads.}
For more general qLDPC codes, a result analogous to the Bravyi-K\"onig theorem~\cite{Bravyi_2013} is not known. 
It would be interesting to see if the picture proposed in this paper can be further developed to derive such a result. As one indicator, a gate at level $\Lambda$ of the Clifford hierarchy is obtained in quantum codes that are products of $\Lambda$ many classical codes.

\noindent \textbf{(4) Quantum phases of matter and qLDPC codes.} 
Topological codes have a deep connection with topological phases of matter. In particular,  stabilizers of a topological code define the local terms of the Hamiltonian providing a fixed-point realization of a topological phase, and the logical subspace of a topological code corresponds to the ground state subspace of this local Hamiltonian. From this viewpoint, transversal gates can be understood in terms of  (emergent) symmetries of the ground state subspace. Such symmetries are part of a generalized notion of symmetries, \cite{Gaiottoetal2015} including higher-form symmetries and higher-group symmetries~\cite{kapustin2015higher}. In this context, the word ``higher" refers to the fact that the symmetries can act on a submanifold of the total space. In the case of higher group symmetries, the symmetries acting on different dimensions can even have interesting commutation relations.

From this viewpoint, the most general viewpoint symmetries corresponds to sweeping a topological defect across the system, as it leaves the system invariant. For the purposes of quantum codes, we are interested in invertible topological defects, that is, defects that have an inverse. This results in a process where we may locally create a defect and its inverse from the vacuum and have them pairwise cancel in the bulk. This can pictorially be thought of as ``pumping" such an invertible defect across the system. In fact, many of these topological defects can be created using a topological action called the Dijkgraaf-Witten action\cite{DijkgraafWitten}, which are classified by the cohomology groups of the gauge group\footnote{These defects can also be understood from the condensed matter viewpoint as the cohomology group of symmetry-protected topological (SPT) phases \cite{Chen13}.}. 
In the literature, there has been ample discussion on how to understand logical gates through pumping of topological defects in topological codes \cite{Yoshida15,Yoshida16,Yoshida17,WebsterBartlett18,Zhu22fractal,Barkeshli23,Chen23,Zhu2023,Kobayashi23,Barkeshli24_1,Barkeshli24_2,song2024magic}.
In particular, these gates can be understood as unitary evolution which traces out non-contractible paths in the space of Hamiltonians~\cite{Aasen22adiabatic}, a concept recently explored in the context of Floquet phases of matter \cite{Potter17,ElseNayak16,RoyHarper17,Tantivasadakarn22pump} and parametrized quantum systems~\cite{Shiozaki22,Wen23Berry,beaudry2023homotopical}.

 However, there are no similar results for non-topological codes. In particular, one might be interested in generalizing the notion of quantum phases of matter to one that relaxes geometric locality but still enforces $k$-locality, which recently became a subject of interest~\cite{Tan2023fracton,rakovszky2023LDPCI,rakovszky2024LDPCII,Stephen24klocal,lavasani2024stability}. We propose to call such phases \emph{homological phases of matter}, which subsumes topological phases. In such setups, only very limited examples of transversal gates corresponding to quantum pumps of symmetry-protected states are known to exist \cite{Kubica18ungauging,TantivasadakarnVijay20,Tantivasadakarn22pump,Hsin23}, and in all such examples, the models are non-topological, but still geometrically local (corresponding to fracton phases of matter). It would be interesting to have a similar classification of ``homological'' actions and understand the corresponding symmetries. This motivates a new generalization of topological quantum field theory (TQFT) with \emph{homological} quantum field theory --- HomQFT. 
 
 A natural place to study this would be to see if such homological actions can be understood from the framework of fiber bundles~\cite{gottesman2013fibre} and if there exists a classification similar to that of Dijkgraaf-Witten theories~\cite{DijkgraafWitten}. The framework of fiber bundles has been particularly fruitful in constructing gates in quantum codes \cite{Conrad2024GKP,Davydova2024DA}. The framework proposed in this paper for constructing gates via cohomology invariants points to the existence of a hierarchy of such symmetries (parametrized by their respective level in the Clifford hierarchy) in $k$-local Hamiltonians. It would be interesting if these symmetries can be described as a $k$-local version of SPT phases and if they can be classified mathematically, possibly by some appropriate cohomology theory.

\section{Simplicial Complexes and Cup Products}

Cohomology operations are constructed naturally when dealing with simplicial cohomology. 
Therefore, we start by reviewing essential definitions for simplicial cohomology and cup product, following the conventions in \cite{steenrod1947products}.

\subsection{Definitions}
A \emph{simplicial complex} $X$ is a collection of non-empty finite subsets, called \emph{simplices}, of some set $S$, such that all non-empty subsets of a simplex $\sigma \in X$ are also simplices in $X.$
A simplex $\sigma\in X$ of cardinality $p-1$ is called a \emph{$p$-simplex}. We denote the set of $i$-simplices in $X$ by $X_i\subset X.$ The $0$-simplices and $1$-simplices in $X$ are called vertices and edges, respectively.

For convenience, we will always assume that the underlying set $S$ of a simplical complex $X$ is equal to the set of vertices $S=X_0$. Moreover, we assume that $X$ is finite. 

We fix a partial order on the vertices in $X$ which induces a total order on each simplex in~$X.$ 
We emphasize that an order locally induces an orientation of each simplex in $X$. It does, however, not induce a global orientation on $X.$ The order is essential to obtain consistent signs in the formulas we discuss below.

\subsection{Simplicial cochain complex}
Recall that for an Abelian group $A,$ the group $C^p(X,A)$ of $A$-valued $p$-cochains on $X$ is defined as the group of $A$-valued functions on $X_p.$ In the following, we will freely identify the following:
\begin{enumerate}
    \item A $p$-simplex $\sigma$.
    \item The ordered array $[v_0,\dots,v_p]$ of the vertices of $\sigma.$
    \item The indicator function of $\sigma$ in $C^p(X,A).$
\end{enumerate}
Since $X$ is finite, the indicator functions form an $A$-basis of $C^p(X,A).$

The simplicial cochain complex associated to $X$ is defined by

\[C^\bullet(X,A) = \biggl\{
\begin{tikzcd}
	{C^0(X,A)} & {C^1(X,A)} & {C^2(X,A)} & \cdots
	\arrow["{\delta^0}", from=1-1, to=1-2]
	\arrow["{\delta^1}", from=1-2, to=1-3]
	\arrow["{\delta^2}", from=1-3, to=1-4]
\end{tikzcd}
\biggr\}
\]
where the $p$-th coboundary operator has the formula
$$\delta^p(f)([v_0,\dots,v_{p+1}])=\sum_{i=0}^p(-1)^if([v_0,\dots, \hat{v_i},\dots,v_{p+1}])$$
for a $f \in C^p(X,A)$ and $\hat v_i$ means that the argument $v_i$ is skipped. Coboundary operations fulfill $\delta^{p+1}\delta^p=0$. We often omit the indices and simply write $\delta$ for the coboundary operator.
We denote by $B^p(X,A)=\im(\delta^{p-1})$ and $Z^p(X,A)=\ker(\delta^p)$ the groups of \emph{$p$-coboundaries} and \emph{$p$-cocycles} on $X$, respectively. The $p$-th \emph{cohomology group} of $X$ is the quotient $H^p(X,A)=Z^p(X,A)/B^p(X,A).$

\subsection{Cup product}\label{sec:cupproduct}
Now assume that $A=R$ is a commutative ring. The cup product is an associative but usually non-commutative product $\cup$ on the complex $C^\bullet(X,R)$ which is defined by
$$(f\cup g)([v_0,\dots,v_{p+q}])=f([v_0,\dots,v_p])g([v_p,\dots,v_{p+q}])$$
for $f\in C^p(X,R)$ and $g\in C^q(X,R).$
For a $p$- and $q$-simplices, the only non-trivial cup product arises for $[v_0,\dots, v_p]$ and $[v_p,\dots,v_{p+q}]$ such that $[v_0,\dots, v_{p+q}]$ is a ($p+q$)-simplex, where
$$[v_0,\dots, v_p]\cup [v_p,\dots,v_{p+q}]=
    [v_0,\dots, v_{p+q}].$$
We emphasize that the cup product very much depends on the ordering of the vertex set.
\begin{example}
    Two edges of a cell complex have a non-zero cup product if and only if the end-vertex of the first is the start-vertex of the second and the edges span a $2$-simplex
    \begin{equation}\label{eqn:cupex}
        \begin{tikzpicture}[baseline={([yshift=+0ex]current bounding box.center)}]
            \begin{scope}[decoration={markings,mark=at position 0.6 with {\arrow{>}}}] 
                \draw[postaction={decorate}] (0.5,{sqrt(3)/2}) node[above right]{$a$} -- (0,0) node[below left]{$b$};
            \end{scope}
        \end{tikzpicture}
        \cup
        \begin{tikzpicture}[baseline=(current bounding box.center)]
            \begin{scope}[decoration={markings,mark=at position 0.6 with {\arrow{>}}}] 
                \draw[postaction={decorate}] (0,0) node[left] {$b$} -- (1,0) node[right] {$c$};
            \end{scope}
        \end{tikzpicture}
        =
        \begin{tikzpicture}[baseline=(current bounding box.center)]
        \begin{scope}[decoration={markings,mark=at position 0.6 with {\arrow{>}}}] 
            \draw[postaction={decorate}] (0.5,{sqrt(3)/2}) -- (0,0);
            \draw[postaction={decorate}] (0.5,{sqrt(3)/2}) -- (1,0);
            \draw[postaction={decorate}] (0,0) -- (1,0);
            \node[above] at (0.5,{sqrt(3)/2}) {$a$};
            \node[below left] at (0,0) {$b$};
            \node[below right] at (1,0) {$c$};
        \end{scope}
        \end{tikzpicture}
    \end{equation}
\end{example}

\section{Products and Integrals beyond Simplicial Complexes}\label{sec:complexeswithproducts}

For the discussion to be applicable to a broader class of homological (but not necessarily topological) quantum codes\footnote{Which we use to more generally refer to any quantum code specified by a chain complex, as opposed to a slightly more particular term used in Ref.~\cite{bravyi2013homologicalproductcodes}.} (which encompasses all CSS code constructions), one needs to go beyond simplicial complexes. 
In this section, we will introduce the notion of a dg-algebra and axiomatize all necessary properties to have a concept of cup products and integrals. 
In addition, we discuss how standard product constructions of complexes, such as tensor products and balanced products are compatible with cup products and integrals.

\subsection{Definitions}
\begin{definition}
    Let $R$ be a ring. A \emph{cochain complex} $C$ (over $R$) is a sequence of $R$-modules and linear maps
\[\begin{tikzcd}
	\cdots & {C^{p-1}} & {C^p} & {C^{p+1}} & \cdots
	\arrow[from=1-1, to=1-2]
	\arrow["{\delta^{p-1}}", from=1-2, to=1-3]
	\arrow["{\delta^p}", from=1-3, to=1-4]
	\arrow[from=1-4, to=1-5]
\end{tikzcd}\]
such that $\delta^{p}\delta^{p-1}=0$ for all $p$. 
We write $Z^p(C)=\ker(\delta^p)$, $B^p(C)=\op{im}(\delta^{p-1})$ and $H^p(C)=Z^p(C)/B^p(C)$ for cocycles, coboundaries and cohomology of $C.$
A morphism $\phi:C\to D$ of chain complexes over $R$  is a sequence of $R$-linear maps $\phi_p:C^p\to D^p$ such that $\phi_p\delta^p=\delta^p\phi_p.$
\end{definition}
A map of complexes descends to a map between the cohomologies of the complexes. We are interested in complexes with a product structure.
\begin{definition}\label{def:dgalgebra}
    A (non-unital) differential-graded algebra (or \emph{dg-algebra} for short) over $R$ is a cochain complex $C^\bullet$ equipped with a bilinear associative map  
    $$\cup: C^p\times C^q\to C^{p+q},\text{ for all }p,q$$
    called the cup product, such that
    the \emph{Leibniz rule} holds, that is, 
\begin{align}\label{eq:leibnizrule}
    \delta(f\cup g)=\delta(f)\cup g+(-1)^p f\cup \delta(g).
\end{align}
The Leibniz rule implies that the product $\cup$ descends to a cohomology operation.
\end{definition}

\begin{example}\label{ex:topologicaldgalgebras}
    There are two standard examples that we will make use of later:
    \begin{enumerate}
        \item The simplicial cochain complex $C(X,R)$ associated to a simplicial complex admits a dg-algebra, see \Cref{sec:cupproduct}. The same is true for the singular and cubical (see \cite{Chen2023higher}) cochain complex associated to any topological space.
        \item Let $X$ be a CW-complex. Then the cellular cochain complex $C(X,R)$ is a dg-algebra.
    \end{enumerate}
\end{example}
We also need the following notion of bases.
\begin{definition}
    A \emph{basis} of a complex $C$ over $R$ is a set $X=\biguplus_p X_p$ such that each $C^p$ is free over $R$ with basis $X_p.$ We call $C$ \emph{based} if it is equipped with a basis. A morphism of based chain complex is a morphism of chain complexes that preserves the basis.
\end{definition}
We will often write $C(X,R)$ for a based cochain complex to signify the basis $X=\bigcup X_i$ and ring $R$. 
Moreover, we will write $B^p(X,R)$, $Z^p(X,R)$ and $H^p(X,R)$ for coboundaries, cocycles and cohomology. 
For a based cochain complex $C(X,R)$ over $R$ and a map of rings $R\to S$, we obtain a based cochain complex over $S$ via
$C^p(X,S)=C^p(X,R)\otimes_R S$
with the same basis.

\subsection{Tensor products}\label{sec:tensor_products_cup}
Consider two complexes $C$ and $D$ over a commutative ring $R.$ Then we can consider the tensor product complex $C\otimes_RD$ which consists of the terms
$$(C\otimes_RD)^n=\bigoplus_{p+q=n}C^p\otimes_R D^q$$
and has the coboundary map for $x\in C^p$ and $y\in D^q$ defined by
$$\delta(x\otimes y)=\delta(x)\otimes y+ (-1)^px\otimes \delta(y).$$
If both $C$ and $D$ are equipped with an $R$-bilinear product structure, say $\cup$, then we can equip the $C\otimes D$ with a product by the formula 
\begin{equation}\label{eqn:tensor_products_cup}
    (x_1\otimes y_1)\cup(x_2\otimes y_2)=(-1)^{pq}(x_1\cup x_2)\otimes (y_1\cup y_2)
\end{equation}
for $x_i\in C^p$ and $y_i\in D^q$. If $C$ and $D$ are dg-algebras then so is $C\otimes D$, see
\cite[Section~II]{whitney1938products}.

For based cochain complexes $C(X,R)$ and $C(Y,R)$ we often write $C(X\times Y,R)$ for their tensor product which is also based with basis $X\times Y$. Here,
$$(X\times Y)_n=\biguplus_{p+q=n}X_p\times Y_q$$
is a basis of $C^n(X\times Y,R).$

\begin{example}\label{ex:tensor_product_torus}
    Let $X$ be a simplicial subdivision of the circle $S^1$
    \begin{center}
        \begin{tikzpicture}[scale=0.2]
        \foreach \angle in {0,45,...,315} {
          \draw[<-,] 
            (\angle:3cm) arc (\angle:\angle+45:3cm) node[pos=0.6, circle,draw=black, fill=black, inner sep=0pt,minimum size=1pt] {};
        }
        
        \end{tikzpicture}
    \end{center}
    and let $C$ be the associated cochain complex.
    Then the tensor product cochain complex $C\otimes C$ has basis $X\times X$ which is topologically a 2D torus $T^2$ subdivided into squares.
    \begin{center}
        \begin{tikzpicture}[scale=0.6]
            \clip (.2,.2) rectangle (3.8,3.8);
            \foreach \x in {0,1,2,3} {\
                \foreach \y in {0,1,2,3,4} {
                    \draw[-] (\y,\x) -- (\y,\x+0.5);
                    \draw[<-] (\y,\x+0.5) -- (\y,\x+1);
                    \draw[->] (\x,\y) -- (\x+0.5,\y);
                    \draw[-] (\x+0.5,\y) -- (\x+1,\y);
                }
            }
        \end{tikzpicture}
    \end{center}
    Denote $v$ and $e$ the 0-cochain and 1-cochain dual to the corresponding vertex and edge of the chain complex.
    By \Cref{eqn:tensor_products_cup} the cup product between two edges in $X\times X$ can only be non-zero if one is horizontal and the other vertical, i.e.\ if they are of the form $e_1\otimes v_1$ and $v_2\otimes e_2$.
    In that case we have that
    $$e_1\otimes v_1 \cup v_2\otimes e_2 = - (e_1\cup v_2)\otimes (v_1\cup e_2)$$
    which is non-zero if $v_2$ is the end point of $e_1$ and $v_1$ is the start point of $e_2$, i.e. if we take the cup product between two consecutively oriented edges belonging to a square face in $X\times X$.

    We remark that further products of the from $X \times X\times X \times \cdots$ gives an expression for the cup product on the hypercubic lattice. See \cite{Chen2023higher} for further discussion.

\end{example}

\subsection{Balanced products}

\subsubsection{Invariants and coinvariants}
\label{sec:invariantsandcoinvariants}

Let us first establish some useful concepts which will allow us to extend cup products to the case of balanced products of complexes. 

Let $G$ be a finite group acting on an $R$-module $M$ by $R$-linear maps. Then we denote  $G$-invariants and $G$-coinvariants of $M$ by
\begin{align*}
    M^G=\{m\in M\mid gm=m\}\ \text{ and }\ M_G=M/\langle gm-m\mid m\in M, g\in G\rangle.
\end{align*}
We denote elements in $M_G$ by $[m]_G$ for a representative $m\in M.$
There are natural projection and averaging maps comparing invariants and coinvariants
\begin{align*}
    \op{pr}&\colon M^G\to M_G,\, \quad  m\mapsto [m]_G \ \text{ and }\\
    \op{avg}&\colon M_G\to M^G,\, \quad  [m]_G\mapsto \sum_{g\in G}gm.
\end{align*}
The composition of these maps is multiplication by $|G|.$ 
If $|G|$ is invertible in $R$, then invariants and coinvariants are isomorphic.

There is also another important case where invariants and coinvariants are isomorphic.
Let $M$ be a free $R$-module with a basis $X$ such that $G$ preserves $X$ and acts freely on it. Then the map $\op{avg}$ is an isomorphism. 
Moreover, the  set of orbits $X/G=\{G\cdot x\mid x\in X\}$ indexes a basis of $M_G$ and $M^G,$ where an orbit $G\cdot x$ corresponds to
$[x]_G\in M_G$ and $\sum_{g\in G}g\cdot x\in M^G.$

Similarly, if $G$ acts on a chain complex $C$ by morphisms of chain complexes, then we can consider complexes $C^G$ and $C_G$ of invariants and coinvariants which have terms $(C^G)^p=(C^p)^G$ and $(C_G)^p=(C^p)_G$ with the induced differential. 

If $C(X,R)$ is a based complex and $G$ acts freely on the basis, then the invariants and coinvariants are again based complexes with a basis indexed by the orbits $X/G$. In this case, we take the freedom to identify the invariants an coinvariants via the averaging map and write 
$$C(X/G,R)=C(X,R)_G=C(X,R)^G,$$
\hl{(note that the averaging map $\op{avg}$ is an isomorphism under these assumptions).}

If $C$ is equipped with a product structure $\cup$ preserved by the group action (so ${(g-\cup g-)}={g(-\cup -)}$ for all $g\in G$) then the product structure preserves the invariants~$C^G.$ 
The coinvariants~$C_G$ do not inherit the product structure a priori. 
But, if $C=C(X,R)$ is a based complex and $G$ acts freely on the basis, then we can identify invariants and coinvariants. So both inherit the product structure on $C.$ 
Explicitly, on coinvariants, the cup product can be formulated as
\begin{align}\label{eqn:cupproductcoinvariants}
    [m]_G\cup [n]_G =\sum_{g\in G}\left[m\cup gn\right]_G=\sum_{g\in G}\left[ gm\cup n\right]_G.   
\end{align}
This formula arises from the following computation
\begin{align*} 
[m]_G \cup [n]_G &= 
\op{avg}^{-1}(\op{avg} [m]_G \cup \op{avg}[n]_G) \\ &= 
\op{avg}^{-1}(\sum_{g\in G}g\left(m \cup g^{-1}\op{avg}[n]_G\right)) \\ &= 
\op{avg}^{-1}(\sum_{g\in G}g\left(m \cup \op{avg}[n]_G\right)) \\ &= 
\op{avg}^{-1}(\op{avg}([m \cup \op{avg}[n]_G]_G) = 
\sum_{g\in G}[m \cup gn]_G.
\end{align*}

\subsubsection{Balanced products of chain complexes with cup products}
We will now recall the balanced product construction for chain complexes \cite{James1984,tom2008algebraic,balancedproductcodes,qldpc_review}. 
Let~$X$ and~$Y$ be sets with a right and left $G$-action, respectively. 
The Cartesian product $X\times Y$ admits a diagonal left $G$-action via 
$$g\cdot (x,y)=(x\cdot g^{-1},g\cdot y)\text{ for }g\in G, x\in X \text{ and }y\in Y.$$
The balanced product of $X,Y$ over $G$ is defined as the quotient of the Cartesian product by this diagonal action
$$X\times_G Y=(X\times Y)/G.$$
For an equivalence class $[x,y]_G\in X\times_G Y$ we have $[x\cdot g,y]_G=[x,g\cdot y]_G.$

Similarly, if $C$ and $D$ are complexes with a right and left action of the group $G,$ we obtain a diagonal left action on the tensor product $C\otimes D$. The \emph{balanced product complex} of $C$ and $D$ over $G$ is defined as the coinvariants with respect to the diagonal action
$$C\otimes_GD=(C\otimes D)_G.$$
One may also consider the following variant of the balanced product defined via invariants
$$C\otimes_G^! D=(C\otimes D)^G.$$
The discussion about comparisons between invariants and coinvariants from \Cref{sec:invariantsandcoinvariants} applies to these two definitions.

If $C$ and $D$ are dg-algebras with a multiplication compatible with the group action, then $C\otimes^GD$ inherits the dg-algebra structure of the tensor product $C\otimes D.$

Assume that $C(X,R)$ and $C(Y,R)$ are based complexes with a right and left action of $G$ preserving the bases. Then the balanced product has a basis $X\times_GY$, and we take the freedom to abbreviate
$$C(X,R)\otimes_GC(Y,R)=C(X\times_G Y,R).$$

Additionally assume that $C(X,R)$ and $C(Y,R)$ admit a product $\cup$ such that the group~$G$ preserves the product structure and acts freely on $X\times Y$. 
Then the balanced product ${C(X\times_G Y,R)}$ inherits a product defined by
\begin{align}\label{eqn:cupproductbalancedproduct}
\begin{split}
       [x_1,y_1]_G\cup [x_2,y_2]_G&=(-1)^{qr}\sum_{g\in G}[x_1\cup x_2\cdot g^{-1},y_1\cup g\cdot y_2 ]_G\\   &=(-1)^{qr}\sum_{g\in G}[x_1\cdot g^{-1}\cup x_2, gy_1\cup y_2 ]_G. 
\end{split}
\end{align}
where $y_1$ and $x_2$ have degrees $q$ and $r$, respectively, and we used \Cref{eqn:cupproductcoinvariants}.

\hl{Similarly, for a triple balanced product,} 
$$\mathcolorbox{C(X,R)\otimes_GC(Y,R)\otimes_GC(Z,R)=C(X\times_G Y\times_G Z,R)}$$
\hl{we obtain the following formula for a triple cup product}
\begin{align} \label{eq:triple-cup}
    \begin{split}
       &\mathcolorbox{
       [x_1,y_1,z_1]_{G\times G}\cup [x_2,y_2,z_2]_{G\times G}\cup [x_3,y_3,z_3]_{G\times G}} \\=
       &\mathcolorbox{
       \epsilon \sum_{\mathclap{g,h,g',h'\in G}}[x_1g^{-1}\cup x_2\cup x_3g'^{-1},gy_1h^{-1}\cup y_2\cup g'y_3h'^{-1},hz_1\cup z_2\cup h'z_3]_{G\times G}}
    \end{split}
\end{align}
\hl{where $\epsilon$ is a sign depending on the degree of the $x_i,y_i$ and $z_i$.}

\subsection{Integrals}\label{sec:integrals}
An integral of dimension $n$ on a cochain complex $C$ over $R$ is linear map of the form
$$\int: C^n\to R$$
such that $\int b = 0$ for all $n$-coboundaries $b\in B^n(C).$ 
In particular, an integral induces a map $\int: H^n(C)\to R.$ By convention, we set $\int x=0$ if $x\in C^q$ for $q\neq n.$ Later in the text, we will also use the notation $\int_n$ to denote an integral of dimension $n$. 

Now let $C$ and $D$ be complexes with integrals $\int_C$ and $\int_D$ in dimensions $n$ and $m$ respectively. 
Moreover, assume that $C^i=D^j=0$ for $i>n$, $j>m$ and $i,j<0.$ 
Then we obtain an integral of dimension $n+m$ via
\begin{align}\label{eqn:integral_tensor_product}
    \int: (C\otimes D)^{n+m}\to R, \quad  x\otimes y\mapsto \int_Cx\int_Dy.
\end{align}
One also obtains an integral on the balanced product $C\otimes_G D$ if the group action preserves the integrals on $C$ and $D.$

For a simplicial complex $X$ an integral can be constructed geometrically. 
Assume that the geometric realization of $|X|$ of $X$ is an $n$-dimensional orientable manifold. 
Then any orientation on $|X|$ induces an $n$-dimensional integral on~$X.$ 
The integral of an $n$-dimensional simplex $\sigma\in X$ is $\pm 1$, and the sign can be determined as follows. 
The simplex $\sigma$ has two orientations:  the first comes from the order of the vertices in $X$, and the second is induced by the orientation of $|X|.$ 
Then, $\int\!\sigma$, is positive or negative if the two orientations agree or not.
\section{Classical and Quantum Codes}\label{sec:quantumcodes}
In this section, we discuss the connection between based chain complexes and classical and quantum codes.

First, consider a based cochain complex consisting of at least two terms (recall that  $C^p$ has a basis $X_p$):
\[\begin{tikzcd}
	\cdots & {C^{p-1}} & {C^p} & \cdots
	\arrow[from=1-1, to=1-2]
	\arrow["{\delta^{p-1}}", from=1-2, to=1-3]
	\arrow[from=1-3, to=1-4]
\end{tikzcd}\]
To define a classical code, we can identify the elements of $X_p$ with bits and the elements of $X_{p-1}$ with checks of the code, so that $\delta^\top$ is the parity check matrix.\footnote{It follows that the code space is exactly given by cycles $Z_p(X,\Zn{2})$.}

A CSS quantum code is defined by two classical codes with parity check matrices~$H_X$ and~$H_Z$ that fulfill the relation $H_Z H_X^\top = 0$.
Identifying $\delta^{p-1}=H_X^\top$ and $\delta^p=H_Z$, a CSS quantum code is equivalent to three consecutive terms in a based cochain complex:
\[\begin{tikzcd}
	\cdots & {C^{p-1}} & {C^p} & {C^{p+1}} & \cdots
	\arrow[from=1-1, to=1-2]
	\arrow["{\delta^{p-1}}", from=1-2, to=1-3]
	\arrow["{\delta^p}", from=1-3, to=1-4]
	\arrow[from=1-4, to=1-5]
\end{tikzcd}\]
Qubits are identified with elements in $X_p$, and for an $[[n,k,d]]$ quantum code, we have $n=|X_p|$.
The coboundary operator~$\delta^p$ defines two distinguished subspaces of $p$-cochains $C^p$ called \emph{cocycles} $Z^p = \ker \delta^p$ and \emph{coboundaries} $B^p = \im \delta^{p-1}$, as well as the cohomology groups $H^p = Z^p/B^p$.

We can write down the codestates explicitly.
Consider a code defined on qubits and denote by $\ket{0}$ and $\ket{1}$ the standard  basis of $\C^2$.\footnote{Also called the \emph{$Z$-basis} as the Pauli-$Z$ operator is diagonal in this basis.}
The standard (``computational'') basis of the $n$-fold tensor product Hilbert space $\mathcal H = (\C^2)^{\otimes n}$ has basis vectors 
$$\ket{c_1,\dots, c_n}=\ket{c_1}\otimes \cdots \otimes\ket{c_n} \text{ for }c_i\in\{0,1\}.$$
For a fixed basis $x_1,\dots,x_n \in X_p$, we write $\ket{c}=\ket{c_1,\dots, c_n}\in(\C^2)^{\otimes n}$ for $c=\sum_ic_ix_i\in C^p(X,\Zn{2})$; hence, it is labeled by the $p$-cochains.

We can now write down the logical states of the CSS quantum code explicitly.
We will write $B^p$, $Z^p$ and $H^p$ for boundary, cycle and cohomology groups to ease notation.
The code space $\mathcal{C}\subseteq (\C^2)^{\otimes n}$ is spanned by elements of the form
$$\ket{[\gamma]}=\frac{1}{\sqrt{|B^p|}}\sum_{b\in B^p}\ket{\gamma+b}$$
where $\gamma \in Z^p$ is a representative of a cohomology class $[\gamma]\in H^p=Z^p/B^p$.
In terms of the stabilizer formalism, we have that $X$-stabilizers are exactly given by operators of the form~$X^{\otimes b}$ for $b\in B^p$ and the $X$-checks are $X^{\otimes \delta x}$ for $x\in X_p$.\footnote{A similar discussion applies to $Z$-stabilizers in terms of boundaries and boundary operators. We refer the reader to~\cite{qldpc_review} as a reference.}

\subsection{Cohomology operations and logical gates} \label{sec:operations-gates}

The \emph{cohomology invariants}, by definition, are maps that assign numbers which (we can map to phases in $U(1)$) to cohomology classes $[\gamma]\in H^p$.  Because the code states of a quantum code are labeled by the cohomology classes, any such cohomology invariant is a quantum gate. For example, consider a cohomology invariant given by $\psi: H^p \rightarrow \Zn{m}$. A cohomology class  $[\gamma]\in H^p$ is then put in correspondence with a phase $e^{2 \pi i \psi{[\gamma]}/m}$. 

In order for a logical gate to be implemented at a circuit level, we restrict our consideration to operations $\psi$ on $p$-cochains of the form
$$\psi: C^p(X, \Zn{2}) \rightarrow \Zn{m}$$
that \emph{descend} to a cohomology invariant $\psi: H^p \rightarrow \Zn{m}$. The latter defines its logical action.

Since we associated the basis in $C^p(X, \Zn{2}) $ with qubits, such an operation is well-defined on the physical level.
If we write the states in the physical Hilbert space  $\mathcal{H}$ as
$\ket{c}$ for $c \in C^{p}(X,\Zn{2})$, as discussed above, the physical circuit corresponding to $\psi$ is defined such that its action satisfies 
$$U_{\psi} \ket{c} = e^{2 \pi i \psi({c})/m} \ket{c}.$$
Because we demanded that $\psi$ is a cohomology invariant, when $c$ is a cocycle, we have $\psi({c} )\equiv \psi([c])$, and thus, the action of the gate $U_\psi$ depends only on the cohomology class of $c$ -- in other words, it depends only on the logical state. Thus, $U_\psi$ is a logical gate that preserves the codespace. At the same time, because the definition above is written for any basis state in the Hilbert space, it can be realized by a circuit as long as the circuit has this action. 

Finally, if the operation $\psi(c)$, as defined at the level of $p$-cochains, is ``local'', there exists a finite-depth circuit implementation of $U_\psi$. We informally define the operation on $p$-cochains to be local if its value depends only on (a) values of $p$-cochains and (b)  membership of cochains in the boundary maps (which we assume to be finite-degree, so they are ``local'') of other $p$-cochains. The operations that we construct in the rest of this paper are, in fact, integrals over products of a few $p$-cochains. Such an operation is naturally ``local'' in the sense above. Under this condition, $U_\psi$ can be decomposed into a product of phase factors that involve a few qubits at a time only, and each qubit appears in finitely many factors, which makes the circuit finite-depth and $k$-local (or local, if we consider a manifold). 

As a remark, one can consider $\Lambda$ copies of the same quantum code and define a multilinear operation $\Psi_\Lambda$ that similarly maps combinations of cohomology classes $H^p \times \dots \times H^p$ to numbers. This will be to construct more interesting examples of gates from cup products (which are multiple-argument and multilinear) than we would be able to otherwise. This is a straightforward generalization of the discussion here, which we present in \cref{sec:highercontrolz}.  

Moving forward, we will assume that the qubits are associated with the elements in $X_1$, hence $p = 1$. We will also consider only the case where $m = 2$, and therefore $e^{i\phi([\gamma])} \equiv (-1)^{\psi([\gamma])}$.

The goal of the rest of the paper is to show how to build cochain operations $\Psi$ with these properties that also descend to cohomology invariants on tensor and balanced product codes (which we do in the next section) and study resulting gates and circuits (which we do in \cref{sec:highercontrolz}).

\section{Conditions for the Cup Product Structure on Non-Topological Codes}
\label{sec:cup-non-topological}

The approach that we use to construct cochain operations that descend to cohomology invariant is based on first starts with constructing operations with certain properties on classical codes. Interesting quantum codes can be constructed from tensor products\footnote{Also called hypergraph products in the context of quantum codes.} and balanced products of classical codes. 
For these quantum codes, we show that the cochain operations on constituent classical codes give rise to a cohomology invariant on the quantum code if certain conditions on these operations are fulfilled.

\subsection{Products, integrals and orientations}\label{sec:productsforclassicalcodes}

Recall from the previous sections that a classical code $C$ can be interpreted as a based chain complex of the form
$$C^0(X,\Zn{2})\stackrel{\delta}{\to} C^1(X,\Zn{2}).$$
The bases $X_0$ and $X_1$ label the checks and physical bits of the code, respectively, and the differential $\delta$ is the transpose of the parity check matrix.  
The discussion in this section builds on \Cref{sec:complexeswithproducts}, and we do not assume that $C$ is a simplicial complex anymore. 

In this subsection, we determine the conditions for the existence of ($\Lambda$-fold) products and integrals on classical codes, and will formulate a condition on them that will allow us to proceed with constructing cohomology invariants (leading to logical gates) on quantum codes obtained as a tensor or balanced product of several classical codes.  

First, we note that defining an integral is simple. If we assume that the support of every check is even, then we can define a $1$-dimensional integral $\int_1$, see \Cref{sec:integrals}, on the complex which maps every bit to one, $\int_1 x=1$ for $x\in X_1.$

To define a product structure, we note that the cup product between a vertex and an edge in an oriented graph depends on the orientation of the graph. Inspired by this, we introduce a notion that is somewhat similar to orientation.

\begin{definition} \label{def:pre-orientation}
A \emph{pre-orientation} $\mathcal{O}$ of $C(X,\Zn{2})$ is a collection of partitions
$\delta(a)=\din(a)\uplus \dout(a)\uplus \delta_{\op{free}} (a)$
of the coboundary of every $a\in X_0$.
\end{definition}

In this definition, and in the following, we identify a vector with its support.

\begin{definition} \label{def:cup}
    Given a pre-orientation $\mathcal{O}$, we define a bilinear cup product $\cup_{\mathcal{O}}=\cup$ on the complex $C(X,\Zn{2})$ via 
    \begin{enumerate}
        \item[(1)] $a\cup a=a$ for $a\in X_0,$
        \item[(2)] $a\cup x = x$ for $a\in X_0$ and $x\in \dout(a)$,
        \item[(3)] $x\cup a = x$ for $a\in X_0$ and $x\in \din(a)$,
    \end{enumerate}
    and is zero for any other combination of elements in $X$ that is not specified above.
\end{definition}

In the following, we will call a multiple-cup product \hl{(i.e. multiple applications of the cup product) of the form $a_1 \cup a_2\cup\dots \cup a_{\Lambda}$ a \emph{$\Lambda$-fold cup product}}. In addition, we remark that, for our purposes, the condition (1) in the definition of cup product could be technically made more general.

\begin{definition}[Non-overlapping bits] \label{def:non-overlapping}
Given a pre-orientation $\mathcal{O}$, we define the following conditions: 
\begin{equation}
\begin{split}\label{eq:non-overlap}
    &\din (a) \cap \din(a') = \varnothing, \text{ and}
    \\
    &\dout (a) \cap \dout(a') = \varnothing.
    \end{split}
\end{equation}
for every $a \neq a' \in X_0$.  Here, $\cap$ stands for the intersection (and not the cap product)
\end{definition}
In other words, the ``non-overlapping bits'' condition demands that no two checks can share the same incoming bit, nor the same outgoing bit. Such a condition is automatically fulfilled, for example, for an oriented graph.

\begin{proposition}
For a classical code with preorientation satisfying the no overlapping bits condition, the $\Lambda$-fold cup product obtained by multiple applications of the cup product defined in \cref{def:cup} is associative.
\end{proposition}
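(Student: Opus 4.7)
The plan is to reduce $\Lambda$-fold associativity to the single binary identity $(u\cup v)\cup w = u\cup(v\cup w)$ for basis elements $u,v,w\in X_0\uplus X_1$, and then verify this identity by a short finite case analysis. The reduction is the standard inductive argument: once a binary operation satisfies $(u\cup v)\cup w = u\cup(v\cup w)$ on all triples, any two parenthesizations of $a_1\cup \cdots \cup a_\Lambda$ agree (one rewrites any parenthesization into the left-associated normal form by repeatedly applying the triple identity), so bilinearity extends this to all cochains.

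For the triple identity, I would first restrict by degree. Since the complex has only $X_0$ (degree $0$) and $X_1$ (degree $1$), and no basis element of higher degree, a product of two elements of $X_1$ is zero (no rule in \Cref{def:cup} assigns it a non-zero value, and nothing lives in degree~$2$). Hence in any triple where two or more of $u,v,w$ lie in $X_1$, both associators are zero and the identity is trivial. This leaves four essential configurations: $(X_0,X_0,X_0)$ and exactly one of $u,v,w$ in $X_1$.

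I would then check each configuration directly. For three $X_0$ elements, both sides are nonzero iff $u=v=w$ (and equal $u$), so associativity is automatic. For one $X_1$-element in the middle, say $u,w\in X_0$ and $v\in X_1$, both $(u\cup v)\cup w$ and $u\cup(v\cup w)$ are nonzero precisely when $v\in\dout(u)\cap\din(w)$ and then both equal $v$, so associativity holds without further assumption. The two remaining subcases are where the $X_1$-element sits at an end. If $u\in X_1$, $v,w\in X_0$, then $(u\cup v)\cup w$ is nonzero iff $u\in\din(v)\cap\din(w)$, while $u\cup(v\cup w)$ is nonzero iff $v=w$ and $u\in\din(v)$. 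The non-overlapping hypothesis $\din(v)\cap\din(w)=\varnothing$ for $v\neq w$ forces the two conditions to coincide. The dual subcase $u,v\in X_0$, $w\in X_1$ is handled identically using $\dout(u)\cap\dout(v)=\varnothing$.

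The only real content is recognizing that the non-overlapping bits condition is exactly what makes the two expressions agree in the end cases; the middle case and the all-$X_0$ case are formal, and the degree-reduction kills everything else. The expected obstacle, if any, is purely bookkeeping: to make the inductive reduction from triple to $\Lambda$-fold crisp I would just cite the standard fact (it is immediate because any parenthesization can be rewritten to the left-associated form by a sequence of triple moves). With those ingredients in place, the proposition follows.
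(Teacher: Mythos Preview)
Your proposal is correct and is precisely the ``direct verification'' the paper's one-line proof refers to, carried out in full detail. The paper offers no alternative argument, so your case analysis on triples of basis elements together with the standard reduction from triple to $\Lambda$-fold associativity is exactly what is intended.
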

\begin{proof}
    Obtained by direct verification.
\end{proof}
 
\begin{remark}
    In the following, unless otherwise specified, we will \emph{not} impose the non-overlapping bits condition, which means that the cup product need not be associative.
    \hl{For example, $(a\cup a')\cup x$ is in general not equal to to $a\cup (a'\cup x)$ for $a,a'\in X_0$ and $x\in X_1$.
    One can choose the order of operations that depends on the degree of the arguments in the cup product, such that one never takes a cup product between two elements in $X_0$, namely:} 
    \begin{align}\mathcolorbox{a_0\cup\dots \cup a_{j-1}\cup x\cup a_{j+1}\cup\dots\cup a_\Lambda=\begin{cases}
        x & \text{if } x\in \bigcap_{i=1}^{j-1}  \dout(a_i)\cap \bigcap_{k=j+1}^{\Lambda}  \din(a_k),\\
        0 &\text{otherwise.}
    \end{cases}}
    \end{align}
\end{remark}
\hl{Moving forward, we will assume the order of operations as in the equation above, which is the most symmetric choice. We note that in the earlier version of the paper, we implicitly assumed this order, which is consistent with equation (\ref{eq:leibnizcondition}), whereas Eq.~(\ref{eq:lambda3}) contained a mistake that has been fixed in the current version\footnote{We thank Ryan Tiew for useful discussions and remark that Ref.~\cite{menon2025magictricyclesefficientmagic} independently emphasized on this order of operations. }.  Fixing a different order of operations would also be valid, but would lead to a different set of conditions on the chain complex and a different action of the logical gate as a result, which would be interesting to study systematically. For example, different choices were made in Refs.~\cite{menon2025magictricyclesefficientmagic} and \cite{2508.08191,li2025transversaldimensionjumpproduct}. }

Generically, a cup product defined this way does not satisfy the Leibniz rule without imposing additional conditions on the chain complex.   With our end goal of constructing a logical gate on product complexes from cup products on constituent complexes, it is sufficient if the Leibniz rule is satisfied only after applying the integral. Namely, we define

\begin{definition}[Integrated Leibniz rule] \label{def:int-Leibniz}
For a chain complex $C$ with an integral and a $\Lambda$-fold cup product. If the condition 
\begin{equation}
    \int_1 \sum_{j=1}^\Lambda a_1\cup\dots \cup \delta(a_j)\cup \dots \cup a_\Lambda=0
 \ (  \op{mod} 2)
\end{equation} 
\noindent is satisfied (recall that an integral of a coboundary is defined to be zero), then we say that the $\Lambda$-fold cup product obeys the \emph{integrated Leibniz rule}. Here, $a_i$ could be either in $C^0$ or $C^1$ for any $i \in [1,\Lambda]$.
\end{definition}

Imposing these weaker conditions allows us to later construct gates for a wider class of codes than one would have been able to otherwise\footnote{In addition, if one instead considered the stronger condition that the Leibniz rule holds without an integral and also assumed that $\delta_{\op{free}} = 0$, this would be possible to satisfy only on graphs.}. This approach equips the complexes that we consider with a ``weak'' dg-algebra based on these weaker conditions. 

Below, we determine the conditions on a chain complex (or, equivalently, for a classical code) under which the integrated Leibniz rule holds for a $\Lambda$-fold cup product.
\begin{proposition}  \label{prop:integrated-leibniz-conditions}
Consider a chain complex $C$ with a pre-orientation $\mathcal O$, a 1-dimensional integral, and a $\Lambda$-fold cup product from \cref{def:cup}.  The $\Lambda$-fold cup product satisfies the integrated Leibniz rule in \cref{def:int-Leibniz} if for all $a_i \in X_0$:
\begin{equation}\label{eq:leibnizcondition}
    \sum_{j=1}^\Lambda \left | \left ( \bigcap_{i=1}^{j-1}  \dout(a_i) \right ) \cap \delta(a_j) \cap \left ( \bigcap_{k=j+1}^{\Lambda}  \din(a_k) \right )  \right | = 0 \ (\op{mod} 2).
\end{equation}
Here, if the limit of the series of intersections is less than 1 or larger than $\Lambda$, the respective operation is assumed to be removed. 

As a consequence:
\begin{itemize} 
    \item [(a)]for $\Lambda = 2$, the condition becomes
 \begin{equation}\label{eq:lambda2}
 \begin{split}
    &\left | \din(a_1)\right | +\left | \dout(a_1)\right |   = 0 \ (  \op{mod} 2), \quad \text{and} 
    \\
    &\left | \din (a_1) \cap \din(a_2)\right |+ 
    \left | \dout(a_1) \cap \dout(a_2)\right | 
    \\
    &+ \left | \dout(a_1) \cap \delta_{\op{free}}(a_2)\right | +\left | \delta_{\op{free}}(a_1) \cap \din(a_2) \right |   = 0 \ (  \op{mod} 2), \  \quad a_1 \neq a_2. 
    \end{split}
 \end{equation}
 $\forall a_1, a_2 \in X_0$.

 \item [(b)]for $\Lambda = 3$, the condition becomes:
\begin{align} 
    &\mathcolorbox{|\din(a_1)|+|\dout(a_1)|=0\ (\op{mod} 2)} \nonumber \\
    &\mathcolorbox{|\din(a_1)\cap \din(a_2)|+|\dout(a_1)\cap \dout(a_2)|=0\ (\op{mod} 2)}\nonumber \\
    &\mathcolorbox{|\dout(a_1)\cap \dfree(a_2)|=0\ (\op{mod} 2)} \nonumber \\
    &\mathcolorbox{|\dfree(a_1)\cap \din(a_2)|=0\ (\op{mod} 2)} \label{eq:lambda3} \\
    &\mathcolorbox{|\din(a_1) \cap \din(a_2) \cap \din(a_3)| + |\dfree(a_1) \cap \din(a_2) \cap \din(a_3)|} \nonumber\\
    &\mathcolorbox{+ |\dout(a_1) \cap \dfree(a_2) \cap \din(a_3)| + |\dout(a_1) \cap \dout(a_2) \cap \dfree(a_3)|} \nonumber\\
    &\mathcolorbox{+ |\dout(a_1) \cap \dout(a_2) \cap \dout(a_3)|=0 \ (\op{mod} 2)} \nonumber
\end{align} 

 $\forall a_1, a_2, a_3  \in X_0$. 
\end{itemize}

\end{proposition}
\begin{proof}
    Obtained by directly applying the decomposition in \cref{def:pre-orientation} and the definition of the cup product, \cref{def:cup}. 
\end{proof}

We note that the condition for $\Lambda =2 $ is automatically satisfied if the condition for $\Lambda \geq 3$ is fulfilled.
The conditions for the integrated Leibniz rule become stronger as $\Lambda$ increases, which is very different from the case with the regular cup products for which the usual Leibniz rule holds. Namely, if the cup product is associative and fulfills the usual (not integrated) Leibniz rule, the Leibniz rule automatically follows for the $\Lambda$-fold cup product for every $\Lambda$.

The conditions above were formulated without requiring the ``non-overlapping bits'' condition. If this is imposed, then the cup product is associative and the conditions simplify to:

\begin{itemize} 
    \item [(a)]for $\Lambda = 2$:
 \begin{equation}\label{eq:lambda21}
 \begin{split}
    &\left | \din(a_1)\right | +\left | \dout(a_1)\right |   = 0 \ (  \op{mod} 2), \text{ and} 
    \\
    &\left | \dout(a_1) \cap \delta_{\op{free}}(a_2)\right | +\left | \delta_{\op{free}}(a_1) \cap \din(a_2) \right |   = 0 \ (  \op{mod} 2), \ a_1 \neq a_2. 
    \end{split}
 \end{equation}
 $\forall a_1, a_2 \in X_0$.

 \item [(b)]for $\Lambda = 3$, the condition becomes independent of $\Lambda$ and is:
 \begin{equation} \label{eq:lambda31}
 \begin{split}
    &\left | \din(a_1)\right | +\left | \dout(a_1)\right |   = 0 \ (  \op{mod} 2),
    \\
    &\left | \delta_{\op{free}}(a_1) \cap \din(a_2)  
 \right |   = 0 \ (  \op{mod} 2), \ a_1 \neq a_2 ,
 \\
 &\left |\dout(a_1) \cap  \delta_{\op{free}}(a_2)  
 \right |   = 0 \ (  \op{mod} 2), \ a_1 \neq a_2, \\
 &\left | \dout (a_1) \cap \delta_{\op{free}}(a_2) \cap \din(a_3)  
 \right |   = 0 \ (  \op{mod} 2), \ a_1 \neq a_2 \neq a_3, \ a_1 \neq a_3.
    \end{split}
 \end{equation}

 $\forall a_1, a_2, a_3  \in X_0$. 
\end{itemize}

One possible way to find codes that can have an appropriate pre-orientation is to find classical codes where there exists a subset of checks and bits that one can consider that together form a graph inside the code (while \emph{not} being disjoint from the rest of the code). We make this precise in the following.

\begin{proposition} \label{prop:hidden-graph}
    Consider a complex $C(X,\Zn{2})$ such that there is a subset $Y_1\subset X_1$ such that $X_0$ and $Y_1$ form a graph with respect to the incidence relation given by the differential~$\delta$ of $C(X,\Zn{2})$. We denote the resulting graph by $\mathcal{G}.$
    Then we choose an orientation on $\mathcal{G}$ such that the sets of incoming and outgoing edges of each vertex have the same parity. Now, we define a pre-orientation $\mathcal{O}$ on $C(X,\Zn{2})$ via
    \begin{align*}
        \delta_{\op{in}}(v)&=\{y\in Y_1\mid y \text{ is an incoming edge incident to } v \text{ in the graph } \mathcal{G}\}\\
        \delta_{\op{out}}(v)&=\{y\in Y_1\mid y \text{ is an outgoing edge incident to } v \text{ in the graph } \mathcal{G}\}\\
        \delta_{\op{free}}(v)&= \delta(v)\setminus Y_1.
    \end{align*}
    Then the pre-orientation $\mathcal{O}$ fulfills conditions \Cref{eq:non-overlap} and \Cref{eq:leibnizcondition}. Hence, the cup product defined via the pre-orientation $\mathcal{O}$ is associative and fulfills the $\Lambda$-fold integrated Leibniz rule for all $\Lambda\geq 1.$
\end{proposition}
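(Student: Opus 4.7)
The strategy is to verify directly that the pre-orientation $\mathcal{O}$ induced by the oriented subgraph $\mathcal{G}$ satisfies both \Cref{eq:non-overlap} and the integrated Leibniz condition \Cref{eq:leibnizcondition} for every $\Lambda \geq 1$; associativity of the cup product then follows from the earlier remark that non-overlapping bits imply associativity. The non-overlap condition is essentially immediate: in the directed graph $\mathcal{G}$, each edge $e \in Y_1$ has a unique source and a unique target in $X_0$, so no edge can lie in $\din(v) \cap \din(v')$ or $\dout(v) \cap \dout(v')$ whenever $v \neq v'$.

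For the integrated Leibniz condition, fix $a_1, \dots, a_\Lambda \in X_0$, set
$$T_j = \Big(\bigcap_{i<j} \dout(a_i)\Big) \cap \delta(a_j) \cap \Big(\bigcap_{k>j} \din(a_k)\Big),$$
and reinterpret $\sum_j |T_j| \bmod 2$ as a count, running over edges $e \in X_1$, of the number of indices $j$ with $e \in T_j$; the claim is that this count is even for every $e$. A preliminary reduction eliminates $\delta_{\op{free}}$: for $\Lambda \geq 2$, any element of $T_j$ must already lie in $Y_1$, since either $j < \Lambda$ (forcing $e \in \din(a_{j+1}) \subseteq Y_1$) or $j = \Lambda \geq 2$ (forcing $e \in \dout(a_{j-1}) \subseteq Y_1$). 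The boundary case $\Lambda = 1$ reduces to $|\delta(a_1)| \equiv 0 \pmod 2$, which holds under the standing even-support assumption on checks. For an edge $e = (s,t) \in Y_1$, the constraints defining $T_j$ together with the non-overlap property force $a_i = s$ for $i < j$, $a_j \in \{s, t\}$, and $a_k = t$ for $k > j$; that is, the sequence $(a_1, \dots, a_\Lambda)$ must have the shape $(s, \dots, s, \ast, t, \dots, t)$ with $\ast$ in position $j$.

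A three-way case analysis on $(a_1, \dots, a_\Lambda)$ then concludes the argument. If the sequence is constant with value $v$, the contributing edges are the outgoing edges at $v$ (each at $j = \Lambda$) together with the incoming edges at $v$ (each at $j = 1$), giving a total of $|\din(v)| + |\dout(v)|$, which is even by the in/out-parity hypothesis on $\mathcal{G}$. If the sequence is a monotone two-value pattern $(\alpha, \dots, \alpha, \beta, \dots, \beta)$ with $\alpha \neq \beta$ and transition between positions $m$ and $m+1$, only edges from $\alpha$ to $\beta$ contribute, and each such edge contributes at exactly $j = m$ (with $a_j = \alpha$) and $j = m+1$ (with $a_j = \beta$), yielding a count of two. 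Any other pattern matches no edge and contributes nothing. The total is therefore even in every case. The combinatorial heart of the argument, and the main subtlety, is the pairing in case (ii): the two adjacent transition indices always occur together, which is the graph-theoretic shadow of the Leibniz rule on a $1$-complex and is what makes the construction work uniformly in $\Lambda$. A minor bookkeeping point to handle in passing is self-loops of $\mathcal{G}$, which must be assigned to exactly one of $\din(v)$ or $\dout(v)$ so that the constant-sequence count remains $|\din(v)| + |\dout(v)|$.
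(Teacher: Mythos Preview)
Your proof is correct, but it takes a different route from the paper's. The paper first notes that the non-overlap condition \eqref{eq:non-overlap} holds (as you do), and then invokes the reduction of the general Leibniz condition \eqref{eq:leibnizcondition} to the short list \eqref{eq:lambda31} that is valid once non-overlap is in place. That list is then dispatched in a single line: since $\din(a),\dout(a)\subset Y_1$ while $\delta_{\op{free}}(a')\subset X_1\setminus Y_1$, every intersection of the form $\din\cap\delta_{\op{free}}$ or $\dout\cap\delta_{\op{free}}$ is empty, so the last three conditions of \eqref{eq:lambda31} hold with all cardinalities equal to zero, and the first condition is exactly the in/out parity hypothesis on $\mathcal{G}$.

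You instead verify \eqref{eq:leibnizcondition} directly for every $\Lambda$, via an edge-by-edge parity count and a case split on the shape of the tuple $(a_1,\dots,a_\Lambda)$. Your argument is longer but fully self-contained and makes the underlying mechanism explicit: in the two-value case the contributions come in pairs at the adjacent transition indices $m$ and $m{+}1$, which is the graph-level shadow of the Leibniz identity. The paper's route is shorter only because the combinatorics has already been absorbed into the derivation of \eqref{eq:lambda31}; your argument would stand even without that prior reduction, and it also explains transparently why no condition involving $\delta_{\op{free}}$ survives for $\Lambda\geq 2$.
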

\begin{proof}
    The non-overlapping condition \Cref{eq:non-overlap} comes from the oriented graph $\mathcal{G}$. Then condition \Cref{eq:leibnizcondition} boils down to \Cref{eq:lambda31}. This is fulfilled by the assumption on the orientation on $\mathcal{G}$ and since $\delta_{\op{in}}(a)\cap \delta_{\op{free}}(a')\subset Y_1\cap (X_1\setminus Y_1)=\varnothing$ and $\delta_{\op{out}}(a)\cap \delta_{\op{free}}(a')\subset Y_1\cap (X_1\setminus Y_1)=\varnothing$, for all $a,a'\in X_0.$
\end{proof}

\subsection{Cup products operation on quantum codes}\label{sec:fromclassicaltoquantumcups}
Now consider a family of complexes/codes: 
$$C(X^{(i)},\Zn{2}): C^0(X^{(i)},\Zn{2})\stackrel{\delta^{(i)}}{\to} C^1(X^{(i)},\Zn{2})$$
for $i=1,\dots,\Lambda$, each equipped with a pre-orientation $\mathcal{O}^{(i)}$. 
We denote the tensor product complex by
\begin{align*}
C(\underline{X},\Zn{2})=C(X^{(1)},\Zn{2})\otimes\dots \otimes C(X^{(\Lambda)},\Zn{2}).
\end{align*}
which we put in correspondence with a quantum code with qubits in degree one. Using \Cref{eqn:tensor_products_cup} from \Cref{sec:tensor_products_cup}, the tensor product is also equipped with a cup product, and, by \Cref{eqn:integral_tensor_product} in \Cref{sec:integrals}, also with an $\Lambda$-dimensional integral.

Let us finally construct an operation which we will base the gates for quantum codes on:

\begin{definition}\label{def:quantumcopgate}
    Given a tensor product complex $C(\underline{X},\Zn{2})$, we define the operation~$\Psi_{\cup, \Lambda}$ to be:
    \begin{equation}
        \Psi_{\cup, \Lambda}: C^{1}(\underline{X},\Zn{2})\times \cdots \times C^{1}(\underline{X},\Zn{2})\to \Zn{2}, \quad  \Psi_{\cup, \Lambda}(\underline{x}_1,\ldots, \underline{x}_\Lambda)=\int_\Lambda \underline{x}_1\cup\cdots\cup\underline{x}_\Lambda
    \end{equation}
   where $\int_\Lambda$ is the integral of dimension $\Lambda$. 
\end{definition}

 Let us show that the integrated Leibniz rule on individual classical codes is sufficient for~$\Psi_{\cup, \Lambda}$ to be a cohomology invariant on a quantum code:

\begin{lemma} [Sufficient condition for a cohomology invariant] \label{lemma:sufficient_cup}
Assume that complexes $C(X^{(i)},\Zn{2})$ for $i = 1, \ldots, \Lambda$ are equipped with pre-orientations $\mathcal O^{(i)}$, and have a 1-dimensional integral each. In addition, assume that the integrated Leibniz rule (\cref{def:int-Leibniz}) is fulfilled for the $\Lambda$-fold cup product up to integration, namely:
\begin{equation}
    \int_1 \sum_{j=1}^\Lambda x_1\cup\ldots \cup \delta(x_j)\cup \ldots \cup x_\Lambda=0.
\end{equation}
where $x_k \in X_0^{(i)}$ for all $k = 1 \ldots \Lambda$. 
This is sufficient for the map $\Psi_{\cup, \Lambda}$ in \cref{def:quantumcopgate} to be a cohomology operation. 
\end{lemma}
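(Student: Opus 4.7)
The plan is to establish an ``integrated Leibniz rule'' on the tensor-product complex $C(\underline{X}, \Zn{2})$ as an intermediate statement, from which cohomology invariance of $\Psi_{\cup, \Lambda}$ follows in one step. Concretely, I aim to prove that for all cochains $\underline{x}_1, \ldots, \underline{x}_\Lambda \in C(\underline{X}, \Zn{2})$ with total degree $\sum_k |\underline{x}_k| = \Lambda - 1$,
\[
\int_\Lambda \sum_{k=1}^\Lambda \underline{x}_1 \cup \cdots \cup \delta \underline{x}_k \cup \cdots \cup \underline{x}_\Lambda = 0.
\]

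The key structural input is that both the cup product (\Cref{eqn:tensor_products_cup}; no signs arise modulo $2$) and the $\Lambda$-dimensional integral (\Cref{eqn:integral_tensor_product}) factor slot-by-slot across the $\Lambda$ tensor factors. Using multilinearity, I will reduce to basis cochains $\underline{x}_k = \bigotimes_{i=1}^\Lambda x_k^{(i)}$ and record the degrees as a $\Lambda \times \Lambda$ matrix $\epsilon_{k,i} \in \{0,1\}$. Since each $C(X^{(i)}, \Zn{2})$ is a two-term complex, any column of $\epsilon$ with sum exceeding $1$ kills the cup product; combined with $\sum_{k,i} \epsilon_{k,i} = \Lambda - 1$, this forces exactly one column (say $i_0$) to sum to $0$ and the rest to $1$. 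A short case check then shows that each Leibniz term survives integration only when $\delta$ acts in the distinguished slot $i_0$: applying $\delta$ in a slot of column-sum $1$ would raise that sum to $2$ and kill the cup. Summing over $k$ and using the factorization, what remains is
\[
\left(\int_1 \sum_{k=1}^\Lambda x_1^{(i_0)} \cup \cdots \cup \delta x_k^{(i_0)} \cup \cdots \cup x_\Lambda^{(i_0)}\right) \prod_{i \neq i_0} \int_1 \left(x_1^{(i)} \cup \cdots \cup x_\Lambda^{(i)}\right),
\]
whose first factor is precisely the hypothesized integrated Leibniz identity on the $i_0$-th classical code with all arguments in $X_0^{(i_0)}$, and hence vanishes.

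Given this intermediate identity, the lemma follows in short order. The map $\Psi_{\cup, \Lambda}$ is multilinear on $C^1(\underline{X}, \Zn{2})^\Lambda$ by construction, so to show it descends to $H^1(\underline{X}, \Zn{2})^\Lambda$ it suffices to prove slot-wise invariance under adding coboundaries. I will fix cocycles $\underline{x}_i \in Z^1(\underline{X}, \Zn{2})$ for $i \neq j$ and an arbitrary $\underline{y} \in C^0(\underline{X}, \Zn{2})$. Plugging $\underline{y}$ into slot $j$ and applying the tensor-product integrated Leibniz rule to the resulting tuple, every term with $k \neq j$ vanishes by $\delta \underline{x}_k = 0$, leaving $\int_\Lambda \underline{x}_1 \cup \cdots \cup \delta \underline{y} \cup \cdots \cup \underline{x}_\Lambda = 0$, which is exactly the required invariance.

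The hardest part will be the combinatorial bookkeeping in the intermediate step, in particular the claim that only the ``missing'' slot $i_0$ contributes to the surviving Leibniz terms. This hinges essentially on the two-term structure of each component classical code, so that only degrees $0$ and $1$ appear; extending the argument to longer component complexes would require a more delicate degree analysis.
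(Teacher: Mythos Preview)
Your proposal is correct and follows essentially the same approach as the paper's own proof. Both arguments first establish an integrated Leibniz rule on the tensor-product complex by factoring $\int_\Lambda$ and $\cup$ slot-wise across the $\Lambda$ tensor components, observe that only the column where all entries lie in degree $0$ can host the surviving $\delta$, apply the classical-code hypothesis there, and then deduce slot-wise coboundary invariance by specializing to cocycles in the remaining slots. Your presentation is slightly more explicit about the degree bookkeeping (via the matrix $\epsilon_{k,i}$ and the identification of the distinguished column $i_0$), whereas the paper carries out the same reduction by expanding $\underline{\delta}(\underline{a}_j)$ componentwise and swapping the $j$- and $m$-sums directly; but the logical content is the same.
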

\begin{proof}

\hl{Consider a simple tensor product chain $\underline{y} = \bigotimes_{s = 1}^\Lambda y^{(s)}$. By Leibniz rule for the coboundary operation, we can write} $$  \mathcolorbox{\underline{\delta y} = \sum_k \delta_k \underline y,   \quad \text{where} \quad  \delta_k \underline{y} = y^{(1)} \otimes \ldots \otimes \delta_k y^{(k)} \otimes \ldots \otimes y^{(\Lambda)}.}  $$ \hl{and we sum over $k$ from $1$ to $\Lambda$.}  \hl{The cup product of $\Lambda$ such chains can be written as}
\begin{equation}
    \mathcolorbox{\underline{y}_1\cup\ldots  \cup \underline{y}_\Lambda = \bigotimes_{s = 1}^\Lambda \left (  y_1^{(s)} \cup \ldots y_\Lambda^{(s)}  \right).} 
\end{equation}
\hl{Consider, for a fixed $k$:}
\begin{equation}
   \mathcolorbox{\sum_j \underline{y}_1\cup \ldots \cup \delta_k (y_j)\cup \ldots \cup \underline{y}_\Lambda = \sum_j \left (  y_1^{(k)} \cup \ldots \cup \delta_k y_j^{(k)} \cup \ldots y_\Lambda^{(k)}  \right) \bigotimes_{s \neq k} \left (  y_1^{(s)} \cup \ldots y_\Lambda^{(s)}  \right).}
\end{equation}
\hl{We can now integrate this expression using linearity and $\int_\Lambda = \int_1^{(1)}  \ldots  \int_1^{(\Lambda)}$:}
\begin{equation}
\begin{split}
   \mathcolorbox{\sum_j \int_\Lambda \underline{y}_1\cup \ldots \cup \delta_k (y_j)\cup \ldots \cup \underline{y}_\Lambda =}  &  \mathcolorbox{\int_1^{(k)}  \sum_j\left (  y_1^{(k)} \cup \ldots \cup \delta_k \underline y_j^{(k)} \cup \ldots y_\Lambda^{(k)}  \right)} \\
     \bigotimes_{s \neq k} &  \mathcolorbox{\int_1^{(s)}  \left (  y_1^{(s)} \cup \ldots y_\Lambda^{(s)}  \right) = 0}
   \end{split}
\end{equation}
\hl{obtaining zero because by integrated Leibniz rule for the constituent complexes in the product.  By linearity, it also immediately follows that} 
\begin{equation}
\begin{split}
     \mathcolorbox{\sum_j \int_\Lambda \underline{y}_1\cup \ldots \cup \underline{\delta} (y_j)\cup \ldots \cup \underline{y}_\Lambda = \sum_k \sum_j \int_\Lambda \underline{y}_1\cup \ldots \cup \delta_k (\underline y_j)\cup \ldots \cup \underline{y}_\Lambda  = 0.}
   \end{split}
\end{equation} 
  \hl{Additionally, by linearity of all operations involved in the expression above, it follows that the expression $ \sum_j \int_\Lambda \underline{x}_1\cup \ldots \cup \underline{\delta} (x_j)\cup \ldots \cup \underline{x}_\Lambda$ is zero for any choice of $x_i$, and not just pure tensor product ones. This shows the integrated Leibniz rule for the product complex. }

\hl{Now, let us show that $\Psi_{\cup, \Lambda}$ is a cohomology operation on the product complex $C(\underline{X}, \Zn{2})$. Set $\underline{z}_i \in Z^{1}(\underline{X}, \Zn{2})$ for $i = 1, \ldots, \Lambda$ and $\underline{a}_s \in C^0(\underline{X}, \Zn{2})$. Then, for  $\Psi_{\cup, \Lambda}$ to be a cohomology operation it is sufficient if}
\begin{equation}  \label{eq:psicondition1}
\begin{split}
     \mathcolorbox{\Psi_{\cup, \Lambda}(\underline{z}_1,\ldots, \underline{z}_s + \underline{\delta}  (\underline{a}_s), \ldots  \underline{z}_\Lambda)- \Psi_{\cup, \Lambda}(\underline{z}_1,\ldots, \underline{z}_s, \ldots  \underline{z}_\Lambda)=0.}
\end{split}
\end{equation}
\hl{Writing out the expression for $\Psi_{\cup, \Lambda}$ explicitly, we arrive at:}
\begin{equation}  
\begin{split}
     \mathcolorbox{\Psi_{\cup, \Lambda}(\underline{z}_1,\ldots, \underline{z}_s + \underline{\delta}  (\underline{a}_s), \ldots  \underline{z}_\Lambda)- \Psi_{\cup, \Lambda}(\underline{z}_1,\ldots, \underline{z}_s, \ldots )= \int_\Lambda \underline{z}_1 \cup \ldots \cup \underline{\delta}  (\underline{a}_s) \cup \ldots   \cup \underline{z}_\Lambda  }
\end{split}
\end{equation}
\hl{By integrated Leibniz rule for the product complex that we just showed, it follows that}
 \begin{equation}  
\begin{split}
      \mathcolorbox{\int_\Lambda \underline{z}_1 \cup \ldots \cup \underline{\delta}  (\underline{a}_s) \cup \ldots   \cup \underline{z}_\Lambda  = \sum_{k \neq s} \int_\Lambda \underline{z}_1 \cup \ldots \cup \underline{\delta}  \underline{z}_k \cup \ldots \cup  \underline{a}_s \cup \ldots   \cup \underline{z}_\Lambda}
\end{split}
\end{equation}
\hl{which is zero because $\delta z_k = 0$ for all $k$. }
\end{proof}

A similar result applies to balanced product codes. Namely, if there are groups~$G^{(i)}$ acting from the right on $C(X^{(i)},\Zn{2})$ and the left on $C(X^{(i+1)},\Zn{2})$ for each $i=1,\dots, \Lambda-1,$ such that the group actions preserve the bases, then we can consider the balanced product complex 

\begin{equation} \label{eq:balancedproductquantum}
C(\underline{X}/\underline{G},\Zn{2})=C(X^{(1)},\Zn{2})\otimes_{G^{(1)}}\dots \otimes_{G^{(\Lambda-1)}} C(X^{(\Lambda)},\Zn{2}).
\end{equation}

In case of balanced product codes, the pre-orientation will be the same as in \cref{def:pre-orientation}, except we additionally impose the condition that it is preserved under the group action. Namely, for the complex $C(X^{(i)}, \Zn{2})$, the pre-orientation has to be preserved under the right action of group $G^{(i)}$ and the left action of group $G^{(i-1)}$ (apart from $i = 1, \Lambda$ when it needs to be preserved under one group action). We will also assume that the group acts freely on the basis. 
Then, the the balanced product complex inherits the 
cup product (\cref{def:cup}) and the integral from $C(\underline{X}/\underline{G},\Zn{2})$. The operation $\Psi_{\cup, \Lambda}$ straightforwardly extends to the case of balanced product complex:

\begin{definition}\label{def:quantumcopgate_balanced}
     Given a balanced product complex $C(\underline{X}/\underline{G},\Zn{2})$ (\cref{eq:balancedproductquantum}), we define the operation $\Psi_{\cup, \Lambda}$ to be:
    \begin{equation}
        \Psi_{\cup, \Lambda}: C^{1}(\underline{X}/\underline{G},\Zn{2})\times \cdots \times C^{1}(\underline{X}/\underline{G},\Zn{2})\to \Zn{2}, \quad  \Psi_{\cup, \Lambda}(\underline{x}_1,\dots, \underline{x}_\Lambda)=\int_\Lambda \underline{x}_1\cup\cdots\cup\underline{x}_\Lambda
    \end{equation}
   where $\int_\Lambda$ is the integral of dimension $\Lambda$. 
\end{definition}

And we have a similar result to that in the tensor product case:

\begin{lemma} [Sufficient condition for cohomology invariant on balanced product codes] \label{lemma:sufficient_cup_balanced}
Under assumptions of \cref{lemma:sufficient_cup}, with an addition of pre-orientation being preserved under the group action, the operation $\Psi_{\cup, \Lambda}$ on a balanced product complex $C(\underline{X}/\underline{G},\Zn{2})$ (\cref{eq:balancedproductquantum}) is a cohomology invariant. 
\end{lemma}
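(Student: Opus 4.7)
The plan is to reduce to Lemma~5.1 via the standard identification of the balanced product with the $\underline{G}$-invariant subcomplex of the tensor product. Since each group $G^{(i)}$ acts freely on the basis, the averaging map provides an isomorphism $C(\underline{X}/\underline{G},\Zn{2}) \cong (C(\underline{X},\Zn{2}))^{\underline{G}}$ between the coinvariants (which define the balanced product) and the $\underline{G}$-invariant subcomplex of the tensor product. Under this identification, the sum-over-group cup product on coinvariants corresponds to the restriction of the tensor product cup product to $\underline{G}$-invariants (since the averaging map intertwines the two, using that the $\underline{G}$-action commutes with the cup product by the preservation hypothesis on the pre-orientations), and the $\Lambda$-dimensional integral on the balanced product is the restriction of the tensor product integral (since the integral is $\underline{G}$-invariant by assumption).

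With this identification in hand, a cocycle $\underline{z} \in Z^{1}(C(\underline{X}/\underline{G},\Zn{2}))$ and a coboundary $\underline{\delta}\,\underline{a}$ lift to a $\underline{G}$-invariant cocycle and a $\underline{G}$-invariant coboundary in the tensor product $C(\underline{X},\Zn{2})$. Evaluation of $\Psi_{\cup,\Lambda}$ on balanced product cochains coincides with its evaluation on their $\underline{G}$-invariant lifts in the tensor product. Applying Lemma~5.1 to the tensor product complex (whose hypotheses are all inherited from the balanced product setting: each constituent $C(X^{(i)},\Zn{2})$ carries a pre-orientation, a $1$-dimensional integral, and satisfies the integrated Leibniz rule for the $\Lambda$-fold cup product), we conclude that $\Psi_{\cup,\Lambda}$ is a cohomology invariant on the tensor product, and the restriction to the $\underline{G}$-invariant subcomplex gives the claim on the balanced product.

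The main obstacle is the short but technical verification that the averaging map intertwines the sum-over-group cup product on coinvariants with the restricted tensor product cup product on invariants. Once this is established (and it is essentially the same computation the paper carries out in deriving the coinvariant cup product formula in Section~3.3.1), no $|G|$-factor subtleties arise in the mod-$2$ accounting, because the basis of the balanced product is given by free $\underline{G}$-orbits and the balanced product integral counts orbits directly. The $\underline{G}$-invariance of the pre-orientations then guarantees that all combinatorial intersections controlling the integrated Leibniz rule (of the form appearing in \cref{eq:leibnizcondition}) are constant along orbits, so the descent of all relevant structure from the tensor product to the balanced product is completely transparent and the proof of Lemma~5.1 applies without modification.
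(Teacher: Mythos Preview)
Your overall strategy---reduce to Lemma~5.1 on the tensor product---is the right one and matches the spirit of the paper's one-line proof (``analogous to \cref{lemma:sufficient_cup}''). However, the specific mechanism you describe has a gap. You claim that ``evaluation of $\Psi_{\cup,\Lambda}$ on balanced product cochains coincides with its evaluation on their $\underline{G}$-invariant lifts in the tensor product,'' and that ``no $|G|$-factor subtleties arise.'' This is not quite right: while the averaging map does intertwine cup products (as the paper's Section~3.3.1 shows), the tensor-product integral of $\op{avg}([y]_{\underline G})=\sum_{g}gy$ equals $|\underline{G}|\cdot\int_{\text{bal}}[y]_{\underline G}$, since the integral is $\underline G$-invariant. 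Hence $\Psi_{\cup,\Lambda}^{\text{tensor}}$ on the invariant lifts equals $|\underline G|\cdot\Psi_{\cup,\Lambda}^{\text{bal}}$, and over $\Zn{2}$ this vanishes whenever $|\underline{G}|$ is even---which is the generic case of interest (e.g.\ bivariate bicycle codes). So cohomology invariance on the tensor side restricted to invariants tells you nothing about the balanced side.

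The easy fix avoids the averaging map entirely. Use the coinvariant cup product formula \eqref{eqn:cupproductcoinvariants} directly: for $[\underline z_i]_{\underline G}\in Z^1$ and $[\underline a_s]_{\underline G}\in C^0$,
\[
\Psi_{\cup,\Lambda}^{\text{bal}}\bigl([\underline z_1]_{\underline G},\dots,\underline\delta[\underline a_s]_{\underline G},\dots,[\underline z_\Lambda]_{\underline G}\bigr)
=\sum_{\underline g}\int_\Lambda^{\text{tensor}} \underline z_1\cup g_1\underline z_2\cup\dots\cup\underline\delta(g'\underline a_s)\cup\dots,
\]
where each summand is a tensor-product expression with $\underline G$-translated cocycles (still cocycles, since the action commutes with $\underline\delta$) and a translated coboundary. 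Each summand vanishes by \cref{lemma:sufficient_cup}, so the sum does too. This is presumably what the paper means by ``modding by the group action preserves pre-orientation and integrals, the proof is analogous''---the inherited structures let you rerun the computation term by term rather than pass through an averaging isomorphism that loses information mod~$2$.
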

\begin{proof}
Using the properties that modding by the group action preserves pre-orientation and integrals, the proof is analogous to that of \cref{lemma:sufficient_cup}.
\end{proof}

The goal of the rest of this paper is to explore the logical gates that one can obtain on quantum codes using operation $\Psi_{\cup,\Lambda}$ (which is a tensor product or a balanced product of classical codes). We will do this assuming that the sufficient conditions \cref{lemma:sufficient_cup} and \cref{lemma:sufficient_cup_balanced} (imposed on constituent classical codes) for $\Psi_{\cup,\Lambda}$ to be a cohomology invariant holds, and will also discuss the possible routes to obtain classical codes satisfying these conditions.

\begin{remark}
Given the definition of pre-orientation and the cup product, we also have the  \emph{necessary condition} for the operation $\Psi_{\cup,\Lambda}$ to be a cohomology invariant, namely:
    \begin{equation}
        \int_\Lambda \underline{z}_1 \cup \dots \cup \underline{\delta}  (\underline{a}_s) \cup \dots   \cup \underline{z}_\Lambda  =0 \ (\op{mod} 2),
    \end{equation}
which can be further expanded using the properties of the cup product on a tensor product (balanced product) complex.  This could be a good starting point for proving no-go theorems for fault-tolerant implementation of certain classes of logical gates in specific kinds of quantum codes. 
\end{remark}

\begin{remark}
    Finally, we would like to emphasize one could further generalize the approach for constructing a cohomology invariant on the product complex by considering general multilinear functions on $\Lambda$ arguments on constituent classical codes. For the classical code corresponding to the complex $C(X^{(i)}, \Zn{2})$, we would have
    $$ \psi^{(i)}_{\Lambda}: C^{p_1}(X^{(i)}, \Zn{2})  \times \dots \times C^{p_\Lambda}(X^{(i)}, \Zn{2}) \rightarrow \Zn{2}$$
    which only have nontrivial action when $\sum_j p_j = 1$ (and we assume it already embeds an integral).  Respective operation on the quantum complex would be defined as
    $$\Psi_{\Lambda}(\underline{x}_1,\dots, \underline{x}_\Lambda) = [\psi^{(1)}_\Lambda (x_1^{(1)}, \dots, x_1^{(\Lambda)}), \dots, \psi^{(\Lambda)}_\Lambda (x_\Lambda^{(1)}, \dots, x_\Lambda^{(\Lambda)})]$$
    where $\underline{x}_1 = [x_1^{(1)}, \dots, x_\Lambda^{(1)}]$. 
    The Leibniz rule will have the form $\sum_{j=1}^\Lambda \psi(x_1,\dots,\delta(x_j), \dots,x_\Lambda)=0$, and will similarly be a sufficient condition for $\Psi_\Lambda$ to be a cohomology invariant.
    
    One then could take $\psi_2 (x_1,x_2) = x_1 \cup x_2$, and assume some more general definition for~$\psi_{\Lambda \geq 3}$ as long as it produces $\Psi_\Lambda$ that is a cohomology invariant. 
\end{remark}

\subsection{Cup products in group algebra codes} \label{sec:group-algebra-codes}

As a first example, we will now apply our discussion to the (balanced products of) group algebra codes.  This entails bivariate bicycle codes~\cite{PhysRevA.88.012311,ibm_qldpc} as a special case.

Let $G$ be a finite Abelian\footnote{The results here also apply to non-Abelian groups if one is careful about left and right multiplication.} group, and denote by $A=\ftwo[G]$ the group algebra of~$G$ over~$\ftwo$.
A \emph{group algebra code} can be defined using any element $c\in \ftwo[G]$ via the complex
$$C(\ftwo[G],c): \ftwo[G] \stackrel{\delta}{\longrightarrow} \ftwo[G], \quad  \delta(g)=c\cdot g$$
for $g \in \ftwo[G]$, which arises from multiplication with $c$ and where both terms have  $G$ for the basis.

We can consider balanced products of group algebra codes which will be associated with quantum codes. If we have a family of elements $c^{(i)}\in \ftwo[G]$, we can consider the balanced product of $\Lambda$ group algebra codes:
\begin{equation}
    C(\ftwo[G],\underline{c})=C(\ftwo[G],c^{(1)})\otimes_G\cdots \otimes_G C(\ftwo[G],c^{(\Lambda)}).
\end{equation}
If each constituent chain complex has a cup product and an integral that is additionally preserved by the group action, then $C(\ftwo[G],\underline{c})$ is equipped with a cup product and an integral following the discussion in the previous subsection.

Below, we discuss how to satisfy the sufficient conditions for $\Psi_{\cup, \Lambda}$ being a cohomology invariant on products of $\Lambda$ group algebra codes for separately for $\Lambda = 2$ and $\Lambda \geq 3$ since the sufficient conditions are of different strengths for these cases.

\subsubsection{Product of $\Lambda = 2$ group algebra codes}

Here, we ask how to construct codes that admit pre-orientation preserved by the group action, and the integral, and also satisfy the sufficient condition, namely, the Leibniz condition (\cref{eq:lambda2}) for $\Lambda = 2$. 

We give one possible way to equip a group algebra code given by $C(\ftwo[G],c)$ with these properties. Assume that the element $c$ admits a decomposition
$$c=c_{\op{in}}+c_{\op{out}} + c_{\op{free}}$$
such that $c_{\op{in}}$, $c_{\op{out}}$ and $c_{\op{free}}$ have disjoint support. Then we can split the boundary map into $\delta = \din + \dout + \delta_{\op{free}}$ via
$$\din(g)=c_{\op{in}}\cdot g\text{, }\dout(g)=c_{\op{out}}\cdot g, \text{ and }\delta_{\op{free}}(g)=c_{\op{free}}\cdot g. $$  

Below we provide a sufficient (but not necessary) condition for the integrated Leibniz rule for $\Lambda = 2$ to hold:

\begin{proposition}\label{prop:orientationsforgroupalgebracodes}
 Consider a classical code given by complex $C(\ftwo[G],c)$ with a boundary map as discussed above and an integral. Denote by $\op{inv}:\ftwo[G]\to\ftwo[G]$ the involution defined by $\op{inv}(g)=g^{-1}.$  
If \begin{itemize}
    \item [(1)] $|c_{\op{in}}| = 1$,
    \item [(2)] $c_{\op{in}}=\op{inv}(c_{\op{out}})$,
    \item [(3)] $c_{\op{free}}= \op{inv}( c_{\op{free}})$, and
    \item [(4)] The group $G$ acts freely on the basis of $\ftwo[G]$.
\end{itemize}
Then 
\begin{itemize}
    \item [(a)]  The group action preserves the splitting of the boundary map, namely 
    \begin{equation}
    \begin{split}
          &\delta _{\op{in}} (g \cdot h) = g \cdot \delta_{\op{in}} (h);\\
          &\delta _{\op{out}} (g \cdot h) = g \cdot \delta_{\op{out}}  (h);\\
          &\delta _{\op{free}} (g \cdot h) = g \cdot \delta_{\op{free}} (h);
    \end{split}
    \end{equation}
    \item [(b)] The splitting of the boundary map defines a pre-orientation that in addition, satisfies the non-overlapping bits condition. Thus, the chain complex $C(\ftwo[G],c)$ admits an associative cup product; 
    \item [(c)] Integrated Leibniz rule for $\Lambda = 2$ is fulfilled on the basis of \Cref{eq:lambda2}  being satisfied.
\end{itemize}
\end{proposition}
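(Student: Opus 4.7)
The plan is to prove the three parts in order, with part (a) being essentially immediate from the abelianness of $G$, part (b) following from the singleton nature of $c_{\op{in}}$, and part (c) requiring the inversion symmetry of $c_{\op{free}}$ to match up the two remaining cross-intersection indicators.

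For part (a), I would simply note that each of $\delta_\op{in}$, $\delta_\op{out}$, $\delta_\op{free}$ is multiplication by a fixed element of $\ftwo[G]$, and since $G$ is abelian, these multiplications commute with multiplication by any $h\in G$. So the splitting is automatically $G$-equivariant. For part (b), the three components $c_\op{in},c_\op{out},c_\op{free}$ are assumed to have disjoint supports, so the resulting partition of $\delta(g)$ is a genuine pre-orientation in the sense of \Cref{def:pre-orientation}. To verify the non-overlapping bits condition \Cref{eq:non-overlap}, I would write $c_\op{in}=h_0$ (a singleton by hypothesis (1)), so that $\din(g)=\{h_0 g\}$ is a singleton for each $g\in G$; distinct $g\neq g'$ give $h_0 g\neq h_0 g'$, hence $\din(g)\cap\din(g')=\varnothing$. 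By hypothesis (2), $c_\op{out}=\{h_0^{-1}\}$, and the same argument gives disjointness of $\dout$. Associativity of the cup product then follows from the earlier statement that the non-overlapping bits condition implies associativity.

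For part (c), I would check the two lines of \Cref{eq:lambda2} separately. The first line reads $|\din(a_1)|+|\dout(a_1)|=1+1\equiv 0\pmod 2$, which is immediate since both $c_\op{in}$ and $c_\op{out}$ have size one. For the second line, the $\din\cap\din$ and $\dout\cap\dout$ contributions vanish by the non-overlapping bits condition established in (b), so it suffices to show
\begin{equation*}
|\dout(a_1)\cap\delta_\op{free}(a_2)|+|\delta_\op{free}(a_1)\cap\din(a_2)|\equiv 0\pmod 2.
\end{equation*}
Writing $c_\op{in}=h_0$, I would compute $\dout(a_1)=\{h_0^{-1}a_1\}$ and $\delta_\op{free}(a_2)=c_\op{free}\cdot a_2$, so the first indicator is $[a_1 a_2^{-1}h_0^{-1}\in c_\op{free}]$. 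Similarly, the second indicator is $[h_0 a_2 a_1^{-1}\in c_\op{free}]$. Since $(a_1 a_2^{-1}h_0^{-1})^{-1}=h_0 a_2 a_1^{-1}$ by abelianness, the hypothesis (3) that $c_\op{free}=\op{inv}(c_\op{free})$ shows these two indicators are equal, so their sum is $0\pmod 2$.

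The main obstacle is really only the bookkeeping in part (c): ensuring the group elements are manipulated correctly and that the involution hypothesis is used in precisely the right spot to identify the two cross-intersections. Hypothesis (4) that $G$ acts freely on the basis is used implicitly throughout to ensure that singletons like $\{h_0 g\}$ really do stay singletons and that the pre-orientation obtained is compatible with the balanced product framework from \Cref{sec:fromclassicaltoquantumcups}.
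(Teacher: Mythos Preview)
Your proof is correct and follows essentially the same approach as the paper: both establish (a) via commutativity of multiplication in $\ftwo[G]$, (b) via the singleton nature of $c_{\op{in}}$ and $c_{\op{out}}$, and (c) by showing the two cross-terms $|\dout(a_1)\cap\delta_{\op{free}}(a_2)|$ and $|\delta_{\op{free}}(a_1)\cap\din(a_2)|$ coincide using the involution hypothesis~(3). The only stylistic difference is that you reduce (c) to equality of two $\{0,1\}$-valued indicators (exploiting that $\dout(a_1)$ and $\din(a_2)$ are singletons), whereas the paper carries out the same identification at the level of intersection cardinalities by applying $\op{inv}$ to the whole set $c_{\op{out}}g\cap c_{\op{free}}h$ and then translating; the content is identical.
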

\begin{proof}

Because $c_{\op{in}}$, $c_{\op{out}}$ and $c_{\op{free}}$ have disjoint support and the group action is free on the basis of $\ftwo[G]$ (which we choose to be $G$), this directly yields (a).

Next, to satisfy, (b), we need to have $\din (g) \cap \din (h)  = \varnothing$ and $\dout (g) \cap \dout (h)  = \varnothing$ for $g \neq h \in C^0(\ftwo[G],c)$. Indeed, $\din (g) + \din(h) = c_{\op{in}} g +c_{\op{in}} h = c_{\op{in}} (g + h) \neq 0 $ unless $g = h$, and similarly for $c_{\op{out}}$.

The first condition from \Cref{eq:lambda2}, namely that for any $g \in G$ $\left | \din(g)\right | +\left | \dout(g)\right |   = 0 \ (  \op{mod} 2)$, it true because $|c_{\op{in}}| = |c_{\op{out}}|=1$.
Let us now check the second condition, namely, $\forall g \neq h \in G$
 \begin{equation*}
    \left | \dout(g) \cap \delta_{\op{free}}(h)\right | +\left | \delta_{\op{free}}(g) \cap \din(h) \right |   = 0 \ (  \op{mod} 2). 
 \end{equation*}
Consider \begin{align*}
    \left | \dout(g) \cap \delta_{\op{free}}(h)\right | &= \left | c_{\op{out}}g  \cap c_{\op{free}}h\right | = \left | \op{inv}(c_{\op{out}}) g ^{-1}   \cap \op{inv}(c_{\op{free}}) h^{-1} \right | = \left | c_{\op{in}} g ^{-1}   \cap c_{\op{free}}  h^{-1} \right |
    \\
    &=\left | c_{\op{in}} h   \cap c_{\op{free}}  g \right | = \left | \delta_{\op{free}}(g) \cap \din(h) \right |,
\end{align*}   
which completes the proof.

\end{proof}

For $\Lambda=2,$ by \cite[Appendix A]{eberhardt2024logicaloperatorsfoldtransversalgates} the balanced product complex $C(\ftwo[G],\underline{c})$ can be identified with

\[\begin{tikzcd}[column sep=40pt]
	{\ftwo[G]} & {\ftwo[G]^2} & {\ftwo[G]}
	\arrow["\left(\begin{smallmatrix} c^{(1)} \\ c^{(2)} \end{smallmatrix}\right)", from=1-1, to=1-2]
	\arrow["{(c^{(2)},c^{(1)})}", from=1-2, to=1-3]
\end{tikzcd}\]
and is an example of a two-block group algebra code, see \cite{linQuantumTwoblockGroup2023}. 
If moreover $G=\Zn{\ell}\times \Zn{m}$, then the code is called a \emph{bivariate bicycle code}, see \cite{PhysRevA.88.012311,ibm_qldpc}.

Let us compute an explicit formula for the cup product of two $1$-cochains. The space of $1$-cochains can be described as follows
\begin{align*}C^1(\ftwo[G],\underline{c})&= C^0(\ftwo[G],c^{(1)})\otimes_G C^1(\ftwo[G],c^{(2)})\oplus C^1(\ftwo[G],c^{(1)})\otimes_G C^0(\ftwo[G],c^{(2)})\\&=(\ftwo[G]\otimes_{G}\ftwo[G])_v\oplus (\ftwo[G]\otimes_{G}\ftwo[G])_h\\
&\cong\ftwo[G]_v\oplus \ftwo[G]_h.
\end{align*}
Here, the last isomorphism comes from the diagonal group action and the multiplication map $\ftwo[G]\otimes_{G}\ftwo[G]\to \ftwo[G]$, $[p,q]_G\mapsto pq$. 
In the following we translate back and forth with respect to this isomorphism. 
Moreover, we label the two copies of $\ftwo[G]$ by indices $v,h$ to differentiate them from each other. 
Now, let $p_h\in \ftwo[G]_h$ and $q_v\in \ftwo[G]_v$ be the basis vectors, where $p,q \in G$. 
We write $q_v=[1,q]_G$ and $p_h=[p,1]_G$ where $1$ is a check (an element of $C^0$) and $p$ and $q$ are bits (elements of $C^1$). 
Then by \Cref{eqn:cupproductbalancedproduct} the cup product can be computed as follows
\begin{align*}
    q_v\cup p_h&=[1,q]_G\cup [p,1]_G=\sum_{g\in G}[1\cup pg^{-1}, q\cup g]
=\sum_{g\in G}\delta_{pg^{-1}\in c^{(1)}_{\op{out}}}\,\delta_{q\in gc^{(2)}_{\op{in}}}[ pg^{-1}, q] \\
&=\sum_{h\in G}\delta_{h\in p^{-1}c^{(1)}_{\op{out}}}\,\delta_{h\in q^{-1}c^{(2)}_{\op{in}}}phq\\
&=\sum_{h\in \left (p^{-1}c^{(1)}_{\op{out}}\cap q^{-1}c^{(2)}_{\op{in}} \right)}phq.
\end{align*}

\begin{example}
    There are indeed interesting bivariate bicycle codes that admit an orientation and hence a cup product. A quick computer search gives a bivariate bicycle code with parameters $[[144, 8, 12]]$ fulfilling all the assumptions of \Cref{prop:orientationsforgroupalgebracodes} where 
    \begin{align*}
        &c^{(1)}_{\op{in}}= x^3y^2, \ c^{(1)}_{\op{out}}=x^{-3}y^{-2} , \ c^{(1)}_{\op{rest}} = x^2y+ x^{-2}y^{-1}
        \\
        &c^{(2)}_{\op{in}}=  x,  \ c^{(2)}_{\op{out}}=x^{-1} , \ c^{(2)}_{\op{rest}} = xy + x^{-1}y^{-1}
    \end{align*}
    and $G=\Zn{6}\times \Zn{12}$ with generators $x\in \Zn{6}$ and $y\in \Zn{12}$. Note that this is not a unique splitting of $c^{(i)}=c^{(i)}_{\op{in}}+c^{(i)}_{\op{out}} + c^{(i)}_{\op{free}}$, and different choices would subsequently lead to a different logical action. 
\end{example}

\subsubsection{Product of $\Lambda \geq 3$ group algebra codes}

For a classical group algebra code to satisfy the integrated Leibniz rule for $\Lambda \geq 3$, one could search for a code with a pre-orientation that satisfies \Cref{eq:lambda3}.
For now, we conjecture that such classical codes exist, and show the expression for the cup product on a balanced product of them.

For $\Lambda=3$,  the balanced product $C(\ftwo[G],\underline{c})$ yields the complex

\newcommand{\matrixAx}{\left(\begin{smallmatrix}c^{(h)} & c^{(v)} & c^{(d)}\end{smallmatrix}\right)}
\newcommand{\matrixBx}{\left(\begin{smallmatrix}c^{(d)} & 0 & c^{(v)} \\ 0 & c^{(d)} & c^{(h)} \\ c^{(h)} & c^{(v)} & 0\end{smallmatrix}\right)}
\newcommand{\matrixCx}{\left(\begin{smallmatrix}c^{(v)} \\ c^{(h)} \\ c^{(d)}\end{smallmatrix}\right)}

\begin{equation*}
    \begin{tikzcd}[ column sep =60pt]
	\ftwo[G] & { \substack{\ \ \ftwo[G]_v \\ \oplus \ftwo[G]_h \\ \oplus \ftwo[G]_d } } & { \substack{\ \ \ftwo[G]_{vd} \\ \oplus \ftwo[G]_{hd} \\ \oplus \ftwo[G]_{vh} } } & \ftwo[G],
	\arrow["\matrixCx", from=1-1, to=1-2]
	\arrow["\matrixBx", from=1-2, to=1-3]
	\arrow["\matrixAx", from=1-3, to=1-4]
\end{tikzcd}
\end{equation*}
Assuming an Abelian group $G$, we have $[p,q,r]_G=[p\cdot g^{-1} ,g \cdot q,r]_G = [p , h^{-1}\cdot q,r \cdot h]_G$, where $p,q,r \in G$.  Then, the multiplication map can be written as $\ftwo[G]\otimes_{G}\ftwo[G]\otimes_{G}\ftwo[G] \to \ftwo[G]$, $[p,q,r]_G\mapsto pqr$.
Let $p_{v} = [p,1,1]_G \in \ftwo[G]_{v}$, $q_{h} = [1,q,1]_G \in \ftwo[G]_{h}$ and $r_d = [1,1,r]_G \in \ftwo[G]_{d}$ be the basis vectors. Here, all 1 elements are checks (elements of $C^0$  of classical codes) and $p,q,r$ are bits (elements of $C^1$ of classical codes). \hl{Using Eq.~(\ref{eq:triple-cup}) the formula for the cup product can be written as}
\begin{align*}
     p_v\cup q_h \cup r_d&=\sum_{g_v,h_v\in G}\sum_{g_h,h_h\in G}\sum_{g_d,h_d\in G} [pg_v^{-1},g_vh_v^{-1},h_v]\cup [g_h^{-1},g_hh_h^{-1}q,h_h] \cup [g_d^{-1},g_dh_d^{-1},h_dr]\\
     &=\sum_{g_v,h_v\in G}\sum_{g_h,h_h\in G}\sum_{g_d,h_d\in G} [pg_v^{-1}\cup g_h^{-1}\cup g_d^{-1},g_vh_v^{-1}\cup g_hh_h^{-1}q\cup  g_dh_d^{-1},h_v\cup h_h \cup h_dr]
\end{align*}
\hl{consider the first and the third tensor factors. From decomposition $c_{\alpha}=c_{\alpha,\op{in}}+c_{\alpha,\op{out}} + c_{\alpha,\op{free}}$ for each $\alpha = h,v,d$, we see that the non-overlapping bits condition is always satisfied (see also the proof of Prop.~\ref{prop:orientationsforgroupalgebracodes}, and therefore, the cup product is associative. 

Using associativity, we see that the expression above is nonzero only if $g_h = g_d$ and $h_h = h_v$. Then, the first tensor factor becomes $pg_v^{-1}\cup g_d^{-1}\cup g_d^{-1} = pg_v^{-1}\cup g_d^{-1}$ and the last tensor factor, similarly, becomes $h_v \cup h_dr$. Upon further simplifying the expression, we obtain}
\begin{align*}
    p_v\cup q_h \cup r_d
     &=\sum_{g_v,h_v\in G}\sum_{g_d,h_d\in G} (g_v,h_d)\cdot[p\cup g_vg_d^{-1},h_v^{-1}h_d\cup g_v^{-1}g_dh_v^{-1}h_dq\cup  g_dg_v^{-1}, h_vh_d^{-1}\cup r]\\
     &=\sum_{g',h'\in G}\sum_{g,h\in G} (g',h')\cdot[p\cup g,h\cup g^{-1}hq\cup  g^{-1}, h^{-1}\cup r]\\
     &=\sum_{g,h\in G} [p\cup g,h\cup g^{-1}hq\cup  g^{-1}, h^{-1}\cup r]_{G\times G}
\end{align*}
\hl{We can now explicitly evaluate the cup products in this expression. We obtain $p\cup g = p$ if $p \in g c_{in}^{(v)}$ and zero otherwise; $h^{-1} \cup r = r$ if $r h \in c^{(d)}_{out}$ and zero otherwise; and, finally $h \cup g^{-1} h q \cup g^{-1} = g^{-1}h q$ if $hq \in c^{(h)}_{in}$ and $q \in g c^{(h)}_{out}$ and zero otherwise. Collecting everything together and simplifying the multiplication map, we obtain} 
\begin{align*}
    p_v\cup q_h \cup r_d =\sum_{\tiny{\substack{
     g\in \left (p^{-1}c^{(v)}_{in}\cap q^{-1}c^{(h)}_{out} \right)\\
     h\in \left( q^{-1}c^{(h)}_{in}\cap r^{-1}c^{(d)}_{out} \right)}
     }} pgqhr.
\end{align*}


\begin{remark}
    \hl{After the current paper appeared online, several independent  works \cite{menon2025magictricyclesefficientmagic,ryan,2508.08191,li2025transversaldimensionjumpproduct} found interesting  non-trivial examples of group algebra codes that admit non-trivial non-Clifford gates using the formalism presented here.}
\end{remark}


\subsection{Sipser--Spielman codes}  \label{sec:sipser-spielman}
We can also consider tensor and balanced products of Sipser--Spielman codes~\cite{sipser1996expander}. 
Sipser--Spielman codes are constructed by adding a local code to a graph code, where the graph is expanding\footnote{This is a special case of a code construction by Tanner~\cite{tanner1981recursive}.}.
This code construction yields good classical LDPC codes using, for example, the Ramanujan graphs constructed by Margulis~\cite{margulis}.
Balanced products of Sipser--Spielman codes were constructed in \cite{balancedproductcodes,qldpc_review} and conjectured to yield good qLDPC codes. This was subsequently proven in Ref.~\cite{PK2022}.

Let us show several particular examples of constructing Sipser-Spielman codes with integrated cup products in this subsection. Beyond these examples, it should be possible to prove the existence of good classical LDPC codes with nontrivial integrated cup products that satisfy the sufficient conditions in \cref{prop:integrated-leibniz-conditions} by mapping the constraint satisfaction onto a problem of the colorability of the Tanner graph of the code. Moreover, since these conditions are sufficient and not just necessary, there should be even more flexibility for achieving this.

The construction starts with an $s$-regular graph $X$, which we interpret as a 1D simplicial complex here. 
For a vertex $v\in X_0,$ denote by $X_v\subset X_1$ the set of edges which are incident to $v$ in $X.$
The cochain complex associated to $X$ is denoted by 
$$C^0(X,\Zn{2})\xrightarrow{\delta_X} C^1(X,\Zn{2})$$
and the coboundary operator maps a vertex $v\in X_0$ to the sum of incident edges
$$\delta_X(v)=\sum_{e\in X_v} e.$$
Choosing an orientation of each edge, we obtain a cup product on this complex, see \Cref{sec:cupproduct}. 
Moreover, if we set $\int\!e=1$ for each edge, then $\int \delta_X(v)=s$. This is a $1$-integral on $C(X,\Zn{2})$ if and only if $s$ is even (cf.~\Cref{sec:integrals}), which we assume from now.

We now explain how to enhance this by a local system, following ideas of Sipser--Spielmann, where the interpretation as local systems on graphs is due to \cite{meshulam2018graph}, see also \cite[Section~III.A]{balancedproductcodes}.
Fix a based complex, which we refer to as a local system or local code,  
$$C^0(L,\Zn{2}) \xrightarrow{\delta_L}  C^1(L,\Zn{2}) $$
with bases $L_0$ in $C^0(L,\Zn{2})$ and $L_1$ in $C^1(L,\Zn{2})$, such that $|L_1|=s.$  Assume that for each vertex $v$ we are given a bijection $\phi_v:L_1\to X_v$. By abuse of notation, we also denote the obvious induced map by $\phi_v: C^1(L,\Zn{2})\to C^1(X,\Zn{2})$.
Using this, we define a new complex,
$$C^0(X,\Zn{2})\otimes L^0\xrightarrow{\delta} C^1(X,\Zn{2})$$
with coboundary
$$\delta(v\otimes c)=\phi_v(\delta_L(c)).$$
The complex has a natural basis $X^L=X_0\times L_0\uplus X_1,$ and we abbreviate the associated complex by $C(X^L,\Zn{2}).$

\subsubsection{The case $\Lambda=2$}

Let us show one example of how to define the pre-orientation on the complex $C(X^L,\Zn{2})$ that satisfies conditions of \cref{prop:integrated-leibniz-conditions} for $\Lambda=2$. 

First, we notice that the complex $C(X,\Zn{2})$ on the graph $X$ always has a pre-orientation induced by an arbitrary choice of orientation on the graph. Assign the edges that come into the check $x$ to be in $\din(x)$, and the outgoing edges to be in $\dout(x)$, which gives $\delta (x) = \din(x) \uplus \dout (x)$. For example, if $X=\op{Cay}^{\ell/r}(G)$ is the left/right Cayley graph of group $G$, one can always choose one-half of generators to correspond to $\din$ for every bit and the rest to correspond to $\dout$ (note $s$ is always assumed to be even).

\begin{proposition}\label{thm:constructionorientationsipserspielman}
   We assume that there exists an orientation on the graph $X$ and a pre-orientation on the complex $C(X^L,\Zn{2})$  such that in addition, the following assumptions hold:
    \begin{enumerate}
        \item There is a partition $L_1=L_1^{\op{in}}\uplus L_1^{\op{out}}$ such that $\phi_v(L_1^{\op{in}})$ and $\phi_v(L_1^{\op{out}})$ are contained in the incoming and outgoing edges, correspondingly,  with respect to the orientation on $X$.
        \item The pre-orientation $\delta_{L,\op{in}}(c)=\delta_L(c)\cap L_1^{\op{in}}$, $\delta_{L,\op{out}}(c)=\delta_L(c)\cap L_1^{\op{out}}$ on the local code satisfies the conditions of \cref{prop:integrated-leibniz-conditions}  for $ \Lambda = 2$ on the local code.
    \end{enumerate}
    Then the pre-orientation defined by
    \begin{equation}
        \begin{split}
            &\din(v \otimes c)=\phi_v(\delta_{L}(c)\cap L_1^{\op{in}}),
            \\
            &\dout(v \otimes c)=\phi_v(\delta_{L}(c)\cap L_1^{\op{out}}).
        \end{split}
    \end{equation}
    satisfies the conditions of \cref{prop:integrated-leibniz-conditions} for the complex $C(X^L,\Zn{2})$ for $\Lambda = 2$.
\end{proposition}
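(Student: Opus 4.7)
The plan is to verify the two parts of equation (5.3) from Proposition 5.1 directly, using the product structure of the basis $X^L = (X_0 \times L_0) \uplus X_1$ and the fact that we have defined $\din$ and $\dout$ on $v \otimes c$ via the bijection $\phi_v$. The crucial preliminary observation is that because $L_1 = L_1^{\op{in}} \uplus L_1^{\op{out}}$ is a genuine partition, we have
\[
\delta_{\op{free}}(v \otimes c) = \phi_v(\delta_L(c)) \setminus \bigl(\phi_v(\delta_L(c) \cap L_1^{\op{in}}) \uplus \phi_v(\delta_L(c) \cap L_1^{\op{out}})\bigr) = \varnothing
\]
for every $v \otimes c \in X_0 \times L_0$. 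Hence every term in (5.3) that contains a $\delta_{\op{free}}$ factor vanishes, and the two required identities simplify considerably.

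For the first identity, using that $\phi_v$ is a bijection and that $L_1^{\op{in}}, L_1^{\op{out}}$ partition $L_1$, I would compute
\[
|\din(v \otimes c)| + |\dout(v \otimes c)| = |\delta_L(c) \cap L_1^{\op{in}}| + |\delta_L(c) \cap L_1^{\op{out}}| = |\delta_L(c)|,
\]
and invoke the first line of (5.3) for the local code (assumption~2) to conclude this is even.

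For the second identity I would split into two cases. If $v_1 \neq v_2$, then by assumption~1 the set $\din(v_i \otimes c_i)$ consists entirely of edges pointing \emph{into} $v_i$ under the orientation of $X$; but each oriented edge has a unique target, so $\din(v_1 \otimes c_1) \cap \din(v_2 \otimes c_2) = \varnothing$, and the same argument with sources gives $\dout(v_1 \otimes c_1) \cap \dout(v_2 \otimes c_2) = \varnothing$. Combined with the vanishing of the $\delta_{\op{free}}$ terms, all four summands are zero. If instead $v_1 = v_2 = v$ and $c_1 \neq c_2$, the bijectivity of $\phi_v$ yields
\[
|\din(v \otimes c_1) \cap \din(v \otimes c_2)| + |\dout(v \otimes c_1) \cap \dout(v \otimes c_2)| = |\delta_{L,\op{in}}(c_1) \cap \delta_{L,\op{in}}(c_2)| + |\delta_{L,\op{out}}(c_1) \cap \delta_{L,\op{out}}(c_2)|,
\]
which is exactly the quantity that assumption~2 guarantees to be even.

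I do not anticipate a substantial obstacle: the argument is essentially a bookkeeping reduction from the global complex $C(X^L, \Zn{2})$ to the local code $L$ via $\phi_v$, with the graph orientation handling the cross-vertex terms. The only subtlety worth double-checking is that the case $v_1 = v_2$, $c_1 = c_2$ does not need to be considered since (5.3) only constrains $a_1 \neq a_2$, and that $v_1 \neq v_2$ together with $c_1 = c_2$ is still covered by the orientation argument (it does not rely on $c_1, c_2$ at all).
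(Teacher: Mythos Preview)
Your proposal is correct and is precisely the direct verification the paper alludes to with its one-line proof (``Follows directly from verifying the conditions of \cref{prop:integrated-leibniz-conditions} given the assumptions''). You have simply spelled out the bookkeeping: the vanishing of $\delta_{\op{free}}$, the use of the graph orientation to kill cross-vertex intersections, and the reduction to the local code via the bijection $\phi_v$ in the same-vertex case.
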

\begin{proof}
    Follows directly from verifying the conditions of \cref{prop:integrated-leibniz-conditions} given the assumptions.
\end{proof}

\begin{example}\label{ex:cayleygraphnicegenerators}
Let $G$ be a finite group and choose its generating set $S$, which we decompose as $S=T\uplus T^{-1}$. Denote by $X=\op{Cay}^{\ell}(G,S)$ the (left) Cayley graph. 
The graph has a natural orientation via the decomposition $S=T\uplus T^{-1}$ where edges coming from an element in $T$ are `outgoing' and edges coming from an element in $T^{-1}$ are `incoming'. We assign no edges to be `free'. 
For example, if for each check $c$ in the local code $L$ we have $t\in \delta_L(c)$ implies $t^{-1}\in \delta_L(c)$ then the conditions of \Cref{thm:constructionorientationsipserspielman} are fulfilled and there exists a $\Lambda = 2$-fold cup product. 

Now assume that we are given such local codes $L$ and $L'$ and orientations for both the left and right Cayley graphs $\op{Cay}^{\ell}(G,S)$ and $\op{Cay}^{r}(G,S)$. 
Then the balanced product
$$C(\op{Cay}^{\ell}(G,S)^{L}\times_G\op{Cay}^{r}(G,S)^{L'},\Zn{2})$$
fulfills the conditions of \cref{lemma:sufficient_cup_balanced}.
\end{example}

\subsubsection{The case $\Lambda\geq 2$} 

Let us show an example of a way to fulfill the integrated Leibniz rule for $\Lambda\geq 2$ on a Sipser--Spielman code. The idea is based on \cref{prop:hidden-graph}: if one can find a graph structure ``inside'' a complex, one can equip this complex with a cup product that satisfies the integrated Leibniz rule, for all $\Lambda \geq 2$.

As in \Cref{ex:cayleygraphnicegenerators}, we start with a group $G$ and its generating set $S$ that we split as $S=T\uplus T^{-1}$, and as well as a local code $C(L,\Zn{2})$ and bijection $\phi: L_1\to S$ between the bits of the local code and the set of edges (labeled by $S$) incident to each group element $g\in G$ in the Cayley graph $\op{Cay}(G,S) \equiv X.$

Now, we fix an element $\hat{t}\in T$ and make the following assumption. We assume that there exists a check $\hat{c}\in L_0$ in the local code such that $\phi(\delta_L(\hat c))=\{\hat{t},\hat{t}^{-1}\}$ and such that for all other checks  $c\neq \hat c\in L_0$ we have $\phi(\delta_L(c))\cap\{\hat{t},\hat{t}^{-1}\}=\varnothing$.
Then, we define a pre-orientation on the local code $L$ via
\begin{equation} \label{eq:pre-orientation-SS-3}
\begin{split}
    \delta_{L,\op{in}}(\hat{c}) &= \phi^{-1} \left (\{\hat{t}\} \right ),\\
    \delta_{L,\op{out}}(\hat{c}) &= \phi^{-1}  \left(\{\hat{t}^{-1}\} \right ),\\
    \delta_{L,\op{free}}(\hat{c}) &= \delta_L(\hat{c})\setminus \phi^{-1}  \left(\{\hat{t},\hat{t}^{-1}\} \right )\text{ and}\\
    \delta_L(c) &\equiv \delta_{L,\op{free}}(c)\text{ for all }c\neq \hat{c}.
    \end{split}
\end{equation}

This pre-orientation induces a pre-orientation on the Sipser-Spielman code according to: 
\begin{equation*}
        \begin{split}
            \din(v \otimes c)&=\phi_v( \delta_{L,\op{in}}(c)),
            \\
            \dout(v \otimes c)&=\phi_v( \delta_{L,\op{out}}(c)), \text{ and}
             \\
            \delta_{\op{free}}(v \otimes c)&=\phi_v( \delta_{L,\op{free}}(c)),
        \end{split}
    \end{equation*}

This is similar to \cref{prop:hidden-graph}, except here we explicitly formulate the ``hidden'' graph to be a cycle that is connected by elements $t$ through the check $\hat c$ on each vertex.

\begin{proposition} \label{sslarger3}
    The Sipser-Spielman code defined above
    satisfies conditions in \cref{prop:integrated-leibniz-conditions}  for all $\Lambda\geq 2.$
\end{proposition}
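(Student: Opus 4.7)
The plan is to reduce the claim to a direct application of \Cref{prop:hidden-graph}, by exhibiting a ``hidden graph'' structure $\mathcal{G}$ inside the complex $C(X^L,\Zn{2})$ whose natural orientation reproduces the pre-orientation defined in \Cref{eq:pre-orientation-SS-3}. Specifically, I would take the subset $Y_1\subset X_1$ to be the collection of bits corresponding to $\hat t$-labeled edges of the Cayley graph, i.e.\ $Y_1=\{\phi_v(\phi^{-1}(\hat t))\mid v\in G\}$. The vertex set of $\mathcal{G}$ is all of $X_0=G\times L_0$.

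The first step is to verify that $X_0$ together with $Y_1$ indeed forms a graph under the incidence relation given by $\delta$. Using the identification $\phi_v(s)=\phi_{sv}(s^{-1})$, the $\hat t$-edge between $v$ and $\hat tv$ is incident to a check $u\otimes c'$ precisely when either $u=v$ and $\hat t\in\phi(\delta_L(c'))$, or $u=\hat tv$ and $\hat t^{-1}\in\phi(\delta_L(c'))$. The hypothesis $\phi(\delta_L(c'))\cap\{\hat t,\hat t^{-1}\}=\varnothing$ for all $c'\neq\hat c$ forces $c'=\hat c$ in both cases, so each bit in $Y_1$ is incident to exactly two checks (both of the form $v\otimes\hat c$), as required. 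Checks $v\otimes c$ with $c\neq\hat c$ become isolated vertices of $\mathcal{G}$, which is permitted.

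Next, I would pick the orientation of $\mathcal{G}$ in which the $\hat t$-edge between $v$ and $\hat tv$ is directed from $\hat tv$ to $v$. A short check shows that each non-isolated vertex $v\otimes\hat c$ then has exactly one incoming edge (the $\hat t$-edge at $v$, equal to $\phi_v(\phi^{-1}(\hat t))$) and exactly one outgoing edge (the $\hat t^{-1}$-edge at $v$, equal to $\phi_v(\phi^{-1}(\hat t^{-1}))=\phi_{\hat t^{-1}v}(\phi^{-1}(\hat t))$). Isolated vertices have zero incoming and outgoing edges. In both cases the incoming and outgoing degrees have the same parity, so the orientation satisfies the hypothesis of \Cref{prop:hidden-graph}. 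Comparing with \Cref{eq:pre-orientation-SS-3}, the pre-orientation constructed from this graph datum agrees with the one in the statement: $\din(v\otimes\hat c)=\phi_v(\phi^{-1}(\hat t))$, $\dout(v\otimes\hat c)=\phi_v(\phi^{-1}(\hat t^{-1}))$, $\delta_{\op{free}}(v\otimes\hat c)=\varnothing$ (since $\delta_L(\hat c)=\{\phi^{-1}(\hat t),\phi^{-1}(\hat t^{-1})\}$ exhausts $\delta_L(\hat c)$), and for $c\neq\hat c$ all of $\delta(v\otimes c)$ falls in $\delta_{\op{free}}$ (since $Y_1\cap\phi_v(\delta_L(c))=\varnothing$).

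Invoking \Cref{prop:hidden-graph} then immediately gives that the pre-orientation fulfills both the non-overlapping bits condition \Cref{eq:non-overlap} and the $\Lambda$-fold integrated Leibniz condition \Cref{eq:leibnizcondition} for every $\Lambda\geq 1$, which is exactly the content of the proposition. The only subtle point that requires care is the bookkeeping in the previous paragraph --- specifically the identity $\phi_v(s)=\phi_{sv}(s^{-1})$ that ensures the graph $\mathcal{G}$ is well-defined and oriented consistently --- but once this is established the remainder of the argument reduces to a direct comparison with the previously proved \Cref{prop:hidden-graph} and requires no further computation.
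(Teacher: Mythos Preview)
Your proposal is correct and is arguably the more natural route. The paper's own proof proceeds by direct case-by-case verification: it checks the non-overlapping bits condition by splitting into the cases $v\neq v'$ and $v=v'$, and then checks the four conditions of \Cref{eq:lambda31} by running through all configurations of $(v_1\otimes c_1,v_2\otimes c_2,v_3\otimes c_3)$. You instead recognize that the construction is literally an instance of the hidden-graph framework of \Cref{prop:hidden-graph} --- indeed the paper remarks, just before the statement, that the setup ``is similar to \Cref{prop:hidden-graph}'' --- and you invoke that proposition as a black box after verifying its hypotheses. This buys you modularity: once you have exhibited $Y_1$ and checked that each $\hat t$-edge is incident to exactly the two checks $v\otimes\hat c$ and $\hat tv\otimes\hat c$, no further case analysis is needed. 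The paper's direct verification is more self-contained but essentially repeats work already done in the proof of \Cref{prop:hidden-graph}.

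The one point worth flagging is your identity $\phi_v(s)=\phi_{sv}(s^{-1})$: this is not stated explicitly in the paper, but it is implicit in the setup preceding the proposition, where a single bijection $\phi\colon L_1\to S$ is fixed and each $\phi_v$ is the induced identification of $L_1$ with the $S$-labeled edges at $v$ in the Cayley graph. Under that reading your bookkeeping is sound.
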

\begin{proof}
    Since the condition at $\Lambda \geq 3$ implies that for smaller $\Lambda$, we only need to check the ``non-overlapping bits'' condition and \cref{eq:lambda31}. The ``non-overlapping bits'' condition holds because of the partitioning chosen in \cref{eq:pre-orientation-SS-3}. Namely, $\din(v \otimes c) \cap \din (v' \otimes c')$ is empty if $v \neq v'$ because the incoming edges are always associated with $t$ and outgoing with $t^{-1}$, and similarly for $\dout$.  If $v = v'$, from \cref{eq:pre-orientation-SS-3} we have that only one check has nontrivial $\delta_{\op{in}/\op{out}}$. 

     To check the conditions in \cref{eq:lambda31}, we decompose $${\delta (v \otimes c) = \din (v \otimes c) + \dout (v \otimes c) + \delta_{\op{free}} (v \otimes c)}$$ and consider all possibilities for $(v_1 \otimes c_1, v_2 \otimes c_2, v_3 \otimes c_3)$, i.e.  $v_1 = v_2 = v_3$,   $v_1 = v_2 \neq v_3$, $\{v_1 \neq v_2 \neq v_3, v_1 \neq v_3\}$, etc. We find that all four conditions hold directly from \cref{eq:pre-orientation-SS-3}.
\end{proof}

Let us show that the check $\hat c$ can be chosen such that the integrated cup product gives a nontrivial cohomology operation. 

\begin{proposition} \label{prop:nontrivial-action-SS}
    Under the assumptions of \cref{sslarger3}, assume that the local check $\hat{c}$ that checks all bits, i.e.  $\delta(\hat{c})=L_1$.
    Denote $\underline{c}=\sum_{v \in C^0(X,\Zn{2})}v\otimes \hat{c} \ \in \ C^0(X^L,\Zn{2})$.  
    For any fixed element $v\in C^0(X,\Zn{2})\equiv G$ denote by $e_v$ and $e'_v$ the incident edges labeled by~$\hat t$ and~$\hat t^{-1}$ (which can be uniquely determined for each vertex $v$ by assumption). 
    Then the following statements hold.
    \begin{enumerate}
        \item[(1)] $e_v \cup \underline{c}=e_v$, $\underline{c}\cup e_v'=e_v',$ and  $\underline{c}\cup \underline{c}=\underline{c}$, and all cup products not containing these are 0;
        \item[(2)] $\underline{c}\in Z^0(X^L,\Zn{2})=H^0(X^L,\Zn{2})$;
        \item[(3)] $[e_v]$ and $[e'_v]$ are non-trivial in $H^1(X^L,\Zn{2})$.
        \item[(4)] As a consequence, $\Lambda$-fold integrated cup product is non-trivial, for all $\Lambda\geq 1.$
    \end{enumerate}
\end{proposition}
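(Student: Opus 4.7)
The plan is to handle parts (1) and (2) by direct computation, then establish the cohomological non-triviality in (3) by exhibiting an explicit dual $1$-cycle, and finally combine everything in (4) by induction. For (1), I would unpack the pre-orientation from \Cref{sslarger3}: $\din(v\otimes\hat c)=\{e_v\}$, $\dout(v\otimes\hat c)=\{e_v'\}$, and by the uniqueness hypothesis on $\hat c$ both $\din(v'\otimes c')$ and $\dout(v'\otimes c')$ are empty whenever $c'\neq\hat c$. Substituting into \Cref{def:cup}, the double sum $\underline c\cup\underline c=\sum_{v,w}(v\otimes\hat c)\cup(w\otimes\hat c)$ collapses via the rule $a\cup a=a$ to $\sum_v v\otimes\hat c=\underline c$, while $e_v\cup\underline c$ picks up only the unique $w$ with $e_v\in\din(w\otimes\hat c)$, namely $w=v$; the remaining identities are of the same flavor. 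For (2), the computation $\delta\underline c=\sum_v\phi_v(L_1)$ is the formal sum of every edge of $X$ counted once per incident vertex, i.e.\ exactly twice per edge, so $\delta\underline c=0$ in $\Zn 2$.

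The main step is (3), for which I would exhibit an explicit $\zeta_v\in C^1(X^L,\Zn 2)$ that is orthogonal to every coboundary but pairs non-trivially with $e_v$. My candidate is the $\hat t$-orbit loop $\zeta_v=\sum_{k=0}^{\op{ord}(\hat t)-1}e_{\hat t^k v}$, i.e.\ the set of $\hat t$-labeled edges along the closed path $v\to\hat t v\to\hat t^2 v\to\cdots$ in $X$. To verify orthogonality with $B^1$, I check each coboundary $\delta(v'\otimes c')=\phi_{v'}(\delta_L(c'))$ in turn: if $c'\neq\hat c$, the uniqueness hypothesis forces $\phi_{v'}(\delta_L(c'))$ to avoid the $\hat t^{\pm 1}$-labeled edges and hence $\zeta_v$; if $c'=\hat c$ but $v'$ is off the $\hat t$-orbit of $v$, the same disjointness holds; and if $c'=\hat c$ with $v'=\hat t^k v$, then the two edges at $v'$ labeled $\hat t$ and $\hat t^{-1}$ both lie in $\zeta_v$, giving $|\phi_{v'}(L_1)\cap\zeta_v|=2\equiv 0\pmod 2$. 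Consequently $x\mapsto|\op{supp}(x)\cap\zeta_v|\bmod 2$ is a $1$-dimensional integral on $C(X^L,\Zn 2)$ sending both $e_v$ and $e_v'=e_{\hat t^{-1}v}$ to $1$, which forces $[e_v],[e_v']\neq 0$ in $H^1(X^L,\Zn 2)$.

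For (4), associativity holds because the pre-orientation satisfies the non-overlapping bits condition (already verified in \Cref{sslarger3}), so (1) yields $e_v\cup\underline c^{\cup(\Lambda-1)}=e_v$ by induction on $\Lambda$. Taking cocycles $z_1=e_v$ and $z_2=\cdots=z_\Lambda=\underline c$, whose degrees $1,0,\ldots,0$ sum to the integral dimension $1$, part (3) gives $\int z_1\cup\cdots\cup z_\Lambda=\int e_v=1\neq 0$, establishing non-triviality of the $\Lambda$-fold integrated cup product for every $\Lambda\geq 1$. The step I expect to require the most care is (3): the dual-cycle verification rests simultaneously on the topological fact that each edge of $X$ has two endpoints, the algebraic fact that $\delta_L(\hat c)$ contains both $\phi^{-1}(\hat t)$ and $\phi^{-1}(\hat t^{-1})$, and the uniqueness hypothesis that prevents any other local check from coupling to $\zeta_v$.
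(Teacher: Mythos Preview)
Your proposal is correct. Parts (1), (2) and (4) follow essentially the same route as the paper, with the small difference that your argument for (2) uses the ``each edge is counted by both its endpoints'' observation directly, whereas the paper splits $\delta(\underline c)$ into its $\din$ and $\dout$ pieces after invoking $\delta_{L,\op{free}}(\hat c)=0$ and then cancels incoming against outgoing edges.

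The genuine difference is in part (3). The paper argues by restriction: since $e_v$ and $e'_v$ occur only in $\din$ and $\dout$ of checks of the form $w\otimes\hat c$, the question of whether $[e_v]$ is trivial reduces to the corresponding question on a cyclic graph, where it is known. You instead construct the witness directly, exhibiting the $\hat t$-orbit loop $\zeta_v$ as an explicit element of $Z_1$ that pairs to $1$ with $e_v$ and to $0$ with every coboundary. Your approach is more self-contained and gives a concrete linear functional detecting $[e_v]$; the paper's reduction is shorter once one is willing to cite the cycle-graph case. One small clarification for (4): the line ``part (3) gives $\int e_v=1$'' conflates two things---part (3) is needed to ensure $[e_v]\neq 0$ so that the evaluation is meaningful at the level of cohomology, while $\int_1 e_v=1$ holds simply because the $1$-dimensional integral sends every bit to $1$; both ingredients are required, but for different reasons.
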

\begin{proof}
    To show (1), we note that $e_v\cup \left ( v' \otimes \hat c\right) = e_v \cap \din \left ( v' \otimes \hat c\right) = e_v \cap e_{v'} = e_v \delta_{v = v'}$, and similarly one can show that $ \left ( v' \otimes \hat c\right) \cup e'_v = e'_v \delta_{v = v'}$. The third relation holds straightforwardly. Any other primitive relation is zero because the only boundary operators involved in the cup product are $\din$ and $\dout$, which are in correspondence with $e_v$ and $e_v'$.  

    To show statement (2), we write $$\delta(\underline c) = \sum_v  \left ( \phi_v(\delta_{L,\op{in}}(\hat c)+   \phi_v(\delta_{L,\op{out}}(\hat c) \right) =  \sum_v (e_v + e_v') = \sum_v e_v +\sum _w e_w = 0. $$ In the first equality, we used $\delta_{L,\op{free}}(\hat c) = 0$.  We then used that each incoming edge for a given vertex $v$ is also an outgoing edge for some vertex $w$, similarly to a graph. 

    Next, statement (3) follows directly from the fact that $e_v$ (and $e_v'$) are defined uniquely for each $v$, so they only appear in $\din$ and $\dout$ boundaries and only of elements of the form  $w \otimes \hat c$; restricting to this, the problem is reduced to a cyclic graph, where we know $[e_v]$ ($[e_v']$) is nontrivial. 

   Finally, consider $\int_1 a_1 \cup \dots \cup a_\Lambda$, where $a_i \in C^{p_i}(X^L, \Zn{2})$ $p_i = {0,1}$ and $\sum p_i = 1$ (otherwise, the integral will give 0).  Because of (1), we see that combinations of $\underline{c}$ and $[e_v]$ ($[e_v']$) will give a nontrivial outcome, thus, showing (4). 
    \end{proof}

\section{Gates from Cup Products}
\label{sec:highercontrolz}

In this section, we finally discuss quantum logical gates obtained from cohomology invariants that we discussed up till now. 

To obtain interesting logical gates, we set up a framework for constructing logical gates between $\Lambda$ copies of a quantum code~$\mathcal C$. 
The idea behind this framework is that if the code~$\mathcal C$ admits a $\Lambda$-fold cup product (or, more generally, a cohomological operation $\Psi_\Lambda$ that is multilinear), then $\Lambda$ copies admit a logical gate constructed using the cup product formalism. We call such gates \emph{copy-cup gates}. 
We show that copy-cup gates can implement logical gates that are arbitrarily high in the Clifford hierarchy, depending on the number $\Lambda$ of copies of the quantum code.
Finally, we construct several quantum code families with different asymptotic parameters that can achieve gates at any desired level of the Clifford hierarchy and discuss a few particular examples from them.

\subsection{Logical operations between copies of a quantum code}\label{sec:logicalsoncodeblocks}

Following the discussion in \cref{sec:operations-gates},  given a quantum code corresponding to a chain complex $C(\underline{X}, \Zn{2})$ and a cohomology operation $\Psi$ on this code that is also defined at the level of 1-cochains, we can derive a unitary circuit implementing a logical gate corresponding to this operation. When $\Psi$ is constructed as a sum (integral) of elementary operations that involve a constant number (say, $\Lambda$) 1-cochains, such as cup product, the resulting circuit will be a constant-depth one\footnote{The circuit is, in fact, $\Lambda$-local. It is also geometrically local if the code is defined through a simplicial complex.}. Let us describe a general framework of how to determine a logical gate corresponding to a cohomology operation $\Psi$ and derive the respective circuit. 

Let $C(\underline{X},\Zn{2})$ be a based complex and consider an operation
$$\Psi_\Lambda: C^{1}(\underline{X},\Zn{2})\times \dots \times C^{1}(\underline{X},\Zn{2})\to \Zn{2}$$
to be a multilinear map on $\Lambda$ copies of the complex, that is, $\Psi_\Lambda$ is linear in each component, and assume that $\Psi_\Lambda$ is also a cohomology operation, that is, $\Psi_\Lambda$ induces a well-defined map 
$$\Psi_\Lambda: H^{1}(\underline{X},\Zn{2})\times \dots \times H^{1}(\underline{X},\Zn{2})\to \Zn{2}.$$
For each copy $i \in \{1, \ldots, \Lambda \}$, the respective copy of the chain complex $C(\underline{X},\Zn{2})$ is put in correspondence with the quantum code $\mathcal C_i$ whose codespace lies in the Hilbert space $\mathcal{H}_i$. The underlying physical qubits of each code are associated with  $X_1$ (which is a basis of $C^{1}(\underline{X},\Zn{2})$). 

We denote the quantum code that arises from taking the union of $\Lambda$ copies of this quantum code as $\mathcal{C}$. In the following, by abuse of notation, we will use the same symbol for the codespace as for the code where it does not cause ambiguity. 
Its underlying physical qubits correspond to the disjoint union $\underline{X}_{1}\uplus \dots \uplus \underline{X}_{1}.$
We write the states in the associated Hilbert space  $\mathcal{H}$ as
$\ket{c_1,\dots,c_\Lambda}$ for $c_i\in C^{1}(\underline{X},\Zn{2})$ belonging to $i$-th copy of the code.

\begin{definition}\label{def:unitaryfromcohomologyoperation}
    We define the unitary associated with the operation $\Psi_\Lambda$ to be
    $$U_{\Psi, \Lambda}=\sum_{\substack{\{c_i\in C^{1}(\underline{X},\Zn{2}) \} \\ i \in [1,\Lambda]}}(-1)^{\Psi_\Lambda(c_1,\dots,c_\Lambda)}\ket{c_1,\dots,c_\Lambda}\bra{c_1,\dots,c_\Lambda}.$$
\end{definition}

\noindent For a set of $\Lambda$ physical qubits $x_i\in \underline{X}_{1}$ with $i \in [1, \Lambda]$, one from each quantum code copy, we will use the multi-controlled $\op{Z}$-gate
$$C^{\Lambda-1}\!Z_{x_1,\dots,x_\Lambda}$$
where our notation assumes $C^{0}Z = C$, $C^{1}Z = CZ$, $C^{2}Z = CCZ$ etc. Namely, this gate acts by multiplication by $-1$ if all physical qubits are in the state $\ket{1}$ and trivial otherwise.

\begin{definition}\label{def:circuitfromcohomologyoperation}
    We define the circuit associated to $\Psi_\Lambda$  to be the product of multi-controlled $Z$-gates
    $$\gateZ_{\Psi,\Lambda}=\prod_{\substack{\{x_i \in \underline{X}_{1} \} \\ i \in [1, \Lambda]}}(C^{\Lambda-1}\!Z_{x_1,\dots,x_\Lambda})^{\Psi_\Lambda(x_1,\dots,x_\Lambda)}.$$
\end{definition}
\noindent We notice that the circuit $\gateZ_{\Psi,\Lambda}$ is simply the same as the unitary above, except it is expressed through the chosen basis of physical qubit states. In addition, it is a constant-depth circuit. The result above establishes that $U_{\Psi,\Lambda} \equiv \gateZ_{\Psi,\Lambda}$ and is a logical gate on the code $\mathcal{C}$.

\begin{theorem}[Cups \& Gates I]\label{thm:operationyieldsgate}
Assuming an operation $\Psi_\Lambda$, corresponding unitary $U_{\Psi,\Lambda}$ and the circuit $\gateZ_{\Psi,\Lambda}$ as discussed above:

    \begin{enumerate}
        \item [(a)] The unitary $U_{\Psi,\Lambda}$ preserves the codespace $\mathcal{C}\subset \bigotimes_{i = 1}^{\Lambda}\mathcal{H}_i$. Thus, it is a logical gate for the code $\mathcal{C}$. 
        \item[(b)] Moreover, it can be equivalently written as the circuit $\gateZ_{\Psi,\Lambda},$ i.e. $U_{\Psi,\Lambda}=\gateZ_{\Psi,\Lambda}.$
    \end{enumerate}
\end{theorem}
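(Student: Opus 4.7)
The plan is to prove (a) and (b) as two independent arguments, each relatively short once the correct bookkeeping is set up.

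For part (a), I would argue that $U_{\Psi,\Lambda}$ acts as a well-defined diagonal operator in the logical basis. The codespace is spanned by joint logical states
\[
\ket{[\gamma_1],\dots,[\gamma_\Lambda]} = \frac{1}{\sqrt{|B^1|^\Lambda}}\sum_{b_1,\dots,b_\Lambda\in B^1}\ket{\gamma_1+b_1,\dots,\gamma_\Lambda+b_\Lambda},
\]
where each $\gamma_i\in Z^1(\underline{X},\Zn{2})$ is a chosen representative of the cohomology class $[\gamma_i]\in H^1$. Applying $U_{\Psi,\Lambda}$ term by term yields phase factors $(-1)^{\Psi_\Lambda(\gamma_1+b_1,\dots,\gamma_\Lambda+b_\Lambda)}$. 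Since every $\gamma_i+b_i$ is still a cocycle and $\Psi_\Lambda$ descends to a map on $H^1\times\cdots\times H^1$, all of these phases coincide with $(-1)^{\Psi_\Lambda([\gamma_1],\dots,[\gamma_\Lambda])}$. Hence $U_{\Psi,\Lambda}$ maps $\ket{[\gamma_1],\dots,[\gamma_\Lambda]}$ to itself times this phase, so it preserves the codespace and indeed realises a diagonal logical gate.

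For part (b), the key tool is the multilinearity of $\Psi_\Lambda$ together with the expansion of any $c_i\in C^1(\underline{X},\Zn{2})$ in the distinguished basis $\underline{X}_1$. Writing $c_i=\sum_{x}c_i^{(x)}x$ with $c_i^{(x)}\in\{0,1\}$, multilinearity gives
\[
\Psi_\Lambda(c_1,\dots,c_\Lambda)=\sum_{x_1,\dots,x_\Lambda\in\underline{X}_1}c_1^{(x_1)}\cdots c_\Lambda^{(x_\Lambda)}\,\Psi_\Lambda(x_1,\dots,x_\Lambda)\pmod{2}.
\]
On the other hand, each factor $(C^{\Lambda-1}\!Z_{x_1,\dots,x_\Lambda})^{\Psi_\Lambda(x_1,\dots,x_\Lambda)}$ in $\gateZ_{\Psi,\Lambda}$ is diagonal and contributes phase $(-1)^{c_1^{(x_1)}\cdots c_\Lambda^{(x_\Lambda)}\,\Psi_\Lambda(x_1,\dots,x_\Lambda)}$ on the basis state $\ket{c_1,\dots,c_\Lambda}$. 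Since all these phase gates commute, their product gives the sum of the exponents in $\Zn{2}$, which matches exactly the right-hand side of the display above. Therefore $\gateZ_{\Psi,\Lambda}\ket{c_1,\dots,c_\Lambda}=(-1)^{\Psi_\Lambda(c_1,\dots,c_\Lambda)}\ket{c_1,\dots,c_\Lambda}=U_{\Psi,\Lambda}\ket{c_1,\dots,c_\Lambda}$ on every computational basis state, proving the two unitaries are equal.

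Neither part is technically deep; the main subtlety is simply to keep track of which inputs are cocycles. In (a), the crucial point is that $\Psi_\Lambda$ is invariant under shifts by coboundaries only when the other arguments are cocycles — which is exactly why the argument succeeds inside the codespace but would fail for generic computational basis states. In (b), the non-trivial conceptual step is recognising that multilinearity of the $\Zn{2}$-valued form on cochains is precisely the algebraic content of the identity $\gateZ_{\Psi,\Lambda}=U_{\Psi,\Lambda}$, which translates a sum over basis tuples into a product of commuting diagonal $C^{\Lambda-1}\!Z$ gates. I expect no real obstacle beyond being careful with the $\bmod\,2$ arithmetic when collecting the phase exponents.
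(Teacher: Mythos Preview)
Your proposal is correct and follows essentially the same approach as the paper: for (a) you expand the logical basis states as superpositions over coboundary shifts and use that $\Psi_\Lambda$ descends to cohomology to see all phases agree, and for (b) you expand each $c_i$ in the basis $\underline{X}_1$ and use multilinearity of $\Psi_\Lambda$ to match the sum of exponents with the product of commuting $C^{\Lambda-1}\!Z$ gates. The paper's proof is slightly terser but the content and structure are the same.
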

\begin{proof}
   The code space $\mathcal{C}$ has a basis, up to normalization, of the form
    $$\ket{[\gamma_1],\dots,[\gamma_\Lambda]}=\sum_{b_1\in B^{p_1}(X,\Zn{2})}\dots \sum_{b_\Lambda\in B^{p_\Lambda}(X,\Zn{2})}\ket{\gamma_1+b_1,\dots, \gamma_\Lambda+b_{p_\Lambda}}$$
    where the basis states are labeled by the cohomology classes $[\gamma_i]\in H^{1}(X,\Zn{2})$ that are represented by cocycles $\gamma_i\in Z^{1}(X,\Zn{2}).$ 
    The unitary $U_{\Psi,\Lambda}$ acts on each summand via $(-1)$ to the power of $$\Psi_\Lambda(\gamma_1+b_1,\dots,\gamma_\Lambda+b_\Lambda)=\Psi_\Lambda(\gamma_1,\dots,\gamma_\Lambda).$$
    Hence, $U_{\Psi,\Lambda}$ multiplies each basis vector of $\mathcal{C}$ by a scalar, and hence, preserves $\mathcal{C}$.

    Now write write $c_i=\sum \epsilon_{{x_i}}x_i\in C^{1}(X,\Zn{2}).$ Then, by definition, $U_{\Psi,\Lambda}$ acts on $\ket{c_1,\dots,c_\Lambda}$ by multiplying it by $(-1)$ to the power of 
    \begin{align*}
        \Psi_\Lambda(c_1,\dots,c_\Lambda)=
        \sum_{x_1\in X_{p_1}}\dots \sum_{x_d\in X_{p_\Lambda}} \prod_i \epsilon_{{x_i}}\Psi_\Lambda(x_1,\dots,x_\Lambda)
    \end{align*}
    which agrees with the action of the circuit $\gateZ_{\Psi,\Lambda}.$
\end{proof}

To determine the logical gate implemented by   $U_{\Psi,\Lambda}$, let us choose a basis $\Gamma_i$ of each  $H^{1}(\underline{X},\Zn{2})$ for each copy of the code $i \in \{ 1, \ldots, \Lambda\}$. 
This corresponds to choosing a logical basis in the copies of the associated quantum codes $\mathcal{C}_i$ and in the total code $\mathcal{C}$.

\begin{lemma}\label{thm:explicitdescriptionofgate}
    The logical action of the unitary $U_{\Psi,\Lambda}$ (or, equivalently, the circuit $\gate_{\Psi,\Lambda}$) is 
    $$\overline{\gateZ}_{\Psi,\Lambda}=\prod_{\substack{[\gamma_i]\in\Gamma_i \\ i \in [1, \Lambda]}}(\overline{\op{C}^{\Lambda-1}Z}_{[\gamma_1],\dots,[\gamma_\Lambda]})^{\Psi_\Lambda(\gamma_1,\dots,\gamma_\Lambda)}.$$
\end{lemma}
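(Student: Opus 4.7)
The plan is to build on the previous theorem, which already establishes that $U_{\Psi,\Lambda}$ preserves the codespace and acts on a logical basis state $\ket{[\gamma_1],\dots,[\gamma_\Lambda]}$ by multiplication with the scalar $(-1)^{\Psi_\Lambda(\gamma_1,\dots,\gamma_\Lambda)}$, where $\gamma_i \in Z^1(\underline{X},\Zn{2})$ is any representative of $[\gamma_i]$ (well-defined because $\Psi_\Lambda$ descends to a cohomology operation). It therefore suffices to verify that the operator $\overline{\gateZ}_{\Psi,\Lambda}$ on the right-hand side acts on each logical basis vector by exactly the same phase.

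First I would fix the logical identification: a choice of basis $\Gamma_i \subset H^1(\underline{X},\Zn{2})$ identifies the $i$-th logical Hilbert space with $(\C^2)^{\otimes |\Gamma_i|}$, where the computational basis vector $\ket{[\gamma_i]}$ corresponds to the bit string $(\epsilon^{(i)}_{[\alpha]})_{[\alpha] \in \Gamma_i}$ determined by the unique expansion $[\gamma_i] = \sum_{[\alpha]\in\Gamma_i} \epsilon^{(i)}_{[\alpha]}[\alpha]$ with $\epsilon^{(i)}_{[\alpha]} \in \Zn{2}$. Under this identification the logical multi-controlled gate $\overline{\op{C}^{\Lambda-1}Z}_{[\alpha_1],\dots,[\alpha_\Lambda]}$ acts on $\ket{[\gamma_1],\dots,[\gamma_\Lambda]}$ by the phase $(-1)^{\prod_i \epsilon^{(i)}_{[\alpha_i]}}$, and the full product on the right-hand side therefore contributes the exponent
\begin{equation*}
\sum_{[\alpha_1]\in\Gamma_1}\cdots\sum_{[\alpha_\Lambda]\in\Gamma_\Lambda} \Psi_\Lambda(\alpha_1,\dots,\alpha_\Lambda)\prod_{i=1}^{\Lambda}\epsilon^{(i)}_{[\alpha_i]} \pmod{2}.
\end{equation*}

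The key step is then to invoke multilinearity. Since $\Psi_\Lambda$ descends to a multilinear map on $H^1 \times \cdots \times H^1$ (by the hypothesis that it is a cohomology operation), expanding each $[\gamma_i]$ in the basis $\Gamma_i$ and using linearity in each slot gives
\begin{equation*}
\Psi_\Lambda(\gamma_1,\dots,\gamma_\Lambda) = \sum_{[\alpha_1],\dots,[\alpha_\Lambda]} \left(\prod_{i=1}^{\Lambda}\epsilon^{(i)}_{[\alpha_i]}\right)\Psi_\Lambda(\alpha_1,\dots,\alpha_\Lambda).
\end{equation*}
Matching the two exponents modulo $2$ shows that $\overline{\gateZ}_{\Psi,\Lambda}$ and the logical action of $U_{\Psi,\Lambda}$ agree on every logical basis vector, hence they agree as logical operators.

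The main obstacle is primarily bookkeeping: one must be careful that $\Psi_\Lambda$ applied to cocycle representatives is independent of the choice of representative (this is exactly the cohomology-invariance established earlier) and that the passage from multilinearity on cochains to multilinearity on $H^1$ is legitimate. Everything else is routine expansion; no new technical input beyond the definition of $\overline{\op{C}^{\Lambda-1}Z}$ in the logical computational basis and the previously proved fact that $U_{\Psi,\Lambda}$ acts diagonally with the stated phase.
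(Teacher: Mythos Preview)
Your proposal is correct and follows exactly the approach the paper takes: the paper's proof is simply the one-line remark that the result follows as in the preceding theorem using the multilinearity of $\Psi_\Lambda$, and your argument is a faithful unpacking of that sentence.
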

\begin{proof}
    This follows as in \Cref{thm:operationyieldsgate} using the multilinearity of $\Psi_\Lambda.$
\end{proof}

The following result establishes the relation between a multilinear map $\Psi_\Lambda$ and a gate at a given level of Clifford hierarchy:

\begin{theorem} \label{th:guaranteed-level}
Assume that the map   $\Psi_\Lambda: H^{1}(X,\Zn{2})\times \dots \times H^{1}(X,\Zn{2})\to \Zn{2}$ that from the discussion above is non-zero. 
Then the unitary $U_{\Psi,\Lambda}$ (or equivalently the circuit~$\gateZ_\Psi$) implements a logical action at the $\Lambda$-th level Clifford hierarchy.
\end{theorem}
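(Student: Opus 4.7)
The plan is to invoke Lemma~\ref{thm:explicitdescriptionofgate}, which already displays the logical action of $U_{\Psi,\Lambda}$ as a product of logical multi-controlled $Z$ gates
$$\overline{\gateZ}_{\Psi,\Lambda} = \prod_{[\gamma_i]\in \Gamma_i}\left(\overline{C^{\Lambda-1}Z}_{[\gamma_1],\dots,[\gamma_\Lambda]}\right)^{\Psi_\Lambda(\gamma_1,\dots,\gamma_\Lambda)},$$
and then reduce the statement to two standard facts about $\pm 1$-valued diagonal gates in the Clifford hierarchy.

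First, I would argue that $\overline{\gateZ}_{\Psi,\Lambda}\in \mathcal{C}_\Lambda$. The core input is the well-known fact that a single $C^{\Lambda-1}Z$ gate lies at level $\Lambda$ of the hierarchy, which follows by induction on $\Lambda$ from the identity $(C^{k-1}Z)\,X_1\,(C^{k-1}Z)^\dagger = X_1 \otimes C^{k-2}Z$ together with the base case $Z \in \mathcal{C}_1$. The diagonal $\pm 1$-valued unitaries generated by multi-controlled $Z$ gates of arity at most $\Lambda$ form a commutative subgroup of $\mathcal{C}_\Lambda$, so the displayed product lies in $\mathcal{C}_\Lambda$.

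Second, I would argue that the gate does not descend to $\mathcal{C}_{\Lambda-1}$. The tool is the uniqueness of the Reed--Muller expansion $U = \prod_S (C^{|S|-1}Z_S)^{b_S}$, with $b_S \in \mathbb{F}_2$, of any $\pm 1$-diagonal unitary $U$, together with the identification of its Clifford hierarchy level as $\max\{|S| : b_S = 1\}$. Because the $\Lambda$ code copies use disjoint sets of physical (and hence logical) qubits, each factor indexed by a tuple $([\gamma_1], \dots, [\gamma_\Lambda])$ in Lemma~\ref{thm:explicitdescriptionofgate} acts on a $\Lambda$-element subset $S$ containing exactly one logical qubit from each block. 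Distinct index tuples therefore yield distinct subsets $S$ of size $\Lambda$, so no cancellation occurs at top degree, and $b_S = \Psi_\Lambda(\gamma_1,\dots,\gamma_\Lambda)$ for $S = \{[\gamma_1],\dots,[\gamma_\Lambda]\}$. The non-vanishing hypothesis on $\Psi_\Lambda$ then yields at least one tuple with $b_S = 1$, placing $\overline{\gateZ}_{\Psi,\Lambda}$ at exactly level $\Lambda$.

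The main obstacle is the non-collapse argument in the second step, which is precisely where the non-vanishing hypothesis on $\Psi_\Lambda$ enters. The crucial simplification is that the $\Lambda$ code copies being disjoint prevents any two top-arity generators from involving the same $\Lambda$-element subset of logical qubits, so no cancellations can wipe out the degree-$\Lambda$ contributions. Without such disjointness, collapsing overlapping controls via $Z^2 = I$ could conceivably reduce the top-degree piece, and a finer analysis of the Reed--Muller expansion would be required.
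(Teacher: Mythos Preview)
Your proof is correct and follows essentially the same approach as the paper: invoke Lemma~\ref{thm:explicitdescriptionofgate}, use the non-vanishing hypothesis to locate a nontrivial $\overline{C^{\Lambda-1}Z}$ factor in the product, and argue that no cancellation can drop the level below~$\Lambda$. You have spelled out the no-cancellation step (via disjointness of the $\Lambda$ code blocks and uniqueness of the Reed--Muller expansion) and the upper-bound containment in $\mathcal{C}_\Lambda$ more carefully than the paper, which simply asserts both without further justification.
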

\begin{proof} By assumption, there must exist subset of basis vectors $[\gamma_i]\in \Gamma_i$ such that $\Psi_\Lambda(\gamma_1,\dots, \gamma_\Lambda) = 1$. Hence the circuit $\overline{\gateZ}_{\Psi,\Lambda}$ contains a non-trivial $\overline{\gateZ}_{\Psi,\Lambda}$-gate which is in the $\Lambda$-th Clifford hierarchy. Note that no cancellations can occur which would reduce the level of the Clifford hierarchy.
\end{proof}

Finally, let us discuss how a higher-level gates give rise to \emph{addressable} gates at lower levels of the Clifford hierarchy. Recall that a representative of a logical operator $\overline{X} = \overline{X}_{\gamma_i}$ acts as 
$$\overline{X}_{\gamma_i} \ket{c_1, \dots,  c_\Lambda} = \ket{c_1, \dots, c_i + \gamma_i, \dots, c_\Lambda}.$$ 
Therefore, , the circuit $\gate_{\Psi,\Lambda}^{\gamma_i} \equiv \overline{X}_{\gamma_i} \gate_{\Psi,\Lambda}^\dagger 
 \overline{X}_{\gamma_i} \gate_{\Psi,\Lambda}$ can be used to implement a gate at a lower level of the Clifford hierarchy than $\Lambda$. 
We note that we only obtain a nontrivial gate for the logical states for which $[\gate_{\Psi,\Lambda}^\dagger, 
 \overline{X}_{\gamma_i}] \neq 0$. This process can also be further repeated to get diagonal gates lower in the hierarchy.

\subsection{The copy-cup logical gate}

Finally, we can determine the logical action corresponding to the cohomology operation~$\Psi_{\cup, \Lambda}$ that we constructed in \cref{sec:cup-non-topological}.  We will call this gate the \emph{copy-cup} gate\footnote{Not to be confused with \emph{copy cat} or \emph{coffee cup}.} because it acts on $\Lambda$ copies of the same quantum code $\mathcal{C} = \bigoplus_{i = 1}^\Lambda C(\underline{X},\Zn{2})$ and its action is built on $\Lambda$-fold cup product (and an integral).  

As a reminder, the function $\Psi_{\cup, \Lambda}$ defined in \cref{sec:cup-non-topological} is multilinear and descends to cohomology, because it is a combination of the cup product and the integral, which both possess these properties. 
Following the discussion in this section, we obtain the unitary associated with~$\Psi_{\cup, \Lambda}$, which can be written as:
$$U_{\cup, \Lambda} = \sum_{ \substack{\{c_1\in C^{1}(\underline{X},\Zn{2}) \} \\ i \in [1, \Lambda]} }(-1)^{\int\!c_1\cup\dots\cup c_\Lambda}\ket{c_1,\dots,c_\Lambda}\bra{c_1,\dots,c_\Lambda}.$$
We call it the \emph{copy-cup unitary}. 
Similarly as above, we construct  the associated circuit ${\gateZ_{\cup, \Lambda}=\gateZ_{\Psi_{\cup, \Lambda}}}$, which we call the \emph{copy-cup circuit}.

As a direct application of  \Cref{thm:operationyieldsgate}, we obtain our main theorem:
\begin{theorem}[Cups \& Gates II]

Given the operation $\Psi_{\cup, \Lambda}$,

    \begin{enumerate}
        \item [(a)] The unitary $U_{\cup, \Lambda}$ preserves the codespace $\mathcal{C}\subset \bigotimes_{i = 1}^{\Lambda}\mathcal{H}_i$. Thus, it is a logical gate for the code $\mathcal{C}$. 
        \item[(b)] Moreover, it can be realized  by the copy-cup circuit $\gateZ_{\cup, \Lambda}$ whose physical action lies at $\Lambda$-th level of the Clifford hierarchy.
    \end{enumerate}
\end{theorem}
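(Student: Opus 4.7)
The plan is to derive the final theorem as a direct specialization of the general framework built in \Cref{sec:logicalsoncodeblocks}, applied to the cohomology operation $\Psi=\Psi_{\cup,\Lambda}$ introduced in \Cref{def:quantumcopgate} (or \Cref{def:quantumcopgate_balanced} for the balanced product case). The entire argument reduces to verifying that $\Psi_{\cup,\Lambda}$ satisfies the hypotheses of \Cref{thm:operationyieldsgate} and \Cref{th:guaranteed-level}; once that is checked, both conclusions follow immediately.

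First I would check multilinearity of $\Psi_{\cup,\Lambda}$ on $1$-cochains. This is immediate from the construction: the cup product on the tensor product complex is bilinear by \Cref{eqn:tensor_products_cup}, and in the balanced product setting by \Cref{eqn:cupproductbalancedproduct}; likewise the integral $\int_\Lambda$ on the product complex, built via \Cref{eqn:integral_tensor_product}, is linear. Composing, we get a multilinear map $C^1\times\cdots\times C^1\to\Zn{2}$ in each factor. Next I would invoke \Cref{lemma:sufficient_cup} (or \Cref{lemma:sufficient_cup_balanced} for balanced products) to conclude that, given the assumed integrated Leibniz rule on the constituent classical complexes, $\Psi_{\cup,\Lambda}$ descends to a well-defined cohomology operation on $H^1\times\cdots\times H^1$.

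With multilinearity and cohomological descent in hand, the pair $(U_{\cup,\Lambda},\gateZ_{\cup,\Lambda})$ is precisely the pair $(U_{\Psi,\Lambda},\gateZ_{\Psi,\Lambda})$ of \Cref{def:unitaryfromcohomologyoperation} and \Cref{def:circuitfromcohomologyoperation} with $\Psi=\Psi_{\cup,\Lambda}$. Part (a) is then exactly \Cref{thm:operationyieldsgate}, which delivers both the codespace preservation and the identity $U_{\cup,\Lambda}=\gateZ_{\cup,\Lambda}$. For part (b), the claim about the physical level is read off from the definition of $\gateZ_{\cup,\Lambda}$ as a product of multi-controlled gates $C^{\Lambda-1}Z$, each of which sits at the $\Lambda$-th level of the Clifford hierarchy; no cancellations between these factors can lower the level because each appears with a classical-valued exponent determined independently by $\Psi_{\cup,\Lambda}(x_1,\dots,x_\Lambda)$. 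If one additionally wants the induced logical action to lie at the $\Lambda$-th level (as opposed to being trivial), this is the content of \Cref{th:guaranteed-level} and requires the further non-triviality hypothesis built into the informal statement at the beginning of the paper.

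I do not expect a genuine obstacle within the proof of the theorem itself, since it is essentially a packaging result that assembles \Cref{lemma:sufficient_cup}, \Cref{thm:operationyieldsgate} and \Cref{th:guaranteed-level}. The real work sits upstream: checking the integrated Leibniz rule of \Cref{prop:integrated-leibniz-conditions} on the constituent classical codes, and then certifying that the resulting map on cohomology is non-zero. Both tasks are handled in the explicit code families of \Cref{sec:group-algebra-codes} and \Cref{sec:sipser-spielman} (with \Cref{prop:nontrivial-action-SS} supplying the non-triviality for the Sipser--Spielman construction), so within this final theorem the only step worth flagging explicitly is the passage from ``non-trivial at the $1$-cochain level'' to ``non-trivial on cohomology,'' which is why the non-triviality hypothesis must be stated as part of the theorem.
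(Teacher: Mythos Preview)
Your proposal is correct and matches the paper's approach exactly: the paper's proof is the single line ``Is a direct consequence of \Cref{thm:operationyieldsgate},'' and you have simply unpacked what that entails (multilinearity of $\Psi_{\cup,\Lambda}$, descent to cohomology via \Cref{lemma:sufficient_cup}/\Cref{lemma:sufficient_cup_balanced}, and identification of $(U_{\cup,\Lambda},\gateZ_{\cup,\Lambda})$ with $(U_{\Psi,\Lambda},\gateZ_{\Psi,\Lambda})$). If anything, your write-up is more explicit than the paper's own proof.
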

\begin{proof}
    Is a direct consequence of \cref{thm:operationyieldsgate}. 
\end{proof}

The specific logical action implemented by the unitary $U_{\cup, \Lambda}$ (or, equivalently, the circuit~$\gateZ_{\cup, \Lambda}$), i.e. depends on the details of the input code and on details of the cup product that one defines on it through pre-orientation. For example, if the operation satisfies the conditions of \cref{th:guaranteed-level}, then the logical action is at the level $\Lambda$ of the Clifford hierarchy.

\subsection{Families of qLDPC codes ascending the Clifford hierarchy}

With the approach presented above, one can get creative in constructing families of qLDPC codes with fault-tolerant gates on any level of the Clifford hierarchy. In this subsection, we attempt to provide several examples of families that are equipped with such gates. Some of these examples are for codes corresponding to chain complexes defined on manifolds, and some are built on operations that we discussed in \cref{sec:group-algebra-codes} and \cref{sec:sipser-spielman}. 

The first family of codes are the toric codes, corresponding to $C(T^\Lambda \uplus ... \uplus T^\Lambda, \Zn{2})$. \hl{This example appears earlier in the literature, for example, in Ref.~\cite{Wang2023}.}

\begin{theorem}[Poly-distance family with gate at level $\Lambda$]\label{thm:codeblockstoricgates}
    Consider $\Lambda$ copies of a $\Lambda$-dimensional hypercubic tori $X=T^\Lambda\uplus \dots \uplus T^\Lambda$ with side lengths $L$.
    By placing qubits on the edges, we obtain a quantum code corresponding to the complex $C(T^\Lambda \uplus ... \uplus T^\Lambda, \Zn{2})$ that encodes $\Lambda^2$ logical qubits into $\Lambda^2L^\Lambda$ physical qubits with distance $L$. Then:
    \begin{enumerate}
        \item One can define the copy-cup unitary $U_{\cup, \Lambda}$ as shown above  that implements the logical operator
    $$\prod_{\pi\in S_\Lambda}\overline{\op{C}^{\Lambda-1}\!\op{Z}}_{\left[\gamma_1^{(\pi(1))}\right],\left[\gamma_2^{(\pi(2))}\right],\dots,\left[\gamma_\Lambda^{(\pi(\Lambda))}\right]},$$
    where $\left[\gamma_i^{(j)}\right]$ denotes the $j^\text{th}$ logical in the $i^\text{th}$ copy. 
    This logical operator is at the $\Lambda$-th level in the Clifford hierarchy.
    \item The unitary $U_{\cup, \Lambda}$ can be implemented by the circuit $\gate_{\cup, \Lambda}$ whose depth depends on~$\Lambda$ only.
    \end{enumerate}
      Once $\Lambda$ is fixed, we obtain a family of quantum LDPC codes indexed by $L$ with distance $\op{poly}(L)$, i.e.\ the family has parameters $[[\Lambda^2 L^\Lambda, \Lambda^2, \Theta(L)]]_{L}$.
      
   The circuit depth of the copy-cup circuit $\gate_{\cup, \Lambda}$ is constant with respect to the code size and, thus, the realized gate is trivially fault-tolerant.
\end{theorem}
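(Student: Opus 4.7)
The plan is to apply the general machinery of \Cref{thm:operationyieldsgate} and \Cref{thm:explicitdescriptionofgate} to the specific case where the underlying complex is the $\Lambda$-dimensional torus, with the cup product and integral coming from its CW/simplicial structure. All geometric data is already available, so the work reduces to identifying the logical action explicitly.

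First, I would set up the complex. Viewing $T^\Lambda = S^1 \times \cdots \times S^1$ as the $\Lambda$-fold tensor product (in the sense of \Cref{ex:tensor_product_torus}) of a cellulated circle of length $L$, the associated cochain complex $C(T^\Lambda,\Zn{2})$ inherits both the cup product from \Cref{sec:tensor_products_cup} and an $\Lambda$-dimensional integral from \Cref{sec:integrals} via the natural orientation. The toric code parameters $[[\Lambda L^\Lambda, \Lambda, L]]$ per copy are standard, since $H^1(T^\Lambda,\Zn{2})\cong\Zn{2}^\Lambda$ with $\Lambda$ generators $[e^{(1)}],\ldots,[e^{(\Lambda)}]$ (one loop around each $S^1$ factor) and the distance equals $L$. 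Taking $\Lambda$ disjoint copies yields $[[\Lambda^2 L^\Lambda,\Lambda^2,\Theta(L)]]$, and the operation $\Psi_{\cup,\Lambda}$ of \Cref{def:quantumcopgate} is a cohomology invariant by \Cref{lemma:sufficient_cup} (the hypotheses are automatic for simplicial complexes, where the honest Leibniz rule holds).

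Next I would compute the logical action. By \Cref{thm:explicitdescriptionofgate} it suffices to evaluate $\Psi_{\cup,\Lambda}\!\left(\gamma_1^{(j_1)},\ldots,\gamma_\Lambda^{(j_\Lambda)}\right) = \int e^{(j_1)}\cup\cdots\cup e^{(j_\Lambda)}$ on all tuples of basis cocycles, where the upper index denotes the chosen generator of $H^1$ in the corresponding copy of the torus. The key geometric input is that the top cohomology $H^\Lambda(T^\Lambda,\Zn{2})$ is one-dimensional, generated by the volume form $e^{(1)}\cup\cdots\cup e^{(\Lambda)}$, and that on a product of circles the cup product of one-cocycles from distinct factors is the volume form if the factors are all distinct, and zero otherwise. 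Concretely, $\int e^{(j_1)}\cup\cdots\cup e^{(j_\Lambda)} = 1$ modulo $2$ iff $(j_1,\ldots,j_\Lambda)$ is a permutation of $(1,\ldots,\Lambda)$, and $0$ otherwise. Plugging this into \Cref{thm:explicitdescriptionofgate} and using multilinearity produces exactly the claimed logical gate, summed over $\pi\in S_\Lambda$.

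The placement in the Clifford hierarchy is then immediate from \Cref{th:guaranteed-level}: the map $\Psi_{\cup,\Lambda}$ is non-zero on logicals (e.g.\ the identity permutation contributes), so the logical action sits at the $\Lambda$-th level, and no cancellation can reduce it since the permutations index distinct logical tensor factors. For the circuit depth, I would argue that $\gate_{\cup,\Lambda}$ is a product of $C^{\Lambda-1}Z$ gates indexed by $\Lambda$-tuples of edges $(x_1,\ldots,x_\Lambda)$, one edge per copy, such that $\int x_1\cup\cdots\cup x_\Lambda\neq 0$. By the cubical cup product formula any such tuple must be supported on a single elementary $\Lambda$-cube with edges pointing in $\Lambda$ distinct directions, so each physical edge participates in only $O(1)$ nontrivial gates (the constant depending on $\Lambda$ but not on $L$). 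A standard graph-coloring argument on the interaction hypergraph then parallelizes $\gate_{\cup,\Lambda}$ into $O(1)$ layers, giving constant depth and hence trivial fault-tolerance.

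The only step requiring genuine care is verifying the cup product count in the second paragraph --- that is, checking that the combinatorial cup product on the cubical complex $T^\Lambda$ (in the conventions of \Cref{ex:tensor_product_torus} and \cite{Chen2023higher}) really evaluates to $1$ on a permutation tuple when integrated against the fundamental class, with no extra sign or multiplicity issues over $\Zn{2}$. This is a direct computation using \Cref{eqn:tensor_products_cup} inductively on the number of $S^1$ factors, but it is the one place where the geometric intuition has to be turned into a sign-precise calculation.
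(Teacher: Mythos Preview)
Your proposal is correct and follows essentially the same route as the paper: invoke \Cref{thm:operationyieldsgate} and \Cref{thm:explicitdescriptionofgate} for the logical action, then use a graph-coloring argument on the qubit interaction graph to bound the circuit depth by a function of $\Lambda$ alone. You actually supply more detail than the paper's terse proof --- in particular the explicit computation that $\int e^{(j_1)}\cup\cdots\cup e^{(j_\Lambda)}=1$ iff $(j_1,\ldots,j_\Lambda)$ is a permutation, which the paper leaves implicit in the phrase ``follows from \Cref{thm:explicitdescriptionofgate}.''
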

\begin{proof}
    The explicit description of $U_{\cup, \Lambda}$ as a logical operator follows from \Cref{thm:explicitdescriptionofgate}. The circuit $\gate_{\cup, \Lambda}$ implements $U_{\cup, \Lambda}$ by \Cref{thm:operationyieldsgate}.

    To see that the circuit $\gate_{\cup, \Lambda}$ is constant depth, consider the following graph. Vertices are the qubits and they are connected it they in the same $\gateZ$-gate. 
    The maximum degree of this graph is bounded by some function of $\Lambda$, say $f(\Lambda)$. 
    Choose an $(f(\Lambda)+1)$-coloring of the graph. 
    Note that the number of gates a qubit is involved in is a function of $\Lambda$, say $g(\Lambda)$. 
    Now apply all $\gateZ$-gates involving all qubits of the first color. 
    By construction this can be done by a circuit of depth $g(\Lambda)$. 
    Repeat the process for the remaining colors while excluding gates that have been applied already.
\end{proof}

Using hyperbolic surfaces instead of tori, we obtain a \hl{Poly-rate} family with logarithmic distance as follows.

\begin{theorem}[\hl{Poly-rate} family with gate at level $\Lambda$] \label{th:constantrate}
    Let $\{ H_g \}_g$ be a family of hyperbolic surfaces of genus $g$ with systole growing as $\Theta(\log g)$, each cellulated $\Theta(g)$ with number of cells and incidence between cells bounded by a constant (for example, see Refs.~\cite{macaj2008injectivity,7456305}).

    For $\Lambda$ even, let $X=(H_g)^{\Lambda/2}\uplus \dots \uplus (H_g)^{\Lambda/2}$ be $\Lambda$ copies of the $\Lambda/2$-fold Cartesian product $(H_g)^{\Lambda/2}$.
    By placing qubits on the edges of $X$, we obtain a quantum code with $n=\Theta(\Lambda^2g)$ physical qubits and \hl{$\Theta( n^{\frac{2}{\Lambda}})$} logical qubits and distance $\op{polylog}(n)$. Then:
    \begin{enumerate}
        \item  The copy-cup unitary $U_{\cup, \Lambda}$ on $X$ implements a logical operator on the $\Lambda^\text{th}$ level in the Clifford hierarchy.
        \item This unitary can be implemented with a circuit $\gate_{\cup, \Lambda}$ which has of depth depending only on~$\Lambda$.
    \end{enumerate}
    For a fixed $\Lambda$, we obtain a quantum LDPC code family indexed by $g$ with a constant rate.
    Furthermore, the circuit depth of the copy-cup circuit $\gate_{\cup, \Lambda}$ is constant with respect to the code size.  
\end{theorem}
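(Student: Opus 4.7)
The plan is to mirror the proof of \Cref{thm:codeblockstoricgates}, replacing the $\Lambda$-dimensional torus by the $\Lambda/2$-fold Cartesian product $H_g^{\Lambda/2}$ of a hyperbolic surface. The point of departure is identical: since each $H_g$ is a 2-dimensional orientable CW complex, its cellular cochain complex is a dg-algebra by \Cref{ex:topologicaldgalgebras}, and \Cref{sec:integrals} endows it with a 2-dimensional integral via the orientation. The tensor-product machinery of \Cref{sec:tensor_products_cup} and the corresponding integral on tensor products then equip $H_g^{\Lambda/2}$ with a cup product and a $\Lambda$-dimensional integral, making the operation $\Psi_{\cup,\Lambda}$ of \Cref{def:quantumcopgate} on the $\Lambda$-fold disjoint union $X$ well-defined and, by construction, a cohomology invariant.

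First I would verify that $\Psi_{\cup,\Lambda}$ is nonzero on cohomology, so that \Cref{th:guaranteed-level} forces the logical action to sit at the $\Lambda$-th level of the Clifford hierarchy. By the Künneth formula, $H^\bullet(H_g^{\Lambda/2},\Zn{2})$ is the tensor product of $\Lambda/2$ copies of $H^\bullet(H_g,\Zn{2})$. Picking a Poincaré-dual pair $\alpha,\beta\in H^1(H_g,\Zn{2})$ with $\alpha\cup\beta$ a generator of $H^2(H_g,\Zn{2})$, the iterated external product $\alpha\times\beta\times\cdots\times\alpha\times\beta$ is a generator of $H^\Lambda(H_g^{\Lambda/2},\Zn{2})$. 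Distributing these $\Lambda$ degree-1 factors one per copy in $X$ and integrating produces a nonzero value, exhibiting a $\Lambda$-tuple of logical representatives whose copy-cup evaluation equals $1$. \Cref{thm:operationyieldsgate} then delivers statement (1) of the theorem, and \Cref{thm:explicitdescriptionofgate} identifies the logical action in terms of multi-controlled $Z$ gates on these cohomology representatives.

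The parameters are then extracted directly. Since the cellulation of $H_g$ has $\Theta(g)$ cells of constant-bounded incidence and systole $\Theta(\log g)$, the product cellulation of $H_g^{\Lambda/2}$ has, for fixed $\Lambda$, cell counts polynomial in $g$ with constant local geometry; the $\Lambda$-fold disjoint union then yields the claimed count of physical qubits and a number of logical qubits $k=\Theta(n)$ via Künneth, giving a constant-rate family once $\Lambda$ is fixed. The distance bound $d=\op{polylog}(n)$ is obtained from the standard product-systole argument: a minimum-weight nontrivial 1-cocycle on $H_g^{\Lambda/2}$ can be ``sliced'' along the Künneth decomposition into a non-trivial cocycle on a single factor $H_g$, whose weight is lower-bounded by the 1-systole $\Theta(\log g)=\Theta(\log n)$, raised to a power depending on how many factors survive in the slicing.

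Finally, for statement (2), the constant-depth claim for $\gateZ_{\cup,\Lambda}$ follows essentially verbatim from the coloring argument in the proof of \Cref{thm:codeblockstoricgates}: the support of each multi-controlled $Z$ gate is confined to a single product cell, so each physical qubit enters at most a function-of-$\Lambda$ many such gates, and the resulting interaction hypergraph admits a schedule with chromatic number depending only on $\Lambda$. The main obstacle I expect is the distance bound: while the cohomology-level Künneth argument is clean, controlling the minimum cocycle weight requires care because product complexes can admit nontrivial cocycles that are not pure tensors, and pinning down the precise exponent $\alpha$ in the $\log^\alpha n$ scaling requires a quantitative product-systole inequality for $H_g^{\Lambda/2}$ rather than a mere Künneth identity on cohomology.
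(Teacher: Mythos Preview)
Your proposal is correct and takes essentially the same approach as the paper: the paper's entire proof reads ``Analogous to \Cref{thm:codeblockstoricgates},'' and you have faithfully carried out that analogy, supplying considerably more detail than the paper itself (the K\"unneth argument for non-triviality, the parameter bookkeeping, and the coloring argument for constant depth).
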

\begin{proof}
    Analogous to \Cref{thm:codeblockstoricgates}.
\end{proof}
We note that we can also interpolate the parameters of \Cref{thm:codeblockstoricgates,th:constantrate} by fixing some functional dependence between $L$ and $g$ and employing the construction of~\cite{semi-hyp}.
In particular, choosing $L\sim g$ gives a family with polynomially vanishing rate and relative distance.

Lastly, we consider tensor products of classical codes.

\begin{theorem}[Conditional; Poly-distance and constant-rate family with gate at level~$\Lambda$] \label{th:best_cup_gate}
   Let $\{C({}_{\ell}X,\Zn{2})\}_\ell$ be a family of classical codes with constant rate and with distance and dual distance in $\op{poly}(\ell).$ Define the property:
    \begin{itemize}
        \item[$(\star)$] For each $\ell$, $C({}_{\ell}X,\Zn{2})$ is equipped with pre-orientation satisfying the conditions of \cref{prop:integrated-leibniz-conditions}  In addition, assume that the integrated $\Lambda$-fold cup product map is nontrivial on each classical complex.
    \end{itemize}
   
    Consider the quantum code given by $\Lambda$ copies of the $\Lambda$-fold tensor product $C({}_{\ell}X,\Zn{2})^{\otimes \Lambda}$. Conditional on property $(\star)$,
    \begin{enumerate}
        \item The copy-cup unitary $U_{\cup, \Lambda}$ implements a logical operator on the $\Lambda^\text{th}$ level of the Clifford hierarchy on this code.
        \item  This unitary can be implemented with a circuit $\gate_{\cup, \Lambda}$ which has of depth depending only on~$\Lambda$.
    \end{enumerate}
    Once $\Lambda$ is fixed, we obtain a quantum LDPC code family indexed by $\ell$ of constant rate and poly-distance, see \cite{product_code_distance}.
    Furthermore, the circuit depth of the copy-cup circuit $\gate_{\cup, \Lambda}$ is constant with respect to the code size and thus trivially fault-tolerant.
\end{theorem}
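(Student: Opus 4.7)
The plan is to combine, in sequence, the three ingredients built up earlier in the paper: cohomology invariance of $\Psi_{\cup,\Lambda}$ on the tensor product code, realization of this invariant by a constant-depth circuit, and non-triviality of the induced logical operator at the $\Lambda$-th level of the Clifford hierarchy. Concretely, since each classical complex $C({}_\ell X,\Zn{2})$ has by $(\star)$ a pre-orientation satisfying \cref{prop:integrated-leibniz-conditions}, its $\Lambda$-fold cup product obeys the integrated Leibniz rule of \cref{def:int-Leibniz}. The tensor product $\mathcal{C}' = C({}_\ell X,\Zn{2})^{\otimes \Lambda}$ inherits a cup product via \Cref{eqn:tensor_products_cup} and a $\Lambda$-dimensional integral via \Cref{eqn:integral_tensor_product}, so \cref{lemma:sufficient_cup} yields that $\Psi_{\cup,\Lambda}$ is a cohomology invariant on $\mathcal{C}'$. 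Applying \cref{thm:operationyieldsgate} to $\Lambda$ copies of $\mathcal{C}'$ then gives that $U_{\cup,\Lambda}$ preserves the codespace and equals the circuit $\gate_{\cup,\Lambda}$, whose depth is constant in the code size by the coloring argument already used in \cref{thm:codeblockstoricgates}: locality of the cup product ensures that each physical qubit lies in only $f(\Lambda)$ many multi-controlled $Z$ gates, so the circuit splits into $g(\Lambda)$ parallel layers.

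The main technical step is to upgrade this to a logical operator genuinely at level $\Lambda$ of the Clifford hierarchy via \cref{th:guaranteed-level}, which requires the induced map $H^1(\mathcal{C}')^{\times \Lambda} \to \Zn{2}$ to be non-zero. I would exhibit non-zero cohomology classes by a Künneth-style construction: for each $j \in \{1,\dots,\Lambda\}$, take a 1-cocycle $\underline{\gamma}_j = z^{(1)}_j \otimes \dots \otimes z^{(\Lambda)}_j$ in $\mathcal{C}'$ where $z^{(i)}_j$ has degree $\delta_{i,j}$ in the $i$-th classical factor, so that the total degree is one. By multiplicativity of the cup product on tensor products, $\int_\Lambda \underline{\gamma}_1 \cup \dots \cup \underline{\gamma}_\Lambda$ factorizes, up to a sign, as $\prod_i \int_1 z^{(i)}_1 \cup \dots \cup z^{(i)}_\Lambda$ --- a product, indexed by tensor positions $i$, of classical $\Lambda$-fold integrated cup products each with the degree-$1$ argument placed at the prescribed position $i$. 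The non-triviality part of $(\star)$ is then invoked to choose representatives so that every factor is non-zero.

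The hard part will be precisely this non-triviality verification. One must interpret $(\star)$ strongly enough that the classical integrated $\Lambda$-fold cup product is non-zero with the degree-$1$ argument placed at every slot in turn, not merely at a single fixed slot; if $(\star)$ supplies only the weaker form, one instead has to allow each $\underline{\gamma}_j$ to be a sum over positions of the degree-$1$ factor and show that the resulting sum over permutations $\sigma \in S_\Lambda$ does not cancel. Once non-triviality is secured, the remaining assertions are standard: the constant rate of $\mathcal{C}'$ follows from the Künneth decomposition of $H^1(\mathcal{C}')$, the polynomial distance from the tensor-product distance bound of \cite{product_code_distance}, and taking $\Lambda$ disjoint copies preserves these asymptotics up to a factor of $\Lambda$. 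The LDPC property is inherited from the classical factors, and the constant-depth circuit $\gate_{\cup,\Lambda}$ is then trivially fault-tolerant by bounded error spread.
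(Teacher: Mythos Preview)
The paper gives no explicit proof for this theorem; it is stated as a direct parallel to \Cref{thm:codeblockstoricgates} and \Cref{th:constantrate}, whose proofs it implicitly inherits. Your proposal is correct and follows precisely the template the paper uses for those two results: invoke \cref{lemma:sufficient_cup} for the cohomology invariance of $\Psi_{\cup,\Lambda}$ on the tensor product, apply \cref{thm:operationyieldsgate} to obtain the logical gate and its circuit realization, use the coloring argument from the proof of \cref{thm:codeblockstoricgates} for constant depth, and appeal to \cref{th:guaranteed-level} for the Clifford hierarchy level.

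Your discussion of the non-triviality step actually goes beyond what the paper supplies. The paper's assumption $(\star)$ is stated loosely and does not distinguish between the weak form (some placement of the degree-$1$ argument yields a non-zero classical integral) and the strong form (every placement does). You correctly observe that the factorized K\"unneth argument needs the strong form, and you sketch a fallback via summing over permutations if only the weak form is available. The paper does not engage with this subtlety at all; it simply asserts the conclusion. So your treatment is, if anything, more careful than the original on this point.
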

\hl{We conjecture that condition $(\star)$ can be satisfied. }
One should be able to also use iterated balanced products of Abelian codes as long as the group action preserves pre-orientation and the integral. For example, one can consider the balanced products of Sipser--Spielman codes.

\subsection{Toy examples}

Let us also provide two intuition-building examples of an explicit circuit realizing $\Lambda = 2$-fold and $\Lambda=3$-fold cup product gates on two copies of 2D toric code and three copies of 3D toric code, respectively. 

\subsubsection{$\op{CZ}$ gate in 2D toric codes on a square lattice}

We provide a detailed example of \Cref{thm:codeblockstoricgates} for $\Lambda=2$.

 Consider the chain complex associated with the repetition code $$C^0(S^1,\Zn{2})\stackrel{\delta}{\to} C^1(S^1,\Zn{2})$$ where the circle $S^1$ is equipped with a simplicial structure and hence with a cup product, making it a $\op{dg}$-algebra.
In \Cref{sec:tensor_products_cup}, we discussed how tensor products of $\op{dg}$-algebras obtain $\op{dg}$-algebra structure via \Cref{eqn:tensor_products_cup}.
These are particularly simple in our case, where the product is a square grid on a torus, see \Cref{ex:tensor_product_torus}.
In particular, we have that two edges have a non-zero cup product if and only if one is ``horizontal'' and the other ``vertical'' and they have consecutive orientation:
\begin{center}
    \begin{tikzpicture}[scale=0.7]
        \begin{scope}[decoration={markings,mark=at position 0.5 with {\arrow{>}}}] 
            \draw[postaction={decorate}] (1,0) -- (3,0);
            \draw[postaction={decorate}] (0,1) -- (0,3);
            \draw[postaction={decorate}] (1,1) -- (1,3);
            \draw[postaction={decorate}] (1,1) -- (3,1);
            \draw[postaction={decorate}] (1,3) -- (3,3);
            \draw[postaction={decorate}] (3,1) -- (3,3);
        \end{scope}
        \node[scale=1.5] at (0,0) {$\otimes$};
        \draw[red,thick] (1,2.3) to[out=0,in=-90] (1.7,3);
        \draw[red,thick] (2.3,1) to[out=90,in=-180] (3,1.7);
    \end{tikzpicture}
\end{center}

The associated copy-cup circuit $\gateZ_{\cup, \Lambda}$ for two copies of the 2D toric code consists of $\op{CZ}$ gates (shown in red below) that are applied between an edge/qubit of one copy (shown in black) to the paired edge/qubit of the other copy (shown in blue).
\begin{center}
    \begin{tikzpicture}
    \def\L{5}
    \pgfmathsetmacro{\Lm}{\L - 1}
    \def\shif{.2}
    \clip (.5,.5) rectangle (\Lm+.5,\Lm+.5);
    \draw[step=1cm, black, line width=0.15mm] (0, 0) grid (\L, \L);
    \foreach \x in {0,...,\Lm} {
        \foreach \y in {0,...,\Lm} {
            \draw[red, line width = 0.25 mm] ({\x+.8},\y) to[out=90,in=-180] ({\x+1+\shif},{\y+.4}) ;
        }
    }
    \foreach \x in {0,...,\Lm} {
        \foreach \y in {0,...,\Lm} {
            \draw[red,  line width = 0.3 mm] (\x,\y+.8) to[out=0,in=-90] ({\x+.4},{\y+1+\shif}) ;
        }
    }
    \begin{scope}[xshift=\shif cm, yshift=\shif cm]
        \draw[step=1cm, blue, line width=0.2mm] (0, 0) grid (\L, \L);
    \end{scope}
\end{tikzpicture}
\end{center}
It is easy to check that $X$-checks of either code, which are supported on the coboundaries of vertices, are mapped onto $Z$-checks on the other copy, which are supported on the boundaries of faces.
Similarly, an $X$-logical on either copy, supported on a representative of a cohomology class, is mapped onto a $Z$-logical on the other copy.
We therefore conclude that the circuit indeed implements the logical operator $\overline{\op{CZ}}_{[\gamma_1^{(1)}],[\gamma_2^{(2)}]}\overline{\op{CZ}}_{[\gamma_1^{(2)}],[\gamma_2^{(1)}]}$.
This gate has appeared inexplicitly in terms of pumping of topological defects in~\cite{Yoshida16,Potter17}, and later explicitly in~\cite{Barkeshli23}.

\subsubsection{$\op{CZ}$ in the anisotropic lineon code}
Consider the anisotropic lineon code (also called the 2-foliated fracton code) introduced in \cite{ShirleySlagleChen2019,ShirleySlagleChen2019_2}, which is a geometrically local code in three dimensions. Qubits are placed on the $z$ edge and the $xy$ plaquettes of the cubic lattice. The $X$ and $Z$ stabilizers are shown as follows
\begin{center}
    \includegraphics[]{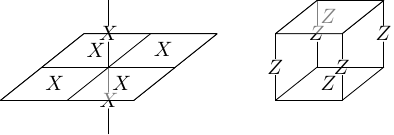}    
\end{center}

The above code can be constructed from a tensor product of the repetition code and a classical code called the plaquette Ising (Xu-Moore) code \cite{XuMoore2004}, constructed from a square lattice with periodic boundary conditions. The bits (black dots) correspond to vertices and each check (white dot) is a plaquette that checks the four surrounding vertices (note that although the code is presented on a 2D ``lattice", it is a classical code corresponding to a 1D chain complex). On a square lattice of size $L\times L$, the plaquette Ising code has parameters $[L^2,2L-1,L]$. The logicals correspond to flipping bits on each of the $L$ rows or the $L$ columns separately. However, flipping all rows and columns simultaneously leaves all bits unflipped, which result in $2L -1$ independent logicals.

The plaquette Ising code can be assigned a pre-orientation by denoting one outgoing/incoming bit (black dot) for each check (white dot) with an outgoing/incoming arrow. Specifically, $\delta _{\op{in}}$ is the bit to the southwest, $\delta _{\op{out}}$ is the bit to the northeast, and $\delta _{\op{free}}$ are the bits to the northwest and southeast.
\begin{center}
    \includegraphics[scale=0.5]{}
\end{center}
The pre-orientation satisfies the non-overlapping bits condition \cref{def:non-overlapping} and the integrated Leibniz rule \cref{def:int-Leibniz} for $\Lambda=2$.

We now consider the quantum code which is constructed from the tensor product. Using the pre-orientation, the cup product is non-zero for the following pairs of qubits inside each cube
\begin{center}
    \includegraphics[]{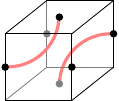}    
\end{center}
The associated copy-cup circuit $\gateZ_{\cup, \Lambda}$ of two copies of this quantum code consists of the following CZ gates applied to each pair of qubits.
\begin{center}
    \includegraphics[]{}    
\end{center}
The resulting logical action is non-trivial, and is given by a pairing of $Z$ logicals between the two copies such that the conjugate $\op{X}$-logical in the second copy has a similar support to the that of the $\op{Z}$-logical in the first copy. Thus at the logical level, it acts as a product of $4L-2$ logical $\op{CZ}$'s.

\subsubsection{$\op{CCZ}$ gate in 3D toric codes on a cubic lattice}
Similarly, we may consider the case $\Lambda=3$ by taking a triple tensor product of the repetition code $$\left(C^0(S^1,\Zn{2})\stackrel{\delta}{\to} C^1(S^1,\Zn{2})\right)^{\otimes 3},$$ which gives the 3D toric code on a cubic lattice.
Again, we obtain a $\op{dg}$-algebra structure with a cup product.
In particular, the cup-product of three edges $e\cup e' \cup e''$ is nonzero if and only if they point in each of the $x,y,z$-directions and form an oriented path along the induced orientation (such as the blue and orange paths highlighted below). There is a total of 6 such paths in a cube. 
\begin{center}
    \begin{tikzpicture}[scale=0.6]
        \def\L{5}
        \coordinate (O) at (0,0,0);
        \coordinate (A) at (0,\L,0);
        \coordinate (B) at (0,\L,\L);
        \coordinate (C) at (0,0,\L);
        \coordinate (D) at (\L,0,0);
        \coordinate (E) at (\L,\L,0);
        \coordinate (F) at (\L,\L,\L);
        \coordinate (G) at (\L,0,\L);

        \begin{scope}[decoration={markings,mark=at position 0.5 with {\arrow{>}}}] 
            \draw[postaction={decorate}] (C) -- (O);
            \draw[postaction={decorate}] (C) -- (G);
            \draw[postaction={decorate}] (G) -- (D);
            \draw[postaction={decorate}] (O) -- (A);
            \draw[postaction={decorate}] (A) -- (E);
            \draw[postaction={decorate}] (D) -- (E);
            \draw[postaction={decorate}] (C) -- (B);
            \draw[postaction={decorate}] (F) -- (E);
            \draw[postaction={decorate}] (O) -- (D);
            \draw[postaction={decorate}] (B) -- (A);
            \draw[postaction={decorate}] (B) -- (F);
            \draw[postaction={decorate}] (G) -- (F);
        \end{scope}

        \node[scale=1.5] at (-1.4,-1,\L+1) {$\otimes$};
        \begin{scope}[decoration={markings,mark=at position 0.5 with {\arrow{>}}}] 
            \draw[postaction={decorate}] (-1.4,-1,\L) -- (-1.4,-1,0);
            \draw[postaction={decorate}] (-.4,-1,\L+1) -- (\L-.4,-1,\L+1);
            \draw[postaction={decorate}] (-1.4,0,\L+1) -- (-1.4,\L,\L+1);
        \end{scope}

        \draw[line width = 1.5 mm, orange, opacity=.4] (C) -- (G) -- (D) -- (E);
        \draw[line width = 1.5 mm, blue, opacity=.4] (C) -- (B) -- (F) -- (E);

    \end{tikzpicture}
\end{center}
For each cube, the copy-cup circuit $\gateZ_{\cup, \Lambda}$ applies a $\op{CCZ}$-gate between qubits each copy  along \ each of the six paths. 
We can pick a logical basis for each copy by taking a $xy$-, $yz$- and $xz$-plane and defining the support as all edges having one vertex on that plane and the other vertex in the positive $z$-, $x$- and $y$-direction, respectively.
Now we can observe that these supports only give rise to a non-zero cup product at a single cube that lies at the intersection of all three planes.
The implemented logical operator is therefore
\begin{align*}
    & \overline{\op{CCZ}}_{[\gamma_1^{(1)}],[\gamma_2^{(2)}],[\gamma_3^{(3)}]}\, \overline{\op{CCZ}}_{[\gamma_1^{(1)}],[\gamma_2^{(3)}],[\gamma_3^{(2)}]}\, \overline{\op{CCZ}}_{[\gamma_1^{(2)}],[\gamma_2^{(1)}],[\gamma_3^{(3)}]}\, \\
\times&\overline{\op{CCZ}}_{[\gamma_1^{(2)}],[\gamma_2^{(3)}],[\gamma_3^{(1)}]}\, \overline{\op{CCZ}}_{[\gamma_1^{(3)}],[\gamma_2^{(1)}],[\gamma_3^{(2)}]}\, \overline{\op{CCZ}}_{[\gamma_1^{(3)}],[\gamma_2^{(2)}],[\gamma_3^{(1)}]}
\end{align*}
where the indices refer to the logicals in each copy.
Note that they are in one-to-one correspondence to the paths in the cube.

Similarly to the two-dimensional case, one can check that $X$-checks of any of the codes are mapped onto  $CZ$-type operators between the two other copies that act trivially on the codespace.
An $X$-logical on any of the copies is mapped onto a $CZ$ logical gate between the other two copies.
This circuit appeared implicitly in Refs.~\cite{Barkeshli23,Zhu22fractal,Zhu2023,song2024magic} and explicitly in Refs.~\cite{Chen2023higher,Wang2023}

\section*{Acknowledgments}
We thank Shankar Balasubramanian, Andi Bauer, Benjamin Brown, Yu-An Chen, Andrew Cross, Chris Pattison, Brenden Roberts, Lev Spodyneiko and Ryan Tiew for helpful discussions, as well as an anonymous referee for pointing out a mistake in an earlier version of this paper.
NPB and NT also thank the Centro de Ciencias de Benasque Pedro Pascual, where part of this work was performed.

\section*{Funding and Competing interests}
The work of MD and NT was supported by the Walter Burke Institute for Theoretical Physics at Caltech. 
This work was done in part while NPB and MD were visiting the Simons Institute for the Theory of Computing, supported by DOE QSA grant \#FP00010905.
We declare no conflicts of interest.

\section*{Data Availability}
No datasets were generated or analysed during the current study.

\bibliography{bibliography.bib}
\end{document}